\newcommand{\B}{{\mathcal{B}}}
\newcommand{\bM}{\mathbb{M}}
\newcommand{\bS}{\mathbb{S}}
\newcommand{\bL}{\mathbb{L}}
\newcommand{\sS}{\mathcal{S}}
\newcommand{\bD}{\rm\bf D}
\newcommand{\Power}{2^}
\newcommand{\J}{\mathcal{J}}
\newcommand{\N}{{\mathcal{N}}}
\newcommand{\C}{{\mathcal{C}}}
\newcommand{\U}{\mathbb{U}}
\newcommand{\bU}{\mathbb{U}}
\newcommand{\LL}{{\mathcal{L}}}
\newcommand{\system}{{\rm\bf System}}
\newcommand{\phenome}{{\rm\bf Phenome}}
\newcommand{\<}{\langle}
\renewcommand{\>}{\rangle}
\newcommand{\sub}{{\rm\bf Sub}}
\newcommand{\ra}{\rightarrow}
\DeclareMathOperator{\fix}{fix}
\newcommand{\merge}{\wedge}
\DeclareMathOperator{\free}{Free}
\newtheorem{Thm}{Theorem}[subsection]
\newtheorem{Def}[Thm]{Definition}
\newtheorem{Pro}[Thm]{Proposition}
\newtheorem{Cor}[Thm]{Corollary}
\renewcommand{\P}{{\mathcal{P}}}
\renewcommand{\L}{{\mathcal{L}}}
\DeclareMathOperator{\colim}{colim}
\DeclareMathOperator{\agg}{agg}
\DeclareMathOperator{\id}{id}
\newcommand{\I}{{\mathcal{I}}}
\newcommand{\meet}{\wedge}
\newcommand{\join}{\vee}
\DeclareMathOperator{\eval}{eval}
\newcommand{\bZ}{\mathbb{Z}}
\title{On the mathematical structure of\\ cascade effects and emergent phenomena}
\author{Elie M. Adam\thanks{Both authors are with the Laboratory for Information and Decision Systems (LIDS) at the Massachusetts Institute of Technology (MIT).  Emails: \texttt{eadam@mit.edu} and \texttt{dahleh@mit.edu}} \and Munther A. Dahleh}
\date{}
\begin{document}

\maketitle

\begin{abstract}
  We argue that the mathematical structure, enabling certain cascading and emergent phenomena to intuitively emerge, coincides with Galois connections.  We introduce the notion of \emph{generative effects} to formally capture such phenomena.  We establish that these effects arise, via a notion of a \emph{veil}, from either concealing mechanisms in a system or forgetting characteristics from it.  The goal of the work is to initiate a mathematical base that enables us to further study such phenomena.  In particular, generative effects can be further linked to a certain loss of exactness.  Homological algebra, and related algebraic methods, may then be used to characterize the effects.
\end{abstract}

\section{Introduction.}
The principal thrust of the work aims at understanding and uncovering interaction-related effects that emerge from the interaction of several systems.  The typical situations of interest consist of a multitude of systems, coming together and interacting to produce a behavior that would not have emerged without interaction.  Such situations are fundamental and appear in countless settings exhibiting cascading phenomena and emergent behavior.

Our work thus begins with studying cascade effects. An intuitive instance of such effects is embodied by a trail of falling dominoes.  The fall of a domino triggers the fall of its successor in the sequence.  If the first domino falls, then the whole sequence of dominoes collapses by induction. On a more \emph{pressing} end, these effects appear in extinctions in ecosystems, spreads of epidemics, financial crises, power blackouts, cascading failures in infrastructures, propagation of delays and social adoption trends. There is then quite an interest, and an ever-increasing need, to understand such effects.  Many (mathematical) models are proposed to handle such situations.  We are however far from a fundamental understanding of the phenomenon.  Most of the intuition floating around is still intangible.  The term \emph{cascade effects} is itself ill-defined, vague and prone to many interpretation.  Some intuition does however exist to give rise to the term.  What mathematics are then needed to capture this intuition?

 We introduce, in this paper, the notion of \emph{generative effects}, to mathematically capture the essential intuition of cascade effects.  Generative effects may be understood as a broad---though formal---interpretation of the informal cascade effects.  As such, generative effects may be seen to also enclose some situations of emergent behavior, that need not exhibit the \emph{succession} of events exhibited by cascades.

 Generative effects will be seen to arise from certain Galois connections, and are induced by closure and kernel operators.  The paper will introduce, exemplify and study situations of generative effects.  The goal, in this paper, is to initiate a mathematical basis to further study such situations.  This work fits within an extensive effort (carried out in \cite{ADAM:Dissertation}) to characterize these phenomena.  In a more general setting, such effects are made to coincide with a \emph{loss of exactness}.  In such cases, we can then compute---via homological algebraic methods---(co)homology objects from the systems that capture their potential to produce effects, and use those objects to characterize the emerged generative effects.  We briefly return to this point at the end of this introduction.  The general setting, and the path to characterizing the effects through homological algebra, will however not be pursued in this paper.  We refer the reader to \cite{ADAM:Dissertation} for more details on that direction.  This paper is based on Chapter 3 of \cite{ADAM:Dissertation}.

\subsection{The mathematical sketch.}

To capture the intuition for cascade effects we require at least two ingredients.   The first is a notion of interaction or interconnection of systems.  However, such a notion by itself cannot give rise to interaction-related effects.  The interconnection of two systems may only give an interconnected system, and nothing more.  The second ingredient then consists of equipping the theory of interconnection with a notion of effects.  Such effects emerge only once we conceal parts or features from a system, by declaring what is observable from the system.  We thus envision the notion of a \emph{veil}, that covers parts of the system and leaves the rest---termed, the \emph{phenome}---bare and observable.  Generative effects then emerge whenever the phenomes of the separate systems cannot explain the phenome of the combined system.  The two ingredients may be summarized in the diagram:
\begin{equation*}
  \begin{CD}
     \<P,+\> @<\Phi<< \<S,\oplus\>.
  \end{CD}
\end{equation*}
The space $\< S,\oplus \>$ represents a space of systems.  For simplicity, $S$ is a set and $\oplus$ is a binary operator on $S$. The sum $s_1 \oplus s_2$ denotes the interaction of the systems $s_1$ and $s_2$.  The map $\Phi$ represents the veil, sending a system to its observable part, its phenome, in a set $P$. The set $P$ ought to be thought of as a space of simplified systems, and thus gains a notion of interaction through the $+$ binary operator.  Of course, the elements of this diagram are not arbitrary, and the complete mathematical diagram is slightly more intricate.  We will expound it in the next two subsections.  Regardless of the exact details, generative effects will be said to emerge whenever:
\begin{equation*}
  \Phi(s_1 \oplus s_2) \neq \Phi(s_1) + \Phi(s_2)
\end{equation*}
Effects are sustained by the veil whenever the phenome of the combined system cannot be explained by the phenome of the separate systems.  {\bf The inequality we obtain is then fundamental.}   If the inequality is not present, our intuition for cascade effects vanishes.  If the inequality is present, the intuition emerges.

\subsection{A contagion phenomenon.}

To dilute the abstraction and fill in mathematical details, we consider a situation of contagion.  A system will consist of an undirected graph. Each node in the graph can be either infected (active or failed) or healthy (inactive or non-failed), and is assigned an integer $k$ as a threshold.  All nodes are initially healthy.  A node then becomes infected if at least $k$ of its neighbors are infected.  Once a node is infected, it remains infected forever.  In this case, the order of infections does not affect the final set of infected nodes.  The system we described works on arbitrary undirected graphs, but for exposition, we will consider only systems defined on two nodes, as concrete examples.  Specifically, let us consider the systems, $S_1$ and $S_2$:
\begin{center}
\begin{tikzpicture}
  [scale=.5,auto=center,thick]
  \node at (-1.6,0) {$S_1:$};
  \node[circle,draw=black!80!white,scale=1] (n1) at (0,0) {$2$};
  \node at (0.7,-0.7) {$A$};
  \node[circle,draw=black!80!white,scale=1]  (n2) at (3,0)  {$1$};
  \node at (3.7,-0.7) {$B$};
  
  \node at (5.6,0) {$S_2:$};
  \node[circle,draw=black!80!white,scale=1]  (n11) at (7,0) {$0$};
    \node at (7+0.7,-0.7) {$A$};
    \node[circle,draw=black!80!white,scale=1]  (n22) at (10,0)  {$2$};
      \node at (7+3.7,-0.7) {$B$};
  
    \foreach \from/\to in {n1/n2,n11/n22}
    \draw (\from) -- (\to);
\end{tikzpicture}
\end{center}
The system $S_1$ can be summarized as ``{\tt if A is infected, then B becomes infected}'', while the system $S_2$ can be summarized as ``{\tt node $A$ is infected}''.  One can see that when the systems $S_1$ and $S_2$ are made to interact with each other, they will intuitively result in a cascade-like situation where $B$ becomes infected. The question is: how do we formalize such an intuition?  To formalize it we need two ingredients: a notion of interaction and a notion of \emph{effects} equipped on top of interaction.

Having two systems interact, or equivalently interconnecting two systems, consists of keeping the minimum of the thresholds.  The systems $S_1$ and $S_2$ interact to yield the system $S_1 \vee S_2$ given by:
\begin{center}
\begin{tikzpicture}
  [scale=.5,auto=center,thick]
  \node at (-2.4,0) {$S_1\vee S_2:$};
  \node[circle,draw=black!80!white,scale=1] (n1) at (0,0) {$0$};
  \node[circle,draw=black!80!white,scale=1]  (n2) at (3,0)  {$1$};
    \node at (0.7,-0.7) {$A$};
  \node at (3.7,-0.7) {$B$};
    \foreach \from/\to in {n1/n2}
    \draw (\from) -- (\to);
\end{tikzpicture}
\end{center}
Interaction can be understood as combining the description of the two systems. The system $S_1 \vee S_2$ is in fact summarized as:
\begin{center}
  ``{\tt if A is infected, then B becomes infected}\\
  {\tt AND}\\ {\tt node A is infected}''.
\end{center}
The systems merge their update rules. The operator $\vee$ then gives us a notion of interaction through the merging of rules.  This notion by itself however does not account for the cascading phenomenon.

To retrieve the intuition, we need to focus on a particular feature of our systems.  Let $\Phi(S)$ denote the final set of infected nodes, that arise from system $S$.  By focusing on the final set of infected nodes, we have discarded any dynamics in the system that could potentially lead to more infections.  A set of infected nodes can be interpreted as a simplified system where thresholds are either $0$ or $\infty$ depending on whether they are infected ($0$) or not ($\infty$).  They thus inherit a notion of interaction, of merging descriptions, which coincides with set union.  Cascade-like intuition then arises because:
\begin{equation*}
  \Phi(S_1 \vee S_2) \neq \Phi(S_1) \cup \Phi(S_2)
\end{equation*}
Indeed, $\Phi(S_1) \cup \Phi(S_2) = \{A\}$ as $\Phi(S_1)=\{\}$ and $\Phi(S_2)=\{A\}$.  However, $\Phi(S_1 \vee S_2) = \{A,B\}$.  The observable part of the separate systems (i.e., their final set of infected nodes) fails to explain the observable part of the combined system.  The discarded mechanisms interact in the full systems to produce new observable that cannot be accounted for.

More generally, let $\Sigma$ denotes the set of nodes in the graph. If $2^\Sigma$ denotes the set of subsets of $\Sigma$, then the example lends itself to the following picture:
\begin{equation*}
\begin{CD}
  \big\< \Power \Sigma,\cup\big\> @<\Phi<< \big\< \system ,\vee\big\>
\end{CD}
\end{equation*}
The effects are now encoded in the \emph{inexactness} of the map $\Phi$. The map $\Phi$ is not unstructured, and possesses certain properties.  The paper sets forth a thesis that cascade-like effects arise from situations akin to the diagrammatic representation above.  The spaces of systems and phenomes vary, and are preordered sets. The map $\Phi$ tends to admit an adjoint, thus forming a Galois connection between systems and phenomes.  Elements of the spaces interact to yield their join (i.e., their least upper-bound) whenever it exists.  The effects are then sustained whenever $\Phi$ fails to commute with joins, i.e., the interaction operator.

\subsection{Summary.}

Generative effects are seen to arise from the pattern:
\begin{equation*}
\begin{CD}
  \big\< \phenome,\vee\big\> @<\Phi<< \big\< \system ,\vee\big\>.
\end{CD}
\end{equation*}
The space $\big\< \system ,\vee\big\>$ (resp. $\big\< \phenome,\vee\big\>$) is a preordered set of systems (resp. of phenomes) where every pair of systems (resp. of phenomes) interact to yield their least upper-bound, via $\vee$, the join operator. The map $\Phi$ acts as the described veil, partially concealing the systems and leaving the phenome bare. It admits an adjoint, thus forming a Galois connection. Its adjoint recovers from a phenome the simplest system explaining it.  Generative effects are said to be sustained whenever:
\begin{equation*}
  \Phi(S \vee S') \neq \Phi(S) \vee \Phi(S').
\end{equation*}
They are sustained whenever the phenome of the combined system cannot be explained by the phenomes of the separate systems.

\subsection{Outline of the paper}
The paper expounds this formalization.  It justifies the choices made, and equips the mathematics with the needed intuition.

We continue to elucidate the example on contagion in Section 3. We introduce the notion of  a veil and generative effects in Section 4.  Those can be seen to emerge from either concealing mechanisms in the systems (developed in Section 5) or forgetting characteristics (developed in Section 6).  Indeed, we establish in Section 7 that every veil can be factored into an instance of these two cases, and discuss its relation to Galois connections. We introduce, in Section 8, the notion of a dynamical veil to capture temporal aspects in cascade effects.  We finally develop techniques of factorization and lifts, in Section 9, to retrieve veils from \emph{non-veils}, and end with some remarks in Section 10.

\subsection{The goal of this line of research}
A fuller, more englobing, development of the concepts can be performed via the use of categories.  We however restrict to preordered sets to not introduce unnecessary complications for the readers.  Preorders can be trivially regarded as categories, and are thus a special case of the general concept.  Nevertheless, most of the analysis provided in this paper can be extended out to the more general case. For more details on the general case, we refer the reader to \cite{ADAM:Dissertation} Ch 8.  In the full generality of the theory, generative effects are linked to a certain \emph{loss of exactness}. Homological algebra then comes into the setting to analyze the situation. Specifically, we can extract algebraic object from the systems that encode their potential to generate effects. Those objects can then be used to understand the phenomenon, and link the behavior of the interconnected system to its separate constituents.  Mathematically, this picture enables us to develop (co)homology theories to understand cascade effects. The $0$th order (co)homology objects encode the phenome of the system, and higher order objects encode potentials to produce effects.  We refer the reader to \cite{ADAM:Dissertation} for a thorough development of this line of research.

\tableofcontents

\section{Mathematical preliminaries and definitions.}\label{sec:prelim}

A preordered set or \emph{proset} $\<S,\leq\>$ is a set $S$ equipped with a (binary) relation $\leq$ that is reflexive and transitive. If $\leq$ is also antisymmetric, then $\leq$ becomes a partial order and $\<S,\leq\>$ becomes a partially ordered set or \emph{poset}. A proset is said to be a join-semilattice (resp. meet-semilattice) if every pair of elements admits a least upper-bound, termed \emph{join} (resp. greatest lower-bound, termed \emph{meet}).  A proset that is both a join-semilattice and a meet-semilattice is said to be a lattice.  Note that if a proset $\<S,\leq\>$ admits finite joins (resp. finite meets) then $\leq$ is antisymmetric (see e.g., Proposition \ref{pro:antisymmetric} for more details).

A proset is said to be finitely cocomplete (resp. finitely complete) if every finite subset of it admits a join (resp. meet). A finitely cocomplete (resp. finitely complete) proset is then only a join-semilattice (resp. meet-semilattice) with a minimum (resp. maximum) element.
A proset $S$ is said to be cocomplete (resp. complete) if every subset of it admits a join (resp. meet).  A cocomplete proset is then necessarily complete: the lower-bounds of a subset admit a join by cocompleteness.  The converse also holds.   A complete lattice is then a lattice that admits arbitrary meets and joins.  A complete lattice is thus equivalently a cocomplete (resp. complete) preordered set.

\subsection{Notation.}  If $S$ is a set, then $2^S$ denotes the set of subsets of $S$.  If $S$ and $T$ are sets (resp.\ preordered sets), then $S^T$ denotes the set of maps (resp.\ order-preserving maps) from $T$ to $S$.  If $S$ and $T$ are prosets, then the set $S^T$ inherits a natural preorder relation $f \leq g$ if, and only if, $f(t) \leq g(t)$ for all $t \in T$. 

\section{The contagion phenomenon, revisited.} \label{sec:contagionSection}

The example presented in the introduction is only an instance of a more general class of systems.  A system in concern consists of $n$ nodes, or parts.  Each node can be either infected (active or failed) or healthy (inactive or non-failed), and is attributed a collection of neighborhood sets.  A neighborhood set is only a subset (possibly empty) of the $n$ nodes. Each node can be attributed either one, multiple or no neighborhood sets.  A node becomes infected if all the nodes in (at least) one of its neighborhood sets are infected.  Once a node becomes infected, it remains infected forever.  Again, the order of infections does not affect the set of final infected nodes.

The example presented in the introduction can be seen as a special case where the neighborhood sets of node $i$ are only subsets of cardinality $k$, the threshold of $i$. The operator $\Phi$ described, and its \emph{inexactness}, carries through unchanged.

\subsection{Syntax and interpretation.}

Let $\Sigma:=\{a,b,\cdots,h\}$ be a finite set of $n$ elements.  The set $\Power\Sigma$ denotes the set of subsets (or powerset) of $\Sigma$.  For notational convenience, we define $\bD$ (for Description) to be the set:
\begin{equation*}
  \Sigma \ra \Power{\Power\Sigma}
\end{equation*}
A map in $\bD$ assigns to every element of $\Sigma$ a collection of subsets of $\Sigma$.  A system, as presented at the start of the section, is syntactically described by a map $\N \in \bD$.  Conversely every map in $\bD$ is a meaningful syntactic description of a system.

\subsubsection{Interpretation.} The syntactical description $\N$ is interpreted as follows.  A system is made up of $n$ part, labeled say $a,b,\cdots,h$. To each part $i$ is assigned a collection of neighborhood sets $\N(i)$. Every part can be either infected (active or failed) or healthy (inactive or non-failed). All parts are initially inactive. Part $i$ is infected at time $m+1$ if, and only if, either it was infected at time $m$ or all the parts in some neighborhood set $N \in \N(i)$ of $i$ are infected at time $m$.  Thus, once a part is infected, it remains infected forever.

Let $A_m$ denote the set of infected parts at time $m$.  We initiate $A_0$ to be the empty set, and recursively define $A_1, A_2, \cdots$ such that $i \in A_{m+1}$ if, and only if, either $i \in A_m$ or $N \subseteq A_m$ for some set $N \in \N(i)$. Therefore, every map in $\bD$ assigns to part $i$ a monotone (or order-preserving) Boolean function $\phi_i : \Power\Sigma \ra \Power{\{*\}}$.  The set $\{*\}$ denotes the set with one element. Then:
\begin{equation*}
  i \in A_{m+1}   \quad \text{ iff } \quad \text{ either } i \in A_m \text{ or } \phi_i(A_m) = \{*\}.
\end{equation*}
Whenever the set $\Sigma$ is finite, the dynamics converge after finitely many steps.

\begin{Pro}\label{Pro:Conv}
 If $\Sigma$ has cardinality $n$, then $A_n = A_{n+1}$.
\end{Pro}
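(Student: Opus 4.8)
The plan is to treat the recursion as the iteration of a single fixed self-map on the finite powerset $\Power\Sigma$, to show that the orbit it generates is a non-decreasing chain, to observe that such a chain stabilizes the instant two consecutive terms agree, and finally to invoke a crude cardinality bound to force agreement by the $n$-th step.

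First I would record that the defining clause ``$i \in A_{m+1}$ iff either $i \in A_m$ or $N \subseteq A_m$ for some $N \in \N(i)$'' expresses $A_{m+1}$ as the value at $A_m$ of the map $F \colon \Power\Sigma \ra \Power\Sigma$ given by
\begin{equation*}
  F(X) = X \cup \{\, i \in \Sigma : N \subseteq X \text{ for some } N \in \N(i)\,\}.
\end{equation*}
Crucially, $F$ does not depend on $m$, so $A_{m+1} = F(A_m)$ for every $m$, with $A_0 = \emptyset$. The ``either $i \in A_m$'' disjunct makes $X \subseteq F(X)$ for all $X$, and hence $A_m \subseteq A_{m+1}$ for every $m$; the orbit $A_0, A_1, A_2, \ldots$ is therefore a non-decreasing chain of subsets of $\Sigma$.

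Next I would prove the stabilization step: if $A_m = A_{m+1}$ for some $m$, then $A_k = A_m$ for all $k \geq m$. This is immediate by induction, since applying the single map $F$ to equal inputs yields equal outputs, so $A_{m+1} = F(A_m) = F(A_{m+1}) = A_{m+2}$, and so on. In other words, the chain is constant from the first index at which two consecutive terms coincide. It then remains to locate such an index no later than $n$. Here I would use that $\Sigma$ has exactly $n$ elements, so a strictly increasing chain of subsets of $\Sigma$ has at most $n+1$ terms, as the cardinalities would have to run through distinct integers in $\{0, 1, \ldots, n\}$. The $n+2$ sets $A_0 \subseteq A_1 \subseteq \cdots \subseteq A_{n+1}$ cannot therefore be pairwise distinct, so some consecutive inclusion $A_m \subseteq A_{m+1}$ with $0 \leq m \leq n$ is an equality. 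By the stabilization step the chain is constant from $m$ onwards, and since $m \leq n$ this yields $A_n = A_{n+1}$, as desired.

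The argument involves no real obstacle beyond care in the bookkeeping: the one point worth stressing is that $F$ is a fixed, index-independent self-map, for it is exactly this that makes the ``first repeat forces permanent stabilization'' step go through; the cardinality count is then entirely routine. I would also note that monotonicity of the Boolean functions $\phi_i$ plays no role in this particular statement---only the self-enlarging property $X \subseteq F(X)$ is used.
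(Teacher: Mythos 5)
Your proof is correct and is essentially the paper's argument: both rest on the chain $A_0 \subseteq A_1 \subseteq \cdots$ being non-decreasing, on the fact that one consecutive equality forces permanent stabilization (since $A_{m+1}$ is the value of a fixed, index-independent map at $A_m$), and on the cardinality bound $|\Sigma| = n$ forcing a repeat by step $n$. The only difference is presentational: the paper runs the same argument in contrapositive form (if $A_n \neq A_{n+1}$, then every earlier step was strict, so $A_n$ would be too large), while you argue directly via pigeonhole and then propagate the equality forward.
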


\begin{proof}
  If $A_n \neq A_{n+1}$, then $A_m \neq A_{m+1}$ for $0 \leq m \leq n-1$. Thus, if $A_n \neq A_{n+1}$ then $A_n$ would contain more than $n$ elements.
\end{proof}
We thus refer to the final set of infected nodes as $A_\infty$.  The set $A_\infty$ is only used to correspond to the case where $A_0$ is initialized to the empty set.

\subsubsection{Interaction.} Two systems are made to syntactically interact by merging their descriptions.  Syntactic descriptions $\N \in \bD$ and $\N'\in \bD$ interact by yielding their union $\N \cup \N' \in \bD$ where $(\N\cup \N')(i) = \N(i) \cup \N'(i)$.  The collection of neighborhoods are combined. Indeed, we can order $\bD$ by inclusion as $\N \subseteq \N'$ if $\N(i) \subseteq \N'(i)$ for all $i$.  Every pair of descriptions $\N$ and $\N'$ in the partially ordered set $\bD$ admits a least upper-bound denoted by $\N \cup \N'$.

\subsection{Semantics.}

To study the systems, we will recover from a syntactical description $\N \in \bD$, a map $f_\N: \Power\Sigma \ra \Power\Sigma$ that sends $S \subseteq \Sigma$ to the final set of infected nodes if $A_0$ was initialized to $S$, i.e., if all the parts in $S$ are initially infected.

\begin{Pro}
  The set $A_\infty$ corresponding to the final set of infected nodes is $f_\N(\emptyset)$.
\end{Pro}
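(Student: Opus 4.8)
The plan is to unwind the definitions on both sides and show they describe the same set. By definition, $f_\N(S)$ is the final set of infected nodes when the dynamics are initialized with $A_0 = S$. The statement claims that the quantity $A_\infty$ described in the text, which is the final infected set when $A_0 = \emptyset$, equals $f_\N(\emptyset)$. So at heart this is an observation that the two constructions agree by construction: $f_\N(\emptyset)$ is literally the instance of the recursive dynamics $A_0, A_1, A_2, \dots$ started from the empty set, and $A_\infty$ is its stable value.

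First I would recall the recursion: $A_0 = \emptyset$ and $i \in A_{m+1}$ if and only if either $i \in A_m$ or $\phi_i(A_m) = \{*\}$ (equivalently $N \subseteq A_m$ for some $N \in \N(i)$). I would note that the sequence $A_0 \subseteq A_1 \subseteq A_2 \subseteq \cdots$ is monotone increasing, since once a node is infected it stays infected. By Proposition \ref{Pro:Conv}, this sequence stabilizes with $A_n = A_{n+1}$ when $|\Sigma| = n$, and hence $A_m = A_n$ for all $m \geq n$; this common value is precisely $A_\infty$. The map $f_\N$ is \emph{defined} to send an initial set to the final infected set obtained by running exactly this recursion with $A_0$ equal to that set, so running it with $A_0 = \emptyset$ yields, by definition, $f_\N(\emptyset)$.

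I would then conclude by matching the two: the recursion producing $A_\infty$ from $A_0 = \emptyset$ is identical to the recursion defining $f_\N$ evaluated at $\emptyset$, so the two final sets coincide, giving $A_\infty = f_\N(\emptyset)$. The only point requiring any care is the appeal to termination: one must invoke Proposition \ref{Pro:Conv} (or directly the monotonicity plus finiteness of $\Sigma$) to know that a well-defined final set $A_\infty$ exists at all, so that the equality is between genuine sets rather than a limit that might fail to stabilize. Since $\Sigma$ is finite and the sequence is monotone and bounded above by $\Sigma$, stabilization is guaranteed, and there is no real obstacle; the proposition is essentially a definitional unpacking.
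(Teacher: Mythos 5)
Your proof is correct and takes essentially the same route as the paper, which simply notes that the claim is immediate from the definition of $A_\infty$. Your additional care in invoking Proposition \ref{Pro:Conv} to guarantee stabilization is a reasonable elaboration of that same definitional unpacking, not a different argument.
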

\begin{proof}
  The proof is immediate by definition of $A_\infty$.
\end{proof}
As mentioned, the set $A_\infty$ is only used to correspond to the case where $A_0$ is initialized to the empty set.

The map $f_\N$ derived from a syntactical description $\N$ may be thought of as being the object that gives a meaning to the description, the semantics behind the syntax. Different syntactical descriptions in $\bD$ may yield the same system object.  However, different system objects refer to different systems, when it comes to looking at the final set of infected nodes.  The system object $f_\N$ can be seen as the representation-independent object we are interested in.  We will thus refer, in this section, to $f_\N$ as the system (object), as opposed to $\N$ which is referred to as the system syntax or description.

\begin{Def}
  A map $f : \Power\Sigma \ra \Power\Sigma$ is said to satisfy A.1, A.2 and A.3, respectively, if:
    \begin{itemize}\setlength\itemsep{0em}
    \item[A.1.] $S \subseteq f(S)$ for all $S \subseteq \Sigma$.
    \item[A.2.] If $S\subseteq S'$ then $f(S) \subseteq f(S')$, for all $S, S' \subseteq \Sigma$.
    \item[A.3.] $f f(S) = f(S)$ for all $S \subseteq \Sigma$.
  \end{itemize}
\end{Def}

\begin{Pro}
  If $\N \in \bD$, then $f_\N: \Power\Sigma \ra \Power\Sigma$ satisfies A.1, A.2 and A.3.
\end{Pro}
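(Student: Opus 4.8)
The plan is to recognize the three conditions A.1--A.3 as saying that $f_\N$ is a \emph{closure operator}, and to derive them from the behaviour of the single-step update underlying the dynamics. First I would introduce the one-step update map $g : \Power\Sigma \ra \Power\Sigma$ defined by $g(A) = A \cup \{\, i \in \Sigma : N \subseteq A \text{ for some } N \in \N(i)\,\}$, so that the dynamics started at $A_0 = S$ are exactly $A_{m+1} = g(A_m)$, i.e.\ $A_m = g^m(S)$ ($m$-fold composition). Two elementary facts about $g$ drive everything: it is \emph{extensive}, $A \subseteq g(A)$ (immediate from the definition), and it is \emph{monotone}, in the sense that $A \subseteq A'$ implies $g(A) \subseteq g(A')$ (if $i$ is added to $A$ because some $N \in \N(i)$ satisfies $N \subseteq A$, then $N \subseteq A'$ as well, so $i$ is added to $A'$). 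I would also record that, by the same counting argument as in Proposition \ref{Pro:Conv} applied to an arbitrary initial set, the increasing chain $S = A_0 \subseteq A_1 \subseteq \cdots$ stabilizes within $n = |\Sigma|$ steps, so that $f_\N(S) = g^n(S)$ and, crucially, $f_\N(S)$ is a \emph{fixed point} of $g$, i.e.\ $g\big(f_\N(S)\big) = f_\N(S)$.

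With this in hand, A.1 and A.2 are short. For A.1, extensivity of $g$ gives $S = A_0 \subseteq g^n(S) = f_\N(S)$. For A.2, I would show by induction on $m$ that $S \subseteq S'$ implies $g^m(S) \subseteq g^m(S')$, applying monotonicity of $g$ at each step; evaluating at $m = n$ yields $f_\N(S) \subseteq f_\N(S')$.

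The step I expect to carry the real content is A.3, idempotence. Here I would set $T = f_\N(S)$ and exploit the fixed-point property recorded above: since $g(T) = T$, every iterate satisfies $g^m(T) = T$, and in particular $f_\N(T) = g^n(T) = T$. Therefore $f_\N\big(f_\N(S)\big) = f_\N(T) = T = f_\N(S)$, which is A.3.

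The only genuine subtlety is the observation that $f_\N(S)$ is a fixed point of the update map $g$ --- equivalently, that the dynamics terminate precisely when no further node can be infected. This is exactly what the finite-convergence Proposition \ref{Pro:Conv} supplies (the chain being strictly increasing until it stabilizes), so the main obstacle is conceptual bookkeeping rather than a hard estimate: one must phrase $f_\N$ as iteration of $g$ and identify the stabilized value as a fixed point, after which A.1, A.2 and A.3 follow respectively from extensivity, monotonicity, and the fixed-point property.
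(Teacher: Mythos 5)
Your proof is correct. Note that the paper itself gives no argument for this proposition: its proof states only that A.1, A.2 and A.3 ``immediately follow from the description of the systems'' and defers the details to \cite{ADAM:AlgebraCascadeEffects}, so there is no competing route to compare against. Your write-up supplies exactly the deferred content, and in the natural way: introducing the extensive, monotone one-step update $g$, using the counting argument of Proposition \ref{Pro:Conv} to get stabilization of the chain within $n=|\Sigma|$ steps so that $f_\N(S)=g^n(S)$ is a fixed point of $g$, and then reading off A.1 from extensivity, A.2 from monotonicity, and A.3 from the fixed-point property.
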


\begin{proof}
  The axioms A.1, A.2 and A.3 immediately follow from the description of the systems.  We refer the reader to \cite{ADAM:AlgebraCascadeEffects} for more details.
\end{proof}
Conversely, we have:

\begin{Pro}
 If $f: \Power\Sigma \ra \Power\Sigma$ satisfies A.1, A.2 and A.3, then $f = f_\N$ for some description $\N \in \bD$.
\end{Pro}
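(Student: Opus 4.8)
The plan is to construct, from a given closure operator $f$ satisfying A.1--A.3, an explicit syntactic description $\N \in \bD$ whose semantics $f_\N$ reproduces $f$. The natural candidate is to read off, for each node $i$, the neighborhood sets directly from the ``trigger'' behavior of $f$. Concretely, I would set $\N(i)$ to be the collection of all sets $S \subseteq \Sigma$ such that $i \in f(S)$; that is, $S$ is declared a neighborhood set of $i$ precisely when initializing the infection at $S$ eventually infects $i$. One could trim this to an antichain of minimal such $S$, but working with the full down-closed-complement collection is cleaner and avoids a separate minimality argument.

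First I would verify that with this choice the one-step update map $\phi_i$ attached to $\N$ fires at exactly the right moment, i.e.\ that $\phi_i(T) = \{*\}$ iff $N \subseteq T$ for some $N \in \N(i)$, iff $i \in f(N)$ for some $N \subseteq T$. Using A.2 (monotonicity), $N \subseteq T$ and $i \in f(N)$ already give $i \in f(T)$, so the one-step rule never infects a node that $f$ would not; conversely if $i \in f(T)$ then $T$ itself is a witnessing neighborhood set, so $\phi_i(T)$ fires. The second step is to track the recursively defined sequence $A_0 = \emptyset, A_1, A_2, \dots$ generated by $\N$ starting from a given seed $S$, and to show its limit $A_\infty$ equals $f(S)$. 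By Proposition~\ref{Pro:Conv} this stabilizes after finitely many steps, so it suffices to prove the two inclusions for the stable set.

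For the inclusion $A_\infty \subseteq f(S)$, I would argue by induction on $m$ that $A_m \subseteq f(S)$: the base case uses A.1 ($S \subseteq f(S)$), and the inductive step uses that any newly infected $i$ has a neighborhood set $N \subseteq A_m \subseteq f(S)$ with $i \in f(N)$, whence by A.2 and A.3 ($f f(S) = f(S)$) we get $i \in f(f(S)) = f(S)$. For the reverse inclusion $f(S) \subseteq A_\infty$, I would use that $A_\infty$ is a fixed point of the update rule: since $A_\infty$ is stable, no further node fires, which by the trigger characterization means $i \in f(A_\infty)$ implies $i \in A_\infty$, i.e.\ $f(A_\infty) \subseteq A_\infty$; combined with $S \subseteq A_\infty$ and A.2 this yields $f(S) \subseteq f(A_\infty) \subseteq A_\infty$. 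The two inclusions give $f_\N(S) = f(S)$ for all $S$, completing the proof.

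The main obstacle I anticipate is the reverse inclusion, specifically pinning down that the stabilized set $A_\infty$ is genuinely $f$-closed. The subtlety is that $A_\infty$ being stable \emph{under the $\N$-dynamics} must be translated into being closed \emph{under $f$}, and this hinges on the trigger characterization established in the first step together with A.3; the closure axiom A.3 is exactly what guarantees $f(A_\infty)$ cannot introduce anything new beyond what the one-step rule already saturated. Once the correspondence ``$i$ fires from $T$ iff $i \in f(T)$'' is cleanly in hand, both the forward monotone-induction and the backward fixed-point argument are routine, so I would invest most of the care in stating and proving that equivalence.
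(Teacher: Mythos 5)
Your construction is exactly the one the paper uses—$\N(i) = \{S \in \Power\Sigma : i \in f(S)\}$—and your verification (the trigger equivalence $\phi_i(T)$ fires iff $i \in f(T)$, then the two inclusions via A.1--A.3 and stability of $A_\infty$) is a correct, fully detailed version of what the paper leaves implicit in its one-line proof. No gaps; this is essentially the same approach, carried out more carefully than the paper itself.
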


\begin{proof}
 Construct $\N$ such that $\N(i) = \big\{ S \in \Power\Sigma : i \in f(S)\big\}$.
\end{proof}
We then define $\system$ to be the set of maps satisfying A.1, A.2 and A.3.  The maps in $\system$ are often known as \emph{closure operators}. On one end, they appeared in the work of Tarski (see e.g., \cite{TAR1936} and \cite{TAR1956}) to formalize the notion of deduction. On another end, they appeared in the work of Birkhoff, Ore and Ward (see e.g., \cite{BIR1936}, \cite{ORE1943} and \cite{WAR1942}, respectively), parts of foundational work in universal algebra.  The first origin reflects the consequential relation in the effects considered.  The second origin reflects the theory of interaction of multiple systems.  Closure operators appear as early as \cite{MOO1911}. 

If we order $\system$ by:
\begin{equation*}
  f \leq g  \quad \text{ if } \quad f(S) \leq g(S) \text{ for all } S \subseteq \Sigma,
\end{equation*}
then $\<\system,\leq\>$ becomes a partially ordered set. The relation $f \leq g$ can be thought of as $f$ is a subsystem of $g$. Furthermore, every pair of systems $f$ and $g$ admits an upper-bound $f \vee g$.

\begin{Pro}
 If $\Sigma$ has cardinality $n$, then $f \vee g = (fg)^n$.
\end{Pro}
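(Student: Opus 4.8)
The plan is to set $\phi := fg$ and show that $h := \phi^n = (fg)^n$ is the least upper bound of $f$ and $g$ in $\langle \system, \leq \rangle$. First I would record the two elementary properties of $\phi$: it is monotone, being a composite of the monotone maps $f$ and $g$ (A.2), and it is extensive, since $S \subseteq g(S) \subseteq f(g(S)) = \phi(S)$ by two applications of A.1. Consequently the iterates form an increasing chain $S \subseteq \phi(S) \subseteq \phi^2(S) \subseteq \cdots$ inside $2^\Sigma$.

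Next I would establish the stabilization $\phi^n(S) = \phi^{n+1}(S)$ for every $S$, mirroring Proposition \ref{Pro:Conv}: along the increasing chain each strict inclusion adds at least one element, and since $|\Sigma| = n$ there can be at most $n$ strict inclusions, so the chain is constant from step $n$ onward. This yields $\phi(h(S)) = h(S)$, i.e.\ every value $h(S)$ is a fixed point of $\phi$. From $f(g(h(S))) = h(S)$ I would deduce, using A.1 for $f$, that $g(h(S)) \subseteq h(S)$, and combined with A.1 for $g$ that $g(h(S)) = h(S)$; substituting back gives $f(h(S)) = h(S)$ as well. Thus each $h(S)$ is a common fixed point of both $f$ and $g$.

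With this in hand the three defining axioms for $h \in \system$ follow: A.1 and A.2 are inherited from $\phi$ as an iterate of an extensive monotone map, while A.3 reads $h(h(S)) = \phi^n(h(S)) = h(S)$ since $h(S)$ is a fixed point of $\phi$. To see $h$ is an upper bound, monotonicity (A.2) together with $S \subseteq h(S)$ gives $f(S) \subseteq f(h(S)) = h(S)$ and $g(S) \subseteq g(h(S)) = h(S)$, whence $f \leq h$ and $g \leq h$.

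Finally, for minimality I would take any $k \in \system$ with $f \leq k$ and $g \leq k$ and prove $\phi^m(S) \subseteq k(S)$ for all $m$ by induction. The base step uses $g(S) \subseteq k(S)$ followed by $f(k(S)) \subseteq k(k(S)) = k(S)$ (monotonicity of $f$, then $f \leq k$, then A.3 for $k$) to get $\phi(S) \subseteq k(S)$; the inductive step applies the monotone $\phi$ and the same contraction $\phi(k(S)) \subseteq k(S)$. Setting $m = n$ gives $h \leq k$, so $h = f \vee g$. The step requiring the most care is the stabilization argument, as it is the only place where the finiteness hypothesis $|\Sigma| = n$ enters, and it is precisely what upgrades $(fg)^n$ from a mere extensive monotone map into an idempotent closure operator.
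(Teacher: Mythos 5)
Your proof is correct and follows essentially the same route as the paper's: establish that $(fg)^n$ lies in $\system$ by noting A.1 and A.2 are preserved under composition and deriving A.3 from the stabilization $(fg)^n = (fg)^{n+1}$ forced by $|\Sigma| = n$, and then obtain minimality from the absorption property of any upper bound (your contraction $\phi(k(S)) \subseteq k(S)$ is the same mechanism as the paper's identities $fh = hf = h$, $gh = hg = h$). The only difference is that you spell out details the paper leaves implicit, such as $(fg)^n$ being an upper bound and the values $h(S)$ being common fixed points of $f$ and $g$.
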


\begin{proof}
  The map $(fg)^n$ satisfies A.1, A.2 and A.3, and belongs to $\system$.  Indeed, A.1 and A.2 are preserved by composition. The axiom A.3 is satisfied as $(fg)^n(S) = (fg)^{n+1}(S)$ whenever $\Sigma$ has cardinality $n$, by an argument similar to that in the proof of Proposition \ref{Pro:Conv}. Finally, if $h \in \system$ and $f \vee g \leq h$, then $hf = fh = h$ and $hg = gh = h$. Thus $(fg)^n \leq (fg)^nh \leq h$ whenever $f \vee g \leq h$.
\end{proof}
The poset $\system$ is then a join-semilattice $\<\system, \leq, \vee\>$.  The semilattice also admits meets (i.e., greatest lower-bounds) making it a lattice. Its minimum element is the identity map, while its maximum element is the map $- \mapsto \Sigma$.

Two systems can be seen to interact (semantically) by iteratively applying their system maps till they yield an idempotent map, i.e., satisfying A.3. The properties A.1 and A.2 are always preserved under composition. Most importantly, the semantical interaction of systems coincides with the syntactical interaction of systems.

\begin{Pro}
 If $\N$ and $\N'$ are descriptions, then $f_{\N \cup \N'} = f_\N \vee f_{\N'}$.
\end{Pro}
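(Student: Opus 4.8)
The plan is to identify $f_{\N\cup\N'}$ with the least upper bound of $f_\N$ and $f_{\N'}$ by exploiting the universal property of the join established in the preceding proposition, rather than manipulating the formula $(fg)^n$ directly. Throughout, write $T_\M(A) = A \cup \{\, i : N \subseteq A \text{ for some } N \in \M(i)\,\}$ for the one-step update operator of a description $\M$, so that $f_\M(S) = T_\M^{\,n}(S)$ and, by monotonicity of $T_\M$ (cf.\ the counting argument of Proposition~\ref{Pro:Conv}), $f_\M(S)$ is the least set $A$ with $S \subseteq A$ and $T_\M(A) = A$.

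First I would record the monotonicity of $\M \mapsto f_\M$: if $\N \subseteq \N'$ then $f_\N \leq f_{\N'}$. Running both dynamics from the same initial set $S$ and inducting on time, $A_m \subseteq A_m'$ forces $A_{m+1} \subseteq A_{m+1}'$, since every neighborhood witnessing an infection under $\N$ lies in the corresponding collection of $\N'$ and is contained in $A_m \subseteq A_m'$. Applying this to $\N \subseteq \N \cup \N'$ and $\N' \subseteq \N \cup \N'$ shows that $f_{\N\cup\N'}$ is an upper bound of $\{f_\N, f_{\N'}\}$ in $\system$, whence $f_\N \vee f_{\N'} \leq f_{\N\cup\N'}$.

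For the reverse inequality I would set $h := f_\N \vee f_{\N'}$ and combine two observations. The structural one is that $T_{\N\cup\N'}(A) = T_\N(A) \cup T_{\N'}(A)$, so a set is fixed by $T_{\N\cup\N'}$ exactly when it is fixed by both $T_\N$ and $T_{\N'}$. The algebraic one is that $f_\N \leq h$ forces the absorption identity $f_\N h = h$ (for closure operators $f \leq h$, monotonicity and idempotency give $f h = h$, as already used in the preceding proof), and symmetrically $f_{\N'} h = h$. Hence for any $S$ the set $h(S)$ satisfies $T_\N(h(S)) \subseteq f_\N(h(S)) = h(S)$, which by A.1 is an equality, and likewise $T_{\N'}(h(S)) = h(S)$. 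Thus $h(S)$ is a common fixed point of $T_\N$ and $T_{\N'}$, hence a fixed point of $T_{\N\cup\N'}$ containing $S$; the leastness characterization of $f_{\N\cup\N'}(S)$ then gives $f_{\N\cup\N'}(S) \subseteq h(S)$. Together with the first step, the two maps coincide.

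The step I expect to be the main obstacle is this reverse direction, specifically translating the absorption identities into fixed-point stability: recognizing that stability under the merged dynamics $T_{\N\cup\N'}$ decomposes into simultaneous stability under $T_\N$ and $T_{\N'}$, and that this common-fixed-point condition is precisely what the join $h$ witnesses. Everything else is a routine induction on infection rounds. An equally clean alternative would bypass the formula $(fg)^n$ altogether: the closed sets of $f_\M$ are exactly the $T_\M$-fixed points, those of $f_{\N\cup\N'}$ are the common $T_\N, T_{\N'}$-fixed points, and since a closure operator is determined by its family of closed sets, $f_{\N\cup\N'}$ and $f_\N \vee f_{\N'}$ must agree.
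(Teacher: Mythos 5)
Your proof is correct, and its main line is genuinely different from the paper's. The paper argues entirely through fixed-point sets: it observes that $S$ is fixed by $f_{\N\cup\N'}$ iff it is fixed by both $f_\N$ and $f_{\N'}$, that the same characterization holds for the fixed points of $f_\N \vee f_{\N'}$, and then concludes by citing Theorem~\ref{Thm:Iso}, the isomorphism $\<\system,\leq,\vee\> \cong \<\LL,\supseteq,\cap\>$ under which a closure operator is determined by its family of fixed points. You instead prove the two inequalities separately: monotonicity of $\M \mapsto f_\M$ (by induction on infection rounds) gives $f_\N \vee f_{\N'} \leq f_{\N\cup\N'}$, and for the converse you show $h(S)$, with $h = f_\N \vee f_{\N'}$, is a common fixed point of $T_\N$ and $T_{\N'}$ via the absorption identities $f_\N h = f_{\N'} h = h$, hence a fixed point of $T_{\N\cup\N'}$ containing $S$, and then invoke the least-fixed-point characterization of $f_{\N\cup\N'}(S)$. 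Both routes hinge on the same structural fact --- that stability under the merged dynamics decomposes into simultaneous stability under each --- but the paper's proof buys brevity at the cost of leaning on the representation theorem, while yours is self-contained at the level of the dynamics and order axioms A.1--A.3, requiring no appeal to the Moore-family isomorphism. Your closing ``alternative'' (closure operators are determined by their closed sets, so the two operators agree) is in fact precisely the paper's proof, so you identified both arguments; your choice of the two-inequality version as the primary one is defensible and arguably more elementary.
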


\begin{proof}
  We have $f_{\N \cup \N'} (S) = S$ if, and only if, whenever $N \in \N\cup\N'(i)$ lies in $S$, then $i \in S$. Or equivalently if, and only if, $f_\N(S) = f_{\N'}(S) = S$.  Furthermore, the fixed-points of $f_\N \vee f_{\N'}$ are the sets that are fixed-points of both $f_\N$ and $f_{\N'}$.  The result then follows as the maps in $\system$ are uniquely determined by their fixed-points.  See e.g., Theorem \ref{Thm:Iso} or \cite{ADAM:AlgebraCascadeEffects} for more details on the last assertion. 
\end{proof}
We established thus far a theory of interconnection, via the space $\<\system, \vee\>$.  However, interconnecting two systems will only give us an interconnected system.  No cascading phenomena are yet present.  Those will only emerge once we decide to conceal features in the systems.

\subsection{The contagion intuition.}

To recover the intuition, we conceal the dynamics. We do so by only keeping the final set of infected nodes.  We are thus observing from our systems, subsets of $\Sigma$ corresponding to the final set of infected nodes. To this end, we define $\Phi : \system \ra \Power\Sigma$ to be the map sending $f$ to its least fixed-point. Such a map is well defined, as:
\begin{Pro}\label{pro:CompLattice}
  If $f:\Power\Sigma \ra \Power\Sigma$ satisfies A.1, A.2 and A.3, then its set of fixed-points $\fix(f) = \{S : fS=S\}$ when ordered by inclusion forms a complete lattice.  Furthermore, if $S$ and $S'$ are fixed-points of $f$, then $S \cap S'$ is a fixed-point of $f$.
\end{Pro}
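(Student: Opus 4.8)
The plan is to show that $\fix(f)$ is closed under arbitrary intersections, which simultaneously delivers the ``furthermore'' clause and, once a top element is produced, forces $\fix(f)$ to be a complete lattice. As a preliminary observation I would record that the fixed-points of $f$ are precisely its image: if $S = f(T)$ then A.3 gives $f(S) = ff(T) = f(T) = S$, so $S \in \fix(f)$; conversely any $S \in \fix(f)$ equals $f(S) \in \im(f)$. This is convenient, since every set of the form $f(T)$ is then automatically a fixed-point.

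Next I would establish the meet-closure. Let $\{S_\alpha\}_{\alpha \in I}$ be an arbitrary family in $\fix(f)$ and set $M = \bigcap_\alpha S_\alpha$. For each $\alpha$ we have $M \subseteq S_\alpha$, so A.2 yields $f(M) \subseteq f(S_\alpha) = S_\alpha$; intersecting over $\alpha$ gives $f(M) \subseteq M$, while A.1 gives $M \subseteq f(M)$, whence $f(M) = M$ and $M \in \fix(f)$. As $M$ is the greatest lower bound of the $S_\alpha$ in $\Power\Sigma$ and itself lies in $\fix(f)$, it is their meet in $\fix(f)$. The case $I = \{1,2\}$ is exactly the assertion that $S \cap S'$ is a fixed-point, and the case $I = \emptyset$ (or directly $\Sigma \subseteq f(\Sigma) \subseteq \Sigma$ via A.1) shows $\Sigma \in \fix(f)$ is the top element.

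Finally I would supply the joins. The quick route is the fact recorded in Section \ref{sec:prelim} that a proset admitting all meets is automatically complete; since the upper bounds of a family $\{S_\alpha\}$ in $\fix(f)$ form a nonempty set (it contains $\Sigma$), their meet is a fixed-point that serves as the least upper bound. More explicitly, I would verify that $f\big(\bigcup_\alpha S_\alpha\big)$ is the join: it belongs to $\fix(f)$ as an image of $f$; it dominates each $S_\alpha$ because $S_\alpha = f(S_\alpha) \subseteq f\big(\bigcup_\beta S_\beta\big)$ by A.2; and any fixed-point upper bound $T$ satisfies $\bigcup_\alpha S_\alpha \subseteq T$, so $f\big(\bigcup_\alpha S_\alpha\big) \subseteq f(T) = T$ by A.2 and A.3. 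This exhibits all meets and all joins, making $\fix(f)$ a complete lattice.

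There is no genuine obstacle: this is the classical characterization of closure operators. The only steps needing care are confirming that the candidate meet and join land in $\fix(f)$ rather than merely in $\Power\Sigma$ --- precisely where A.1 and A.2 are used --- and securing a top element so that the upper-bound set defining joins is nonempty.
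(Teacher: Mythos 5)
Your proof is correct, and it is genuinely more general than the paper's. The meet-closure step coincides: the paper proves $f(S \cap S') = S \cap S'$ for two fixed points by exactly your A.1/A.2 argument, and it builds the binary join of $S$ and $S'$ as the intersection of all fixed points containing both. But the paper then settles completeness in one stroke, noting that ``the lattice $\fix(f)$ is complete as it is finite,'' leaning on the standing assumption that $\Sigma$ is a finite set. Your argument never invokes finiteness: you establish closure under \emph{arbitrary} intersections (with the empty intersection supplying the top element $\Sigma$), and you produce arbitrary joins either by the meets-imply-joins principle recorded in Section \ref{sec:prelim} or by the explicit formula $f\big(\bigcup_\alpha S_\alpha\big)$, which you correctly verify is the least fixed-point upper bound. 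Your preliminary identification $\fix(f) = \im(f)$ via A.3 is also something the paper leaves implicit, and it is what makes that explicit join formula immediate. What the paper's route buys is brevity in the finite setting where the contagion example actually lives; what yours buys is the classical closure-operator statement valid for arbitrary (possibly infinite) $\Sigma$, which is the form one would need, for instance, for the generalization in Section 8 where the underlying lattice $P$ is an arbitrary complete lattice rather than $\Power\Sigma$ for finite $\Sigma$.
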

\begin{proof}
  Let $S, S' \in \fix(f)$ be fixed-points. We have $f(S \cap S') \leq f(S) = S$ and $f(S \cap S') \leq f(S') = S'$ by A.2. As $S\cap S' \leq f(S \cap S')$ by A.1, we get $f(S \cap S') = S \cap S' \in \fix(f)$.  Let $\sS$ denote the collection of fixed-points that contain both $S$ and $S'$, namely:
  \begin{equation*}
    \sS := \{T \in \fix(f): S \subseteq T \text{ and } S'\subseteq T\}
  \end{equation*}
  As $\Sigma \in \fix(f)$, the set $\sS$ is non-empty.  The least-upper-bound of $S$ and $S'$ in $\fix(f)$ is then $\bigcap \sS$, the intersection of all the sets in $\sS$.  The greatest lower-bound of $S$ and $S'$ is $S \cap S'$.  The set $\fix(f)$ then forms a lattice.  The lattice $\fix(f)$ is complete as it is finite.
\end{proof}
The map $\Phi$ is then well defined, as a complete lattice admits a minimum element.  This minimum element corresponds to the set-intersection of all the fixed-points of $f$.

We term our observations, namely the subsets of $\Sigma$, as phenomes.  We also refer to $\Power\Sigma$ as the space of phenomes, denoted by $\phenome$.

\begin{Pro} The map $\Phi : \system \ra \phenome$ satisfies:
\begin{itemize}\setlength\itemsep{0em}
  \item[P.1.] If $f \leq g$ in $\system$, then $\Phi(f) \subseteq \Phi(g)$.
  \item[P.2.] If $S\in \phenome$, then $\{f : S \subseteq \Phi(f)\}$ has a (unique) minimum element.
  \end{itemize}
\end{Pro}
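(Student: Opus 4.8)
The plan is to reduce everything to the single observation that $\Phi(f) = f(\emptyset)$. Indeed, A.1 gives $\emptyset \subseteq f(\emptyset)$ and A.3 gives $f(f(\emptyset)) = f(\emptyset)$, so $f(\emptyset)$ is a fixed-point of $f$; moreover any fixed-point $S$ satisfies $f(\emptyset) \subseteq f(S) = S$ by A.2 applied to $\emptyset \subseteq S$, so $f(\emptyset)$ is the \emph{least} fixed-point, which is precisely $\Phi(f)$. With this identity in hand, P.1 is immediate: the order on $\system$ is defined by $f \leq g$ iff $f(T) \subseteq g(T)$ for all $T \subseteq \Sigma$, so specializing to $T = \emptyset$ yields $\Phi(f) = f(\emptyset) \subseteq g(\emptyset) = \Phi(g)$.

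For P.2 I would exhibit the minimum element explicitly. Fix $S \in \phenome = \Power\Sigma$ and consider the map $c_S : \Power\Sigma \ra \Power\Sigma$ defined by $c_S(T) = S \cup T$. The first step is to check $c_S \in \system$: A.1 holds since $T \subseteq S \cup T$; A.2 holds since union is monotone; and A.3 holds since $c_S(c_S(T)) = S \cup (S \cup T) = S \cup T$. Since $\Phi(c_S) = c_S(\emptyset) = S$, we have $S \subseteq \Phi(c_S)$, so $c_S$ belongs to the set $\{f : S \subseteq \Phi(f)\}$.

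It then remains to show $c_S$ is a lower bound for this set, which is the crux of the argument. Suppose $f \in \system$ satisfies $S \subseteq \Phi(f) = f(\emptyset)$. For any $T \subseteq \Sigma$, A.1 gives $T \subseteq f(T)$, and A.2 applied to $\emptyset \subseteq T$ gives $S \subseteq f(\emptyset) \subseteq f(T)$; hence $c_S(T) = S \cup T \subseteq f(T)$, i.e. $c_S \leq f$. Thus $c_S$ is the least element of the set, and uniqueness follows from antisymmetry of the partial order on $\system$. I do not anticipate a genuine obstacle here: the only real work is guessing the correct candidate $c_S$, the operator that simply forces the parts in $S$ to be infected, and confirming it is a closure operator, after which minimality drops straight out of A.1 and A.2.
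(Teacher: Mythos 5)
Your proof is correct and takes essentially the paper's approach: for P.2, the crux, the paper exhibits exactly the same witness, the system $-\mapsto -\cup S$, as the minimum element of $\{f : S \subseteq \Phi(f)\}$ (you additionally carry out the membership and lower-bound verifications that the paper leaves implicit). The only cosmetic difference is in P.1, where the paper argues via reversal of fixed-point sets ($f \leq g$ implies $\fix(g) \subseteq \fix(f)$, so the least fixed points, being intersections of these sets, are nested), whereas you specialize $f(T) \subseteq g(T)$ at $T = \emptyset$ using the identity $\Phi(f) = f(\emptyset)$ --- an identity the paper itself records elsewhere, so the difference is immaterial.
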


\begin{proof}
  (P.1) If $f \leq g$ then $\{S : fS = S\} \supseteq \{S: gS = S\}$. (P.2) For every $S$, the system $- \mapsto -\cup S$ is the minimum element of $\{f : S \subseteq \Phi(f)\}$. 
\end{proof}
First, the map $\Phi$ is order-preserving, and thus preserves the subsystems relation among the systems.  Second, every set of infected nodes can be lifted to a simplest system explaining that set.

  Such a map $\Phi$, satisfying P.1 and P.2, from the space of systems to a space of phenomes is termed a \emph{veil}.  In this context, contagion phenomena (later termed generative effects) arise precisely whenever:
  \begin{equation*}
    \Phi(f \vee g) \neq \Phi(f) \vee \Phi(g).
  \end{equation*}
  They arise whenever keeping only the final set of infected nodes cannot account for what happens when the two systems interact.  Indeed, the mechanisms that we have concealed interact and activate, or infect, more nodes than we can observably account for.  The phenomenon is now encoded in the \emph{inexactness} of the veil $\Phi$.  The inequality is the essential point.  

We return to this example as the paper unfolds.  We first set out to describe the general structure of the situation, and formally introduce generative effects.

\section{The two ingredients, formalized.}

As illustrated, two ingredients are required to sustain generative effects.  We first need a theory of interaction or interconnection of systems.  A theory of interconnection by itself cannot, however, account for such phenomena.  We need a notion of a veil, that conceals features from a system, and keeps a phenome observable.  Generative effects then emerge whenever the phenome of the combined system cannot be explained by the phenomes of the separate systems.

\subsection{Interaction of systems.}

 Let $\system$ be a preordered set, namely a set equipped with a binary relation $\leq$ that is reflexive and transitive.  Each element of $\system$ is considered to be a system. The $\leq$ relation dictates how the systems are related to each other. 

\begin{Def}
 A system $s$ is said to be a subsystem of $s'$ if $s \leq s'$.
\end{Def}
 Two systems will interact to yield their least upper-bound, only if it exists.  We will generally consider that least upper-bounds of finite subsets always exist, as such conditions will (or can be made to) be satisfied in most of our situations in concern. A preordered set is said to be finitely cocomplete, if every finite subset of it admits a (unique) least upper-bound.

\begin{Pro}\label{pro:antisymmetric}
  If $\system$ is finitely cocomplete, then the relation $\leq$ is antisymmetric.  
\end{Pro}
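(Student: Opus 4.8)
The plan is to exploit the \emph{uniqueness} that is built into the definition of finite cocompleteness, rather than the mere existence of joins. In a bare preorder least upper-bounds need not be unique: if $s \leq s'$ and $s' \leq s$, then both $s$ and $s'$ happily serve as least upper-bounds of the singleton $\{s\}$. Finite cocompleteness forbids this by fiat, and antisymmetry then drops out immediately.

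Concretely, suppose $s \leq s'$ and $s' \leq s$; I want to conclude $s = s'$. Consider the singleton subset $\{s\} \subseteq \system$, which is finite and hence admits a unique least upper-bound by hypothesis. First, $s$ itself is a least upper-bound of $\{s\}$: it is an upper bound by reflexivity, and if $t$ is any upper bound of $\{s\}$ (that is, $s \leq t$), then trivially $s \leq t$. Second, I claim $s'$ is also a least upper-bound of $\{s\}$. It is an upper bound since $s \leq s'$; and if $t$ is any upper bound, so $s \leq t$, then transitivity applied to $s' \leq s \leq t$ yields $s' \leq t$. Thus $s'$ is a least upper-bound of $\{s\}$ as well.

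By the uniqueness clause in the definition of a finitely cocomplete preorder, the least upper-bound of $\{s\}$ is unique, which forces $s = s'$. This establishes antisymmetry. One could equally run the argument on the pair $\{s,s'\}$, verifying that each of $s$ and $s'$ is a least upper-bound of $\{s,s'\}$ and then invoking uniqueness; the singleton version is marginally shorter.

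I do not anticipate any genuine obstacle here. The one point worth flagging is that the proof hinges entirely on reading \emph{unique} into the hypothesis: without the uniqueness of finite joins, the existence of least upper-bounds is perfectly compatible with a non-antisymmetric preorder (any pair with $s \leq s' \leq s$ already witnesses this), so the statement would be false. The wording of the definition of finite cocompleteness is precisely what makes the claim go through.
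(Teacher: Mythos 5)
Your proof is correct and uses essentially the same argument as the paper: under $s \leq s'$ and $s' \leq s$, both elements qualify as least upper-bounds of a finite subset, and the uniqueness clause in the definition of finite cocompleteness forces $s = s'$. The only cosmetic difference is that you apply this to the singleton $\{s\}$ while the paper applies it to the pair $\{s, s'\}$ (a variant you yourself note), so the two proofs are interchangeable.
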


\begin{proof}
   If $s \leq s'$ (resp. $s' \leq s$) then $s'$ (resp. $s$) is the least upper-bound of $s$ and $s'$. As the least upper-bound is unique by definition, we get $s = s'$ whenever $s \leq s'$ and $s' \leq s$.
\end{proof}
If $\system$ is finitely cocomplete, then $\leq$ becomes a partial order.  {\bf In this paper, we consider $\system$ to be a finitely cocomplete preordered set, unless indicated otherwise.} %
A finitely cocomplete preordered set always admits, by definition, a minimum element: the least upper-bound of the empty set.

\begin{Def}
   Two systems $s$ and $s'$ interact to yield their least upper-bound $s \vee s'$. More generally, a finite subset of systems $S \subseteq \system$ interacts to yield its least upper-bound $\vee S$ as a resulting system.
\end{Def}
 A collection of systems can only interact in a unique way and it is via the $\vee$ operator, to yield their least upper-bound. The binary operator $\vee$ is associative, commutative and idempotent.  The algebra $\<\system,\leq,\vee\>$ is usually termed a join-semilattice.

\subsubsection{Remark.} Conversely, every associative, commutative and idempotent binary operator on a set induces a partial order on it. The development could have thus began with a join semilattice. However, the order relation is seen to be more essential than the join operation.  This is especially true in the general level of the developed theory, through the use of categories and functors.  In the general case, the order relation is replaced by sets of morphisms and joins are replaced by colimits.  This direction will however not be considered in this paper.

\subsubsection{Remark.}  The subsystem relation may admit various interpretations.  The notion of interconnection advocated by the behavioral approach to systems theory (see e.g., \cite{POL1998}, \cite{WIL2007}, and Subsection \ref{sec:behApp}) can be seen as a special case of that developed in this section.  Indeed, the notion of a subsystem in this section translates to a reverse inclusion of behaviors.  Furthermore, interpreting $s \leq s'$ as $s$ being a partial description or an approximation of $s'$ is reminiscent of ideas in \cite{SCO1972}, \cite{SCO1972B} and \cite{SCO1972A}.  The implication of such a connection will not however be investigated in this paper.  Such a direction of research may however well be fruitful.

\subsection{Interlude on capturing the generativity of a system.}

Let us informally consider a generative grammar (see e.g. \cite{CHO1965} and related work for a formal treatment) to be a collection of rules that dictates which sentences can be formed.  Every grammar then builds one language, and different grammars may describe the same language. The grammars generating a same language are, however, different: adding a rule to one grammar could yield a very different effect on the language than adding it to another grammar.  Similar effects occur in deduction (as seen through contagion in Section \ref{sec:contagionSection}) and in situations exhibiting cascade effects or emergent phenomena.  How do these grammars then gain this \emph{generativity}?  It is definitely coming from their grammar rules.  Yet, how do we capture it?  We capture it by \emph{destroying} the rules, and studying how the grammar in full and the grammar without the rules (amounting to only the language) behave when combined with other grammars.  It is the vivid discrepancy in interaction outcome between the presence of the rules and their absence that encodes the generativity.  To then capture cascade effects resulting from the interaction of systems, we perform the following experiment. On one end, we let the systems interact and observe the outcome of the interaction.  On another end, we destroy the potential a system has to produce effects let them interact without it.  These two ends, in the presence of cascade effects, will show a discrepancy in interaction outcome.  This discrepancy then encodes the phenomenon.  Studying the discrepancy amounts to studying the phenomenon.

\subsection{Veils and generative effects.}

Generative effects are seen to emerge when we decide to focus on a particular property of a system.  Such a focus is achieved by declaring a map $\Phi: \system \ra P$, termed a \emph{veil}, from the set of systems, to a set of observables, termed \emph{phenomes}. Phenomes can be properties, features or even subsystems of a particular system.  They ought to be thought of as simplified systems, and thus inherit an order-relation and a notion of interaction. The space $P$ is then, in turn, a cocomplete preordered set.

\begin{Def}
 A veil on $\system$ is a pair $(P,\Phi)$ where $\<P,\leq\>$ is a finitely cocomplete preordered set, and $\Phi : \system \ra P$ is a map such that:
\begin{itemize}\setlength\itemsep{0em}
  \item[V.1.] The map $\Phi$ is order-preserving, i.e., if $s \leq s'$, then $\Phi s \leq \Phi s'$.
  \item[V.2.] Every phenome admits a simplest system that explains it, i.e., the set $\{s: p \leq \Phi s\}$ has a (unique) minimum element for every $p \in P$.
\end{itemize}%
\end{Def}
As the map $\Phi$ always subsumes a codomain, we will often refer to $\Phi$ as the veil, instead of the pair $(P,\Phi)$.  However, viewing a veil as a pair $(P,\Phi)$ highlights an important point. We may define different veils for a same space of systems, and each veil would define the space of phenomes to be observed from the system.  The picture to keep in mind then is not that of fixing $\system$ and $P$ and varying a veil in between.  It is of fixing $\system$ and varying $(P,\Phi)$ to yield different facets of the systems.

The veil is intended to hide away parts of the system, and leave other parts, the phenome, of the system bare and observable.  The axiom V.1 indicates that veiling a subsystem of a system may only yield a subphenome of the phenome of the system.  The axiom V.2 indicates that everything one observes can be completed in a simplest way to something that extends under the veil. Generative effects occur precisely when one fails to explain the happenings through the observable part of the system.  In those settings, the things concealed under the veil would have interacted and produced observable phenomes. 

\begin{Def}
 A veil $(P,\Phi)$ is said to sustain generative effects if $\Phi(s \vee s') \neq \Phi (s) \vee \Phi(s')$ for some $s$ and $s'$.
\end{Def}
Different veils may be defined for the same space $\system$.  Some will sustain generative effects and some will not.  For instance, both veils $(\system,\id)$ and $(\{*\},*: \system \ra \{*\})$ do not sustain generative effects at all.  The veil $(\system,\id)$ , being the identity map, hides nothing, while the veil $(\{*\},*: \system \ra \{*\})$ hides everything.  All that can be observed is explained by what is already observed.  Thus the standard intuition for systems exhibiting cascading phenomena, or contagion effects, does not stem from a property of a system.  It is rather the case that the situation admits a highly suggestive phenome and a highly suggestive veil that sustains such effects.  Those effects are thus properties of the situation.  Should we change the veil, we may either increase those effects, diminish them or even make them completely go away. Such interaction-related effects depend only on what we wish to observe.

The first property of the veil is somewhat self-explanatory. It ensures that the map respects the relation among systems and is compatible with the preorders.  The second property, is less transparent, but gives the map a \emph{generative} intuition present in cascading phenomena.  To explain the second property, we note that generative effects can be seen to arise from two situations.  We either conceal mechanisms in the systems, or we forget characteristics of the systems.  These two situations will be expounded in the next two sections.

To make the space of phenomes $P$ explicit in the paper, as done with $\system$, we will often refer to $P$ as $\phenome$.  Such a reference is mainly done in the following two sections. {\bf Similarly to $\system$, we consider $\phenome$ in this paper to be a finitely cocomplete preordered set, unless indicated otherwise.}

\section{Concealing mechanisms.}
The first source of generative effects consists of concealing mechanisms, or dynamics, in a given system.  Two systems, sharing a same phenome, may become identical once the mechanisms are concealed.  Their potential to produce effects in the phenome, while interacting with other systems, may however be different. Indeed, concealed mechanisms may play a role upon the interaction of systems.

To conceal (or destroy) mechanisms in a system, we require a map $\kappa: \system \ra \system$ satisfying:
\begin{itemize} \setlength\itemsep{0em}
  \item[K.1.] $\kappa(s) \leq s$ for all $s$.
  \item[K.2.] If $s \leq s'$, then $\kappa(s) \leq \kappa(s')$, for all $s$ and $s'$.  
  \item[K.3.] $\kappa\kappa(s) = \kappa(s)$ for all $s$. 
\end{itemize}
First, the map $\kappa$ reduces a system to a subsystem of it.  Second, the map $\kappa$ preserves the relation among systems.  Third, the map $\kappa$ does not discard anything from a system whose mechanisms are already discarded.  An operator satisfying K.1, K.2 and K.3 are usually termed kernel operators. 

The operator $\kappa$ can be intuitively expected to sustain generative effects whenever $\kappa( s\vee s') \neq \kappa(s) \vee \kappa(s')$ for some $s$ and $s'$. In such a case, the mechanisms concealed interact and produce more than what can only be produced by the phenomes. Put differently, the discarded mechanisms have a role to play in the interaction of systems with respect to our simplistic view, as phenomes, of the systems.

Let us define $\phenome \subseteq \system$ to be the set of fixed-points $\{s: \kappa(s) = s\}$ of $\kappa$.  Then by K.3, we get $\kappa(\system) = \phenome$.  We may then define $\pi : \system \ra \phenome$ such that $\pi(s) = \kappa(s)$. Every kernel operator on $\system$ thus gives rise to a surjective veil:

\begin{Pro}
 The map $\pi$ is surjective and order-preserving, and for every $p \in \phenome$, the set $\{s : p \leq \pi(s) \}$ has a minimum element.
\end{Pro}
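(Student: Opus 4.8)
The plan is to verify the three claims in turn, leaning on the three kernel axioms K.1--K.3 and the definition $\phenome = \{s : \kappa(s) = s\}$. Throughout I will use that $\pi$ is literally $\kappa$ with its codomain restricted to its image, so every computation reduces to a statement about $\kappa$.

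First I would dispatch \emph{surjectivity}. Since $\kappa(\system) = \phenome$ by K.3, and $\pi$ agrees with $\kappa$ on points, every $p \in \phenome$ is hit. Concretely, for $p \in \phenome$ we have $\kappa(p) = p$ by definition of the fixed-point set, so $\pi(p) = p$; hence $p$ lies in the image of $\pi$. Next, \emph{order-preservation} is immediate from K.2: if $s \leq s'$ then $\kappa(s) \leq \kappa(s')$, which is exactly $\pi(s) \leq \pi(s')$. Neither of these steps requires more than unwinding definitions.

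The only step with any content is the existence of a minimum element of $\{s : p \leq \pi(s)\}$ for a fixed $p \in \phenome$. My claim is that $p$ \emph{itself} is this minimum, and I would prove it in two moves. Membership: because $p$ is a fixed point, $\pi(p) = \kappa(p) = p$, so $p \leq \pi(p)$ and thus $p \in \{s : p \leq \pi(s)\}$. Lower bound: for any $s$ with $p \leq \pi(s) = \kappa(s)$, axiom K.1 gives $\kappa(s) \leq s$, so by transitivity $p \leq \kappa(s) \leq s$, i.e. $p \leq s$. Hence $p$ is both an element of the set and a lower bound for it, making it the minimum.

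The main (and genuinely mild) obstacle is simply guessing the right candidate for the minimum, namely $p$, and then recognizing that the two inequalities needed are precisely the two \emph{directions} packaged into a kernel operator: K.1 supplies $\kappa(s) \leq s$ to push $p$ below an arbitrary competitor $s$, while the fixed-point property $\kappa(p) = p$ certifies that $p$ is admissible. Antisymmetry of $\leq$ (guaranteed since $\system$ is finitely cocomplete, by Proposition \ref{pro:antisymmetric}) ensures the minimum is unique, so no further work is needed.
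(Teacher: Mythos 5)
Your proof is correct and follows essentially the same route as the paper: surjectivity from $\pi(p)=\kappa(p)=p$ on fixed points, order-preservation from K.2, and $p$ itself as the minimum of $\{s : p \leq \pi(s)\}$. You simply make explicit the two inequalities (membership via the fixed-point property, lower bound via K.1) that the paper leaves to the reader.
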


\begin{proof}
 As $\pi(p) = \kappa(p) = p$ for every $p \in \phenome$, the map $\pi$ is surjective. The map $\pi$ is clearly order-preserving.  Finally, the set $\{s : p \leq \pi(s) \}$ has $p$ itself as a minimal element.
\end{proof}
Conversely, every surjective veil induces a kernel operator on $\system$:

\begin{Pro}\label{Pro:KernelVeilConverse}
  If $\pi: \system \ra \phenome$ is a surjective order-preserving map such that $\{s : p \leq \pi(s) \}$ has a minimum element for every $p$, then there exists a unique injective order-preserving map $i: \phenome \ra \system$ such that $\pi i$ is the identity map on $\phenome$, and $i\pi$ is a kernel operator on $\system$.
\end{Pro}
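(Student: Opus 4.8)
The plan is to construct $i$ explicitly from the assumed minima and then verify each required property in turn, finishing with uniqueness. For each $p \in \phenome$, the hypothesis guarantees that $\{s : p \leq \pi(s)\}$ has a minimum element, which is unique since $\system$ is a poset; I would \emph{define} $i(p)$ to be exactly this minimum. The entire argument then reduces to unwinding the defining property of $i(p)$: we have $p \leq \pi(i(p))$, and $i(p) \leq s$ for every $s$ with $p \leq \pi(s)$.

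First I would establish $\pi i = \id$. The inequality $p \leq \pi(i(p))$ is immediate, since $i(p)$ lies in its own defining set. For the reverse inequality, this is where \textbf{surjectivity of $\pi$ is essential}: choosing $s$ with $\pi(s) = p$ places $s$ in the defining set (as $p \leq p = \pi(s)$), so $i(p) \leq s$, and order-preservation of $\pi$ gives $\pi(i(p)) \leq \pi(s) = p$. Antisymmetry of $\phenome$ (which holds because $\phenome$ is finitely cocomplete, by Proposition \ref{pro:antisymmetric}) then yields $\pi(i(p)) = p$. Injectivity of $i$ follows at once, since $i(p) = i(p')$ forces $p = \pi i(p) = \pi i(p') = p'$. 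Order-preservation of $i$ comes from the elementary observation that $p \leq p'$ makes $\{s : p' \leq \pi(s)\}$ a subset of $\{s : p \leq \pi(s)\}$, so $i(p)$, being the minimum of the larger set, lies below the element $i(p')$ of that set, i.e. $i(p) \leq i(p')$.

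Next I would check that $i\pi$ is a kernel operator. Property K.2 is automatic, being a composite of order-preserving maps. For K.1, the element $s$ satisfies $\pi(s) \leq \pi(s)$, so $s$ lies in the defining set of $i(\pi(s))$, and minimality gives $i\pi(s) \leq s$. Property K.3 is purely formal: using $\pi i = \id$ already proved, $(i\pi)(i\pi) = i(\pi i)\pi = i\pi$, so $i\pi$ is idempotent.

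The step I expect to be the crux is \emph{uniqueness}. Suppose $j$ is another map with $\pi j = \id$ and $j\pi$ a kernel operator. From $\pi j = \id$ I get $p \leq \pi(j(p))$, so $j(p)$ lies in the defining set of $i(p)$, whence $i(p) \leq j(p)$. For the opposite inequality I would feed $i(p)$ into the kernel property of $j\pi$: since $j\pi(t) \leq t$ for all $t$, taking $t = i(p)$ and using $\pi(i(p)) = p$ gives $j(p) = j\pi(i(p)) \leq i(p)$. Antisymmetry of $\system$ then forces $j(p) = i(p)$. The subtlety here is that uniqueness genuinely requires \emph{both} constraints pushing against each other to pin down $j(p)$ from two sides: $\pi j = \id$ bounds it below by $i(p)$, and the kernel property of $j\pi$ bounds it above by $i(p)$, and neither alone suffices.
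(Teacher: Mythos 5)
Your proof is correct and uses the same construction as the paper: define $i(p)$ to be the minimum of $\{s : p \leq \pi(s)\}$, use surjectivity of $\pi$ to get $\pi i = \id$, and verify K.1--K.3 directly (the paper leaves these as "easily checked"; your verifications are the intended ones). The one place you genuinely diverge is uniqueness. The paper disposes of it by citing Proposition \ref{Pro:Galois}(i), i.e.\ by appealing to the fact that the map satisfying the adjunction property $i(p) \leq s \iff p \leq \pi(s)$ is unique; to use that citation one must first argue that any $j$ with $\pi j = \id$ and $j\pi$ a kernel operator satisfies this adjunction property, a step the paper leaves implicit. You instead pin $j(p)$ down by a direct two-sided squeeze: $\pi j = \id$ places $j(p)$ in the defining set of $i(p)$, giving $i(p) \leq j(p)$, while K.1 of $j\pi$ applied at $t = i(p)$, combined with $\pi i = \id$, gives $j(p) \leq i(p)$; antisymmetry (correctly justified via Proposition \ref{pro:antisymmetric}) finishes it. Your route is more self-contained and slightly sharper, since it reveals that uniqueness needs only $\pi j = \id$ and the contractive property K.1 of $j\pi$ --- not injectivity, order-preservation, or idempotence of $j$. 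What the paper's route buys instead is economy and context: it situates the statement inside the Galois-connection machinery that the rest of the paper is built on, where $i$ is just the left adjoint of $\pi$.
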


\begin{proof}
 For every $p \in \phenome$ define $i(p)$ to be the minimum element of $\{s: p \leq \pi(s)\}$.  As $\pi$ is surjective, it follows that $\pi i$ is the identity. The map $i$ is then injective. The map $i\pi$ is a kernel operator as $i\pi(s) = \min\{s' : \pi(s) \leq \pi(s')\}$.  The requirements K.i can then be easily checked. Uniqueness of $i$ follows from Proposition \ref{Pro:Galois}(i).
\end{proof}
Finally, whether or not generative effects are sustained by the veil $\pi$, depends on the properties of the kernel operator $\kappa$.

\begin{Pro} If $s, s' \in \system$, then:
  \[\pi( s\vee_{\system} s') \neq \pi(s) \vee_{\phenome} \pi(s') \text{ iff } \kappa( s\vee_{\system} s') \neq \kappa(s) \vee_{\system} \kappa(s').\]
\end{Pro}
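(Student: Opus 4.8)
The plan is to reduce the claimed equivalence to a single structural fact about the fixed-point poset $\phenome$: that binary joins there are computed exactly as in the ambient $\system$. Once this is established, both sides of the asserted "iff" become \emph{literally the same inequality}, so the equivalence holds trivially rather than requiring a genuine two-directional argument. The key point to keep straight is that $\pi(s)\vee_{\phenome}\pi(s')$ is a join taken inside $\phenome$, while $\kappa(s)\vee_{\system}\kappa(s')$ is a join taken inside $\system$; a priori these could differ, and disentangling them is the whole matter.

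First I would show that $\phenome$ is closed under the binary join of $\system$, i.e.\ that $p\vee_{\system}p'\in\phenome$ whenever $p,p'\in\phenome$. From $p\leq p\vee_{\system}p'$ and K.2 we get $p=\kappa(p)\leq\kappa(p\vee_{\system}p')$, and symmetrically $p'\leq\kappa(p\vee_{\system}p')$, so $p\vee_{\system}p'\leq\kappa(p\vee_{\system}p')$. Combined with the reverse inequality furnished by K.1, this yields $\kappa(p\vee_{\system}p')=p\vee_{\system}p'$, whence $p\vee_{\system}p'\in\phenome$. I would then upgrade this to the equality of joins $p\vee_{\phenome}p'=p\vee_{\system}p'$: since the order on $\phenome$ is the restriction of that on $\system$, the element $p\vee_{\system}p'$ just shown to lie in $\phenome$ is an upper bound of $p,p'$ within $\phenome$, while any upper bound of $p,p'$ in $\phenome$ is a fortiori an upper bound in $\system$ and hence dominates $p\vee_{\system}p'$; so $p\vee_{\system}p'$ is the least upper bound in $\phenome$ too.

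Finally I would apply this with $p=\pi(s)=\kappa(s)$ and $p'=\pi(s')=\kappa(s')$, both of which lie in $\phenome$, to obtain $\pi(s)\vee_{\phenome}\pi(s')=\kappa(s)\vee_{\system}\kappa(s')$. Since $\pi(s\vee_{\system}s')=\kappa(s\vee_{\system}s')$ by the very definition of $\pi$ as the corestriction of $\kappa$, the inequality $\pi(s\vee_{\system}s')\neq\pi(s)\vee_{\phenome}\pi(s')$ and the inequality $\kappa(s\vee_{\system}s')\neq\kappa(s)\vee_{\system}\kappa(s')$ are one and the same assertion, and the equivalence follows at once.

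I expect the only real content to be the step identifying $\vee_{\phenome}$ with $\vee_{\system}$ on fixed points, which is where a reader might pause: this is precisely the characteristic feature of kernel operators (dual to closure operators), for which joins are inherited from the ambient order while meets in general are not. Everything else is unwinding the definitions of $\pi$ and $\phenome$.
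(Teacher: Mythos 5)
Your proof is correct and takes essentially the same route as the paper's: both arguments establish, via K.1 and K.2, that the fixed-point set $\phenome$ is closed under $\vee_{\system}$, conclude that joins of fixed points computed in $\phenome$ and in $\system$ coincide, and thereby reduce the two inequalities to literally the same assertion. The paper's proof is simply a terser statement of this argument (applied to the fixed points $\kappa(s)$ and $\kappa(s')$), so no further comparison is needed.
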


\begin{proof}
  If $s$ and $s'$ are fixed-points of $\kappa$, then their join in $\system$ coincides with their join in $\phenome$.  Indeed, if $\kappa(s)=s$ and $\kappa(s')=s'$, then $\kappa(s \vee s') = s \vee s'$ by K.1 and K.2.
\end{proof}
We next provide some example situations of concealing mechanisms.

\subsection{Contagion and deduction systems.}

We return to our contagion example.  Recall that a system corresponds to a map $f: \Power\Sigma \ra \Power\Sigma$ satisfying:
  \begin{itemize}
    \item[A.1.] $S \subseteq f(S)$ for all $S$.
    \item[A.2.] $f(S) \subseteq f(S')$ if $S\subseteq S'$ for all $S$ and $S'$.
    \item[A.3.] $f f(S) = f(S)$ for all $S$.
  \end{itemize}
  The poset $\system$ corresponds to the set of maps satisfying A.1, A.2 and A.3, ordered by $f \leq g$ if $f(S) \leq g(S)$ for all $S$.  Every pair $f, g \in \system$ admits a least upper-bound $f \vee g \in \system$.  The poset $\phenome$ corresponds to $\Power\Sigma$ ordered by inclusion.  Two sets in $\Power\Sigma$ admit their set-union as the least upper-bound. We then define $\pi: \system \ra \phenome$ that sends $f$ to its least fixed-point.
  \begin{Pro}
    The map $\pi$ is a surjective veil.
  \end{Pro}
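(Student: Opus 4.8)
The plan is to produce an explicit formula for $\pi$ first, and then read off surjectivity together with the two veil axioms V.1 and V.2. The key preliminary observation is that the least fixed-point of $f$ is simply $f(\emptyset)$. Indeed, by A.1 we have $\emptyset \subseteq f(\emptyset)$, and by A.3 the set $f(\emptyset)$ is itself a fixed-point; meanwhile, for any fixed-point $S$ of $f$, monotonicity A.2 gives $f(\emptyset) \subseteq f(S) = S$. (That a least fixed-point exists at all is already guaranteed by Proposition \ref{pro:CompLattice}, which shows $\fix(f)$ is a complete lattice.) Hence $\pi(f) = f(\emptyset)$, and every subsequent check reduces to a short computation.

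For surjectivity and for V.2, a single family of systems does all the work: for each $S \in \Power\Sigma$ set $f_S : T \mapsto T \cup S$. First I would verify $f_S \in \system$: axioms A.1 and A.2 are immediate, and A.3 holds because $(T \cup S) \cup S = T \cup S$. Since $\pi(f_S) = f_S(\emptyset) = S$, the map $\pi$ attains every element of $\phenome = \Power\Sigma$, which gives surjectivity. Axiom V.1 is then immediate from the formula for $\pi$: if $f \leq g$ in $\system$, then by definition of the order $f(\emptyset) \subseteq g(\emptyset)$, that is, $\pi(f) \subseteq \pi(g)$.

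For V.2 I claim that $f_S$ is the minimum of the set $\{f : S \subseteq \pi(f)\}$. It lies in this set because $\pi(f_S) = S$. If $g$ is any system with $S \subseteq \pi(g) = g(\emptyset)$, then for every $T$ we have $T \subseteq g(T)$ by A.1 and $S \subseteq g(\emptyset) \subseteq g(T)$ by A.2 (using $\emptyset \subseteq T$), so $f_S(T) = T \cup S \subseteq g(T)$; hence $f_S \leq g$. Uniqueness of the minimum follows from the antisymmetry of $\system$, so $f_S$ is the \emph{unique} minimum and V.2 holds. Together with V.1 and surjectivity, this shows $\pi$ is a surjective veil.

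The only step requiring genuine care is V.2, where one must confirm that the obvious witness $f_S$ really is below every competitor $g$; this hinges on combining A.1 and A.2 to push $S$ inside $g(T)$ through $g(\emptyset)$, and is the crux of the argument. Everything else is routine. It is worth remarking, finally, that this $\pi$ fits the kernel-operator framework of the present section: setting $i(S) = f_S$ recovers, via Proposition \ref{Pro:KernelVeilConverse}, the kernel operator $\kappa(f) : T \mapsto T \cup f(\emptyset)$, which is precisely the system obtained from $f$ by concealing its dynamics and retaining only its least fixed-point.
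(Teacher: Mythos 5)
Your proof is correct and takes essentially the same route as the paper's: both arguments use the family of systems $T \mapsto T \cup S$ as the single witness establishing surjectivity and serving as the minimum element of $\{f : S \subseteq \pi(f)\}$ required by V.2. Your explicit identification $\pi(f) = f(\emptyset)$ just makes concrete the verifications the paper leaves implicit, so it adds detail rather than a different idea.
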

  \begin{proof}
   The map $\pi$ is surjective as it maps every system $- \mapsto - \cup S$ into $S$. The set $\{s : S \leq \pi(s)\}$ also has $- \mapsto - \cup S$ as a minimum element.  It is also clearly order-preserving.
  \end{proof} 
If we define $i : \phenome \ra \system$ to be the map $S \mapsto - \cup S$, then $i\pi$ yields a kernel operator.  This kernel operator can be interpreted as destroying all the potential a system has to infect additional nodes when interacting with other systems.  The kernel operator yields the simplest system (with respect to the dynamics) that can account for the infected nodes at the end.

 \subsubsection{A dual perspective.}
  
Let $\LL$ denote the collection of subsets of $\Power\Sigma$ such that (i) $\Power\Sigma \in \LL$ and (ii) if $A, B \in \LL$ then $A \cap B \in \LL$. The sets in $\LL$ are sometimes called \emph{Moore families} or \emph{closure systems}.  The set $\LL$ may then be ordered by reverse inclusion, to yield a lattice whose join is set-intersection.

\begin{Thm}\label{Thm:Iso}
  The map $f \mapsto \{S : f(S) = S\}$ defines an isomorphism between $\<\system , \leq , \vee\>$ and $\<\LL, \supseteq, \cap\>$
\end{Thm}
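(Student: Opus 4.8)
The plan is to exhibit an explicit two-sided inverse to the map $\Theta\colon f \mapsto \fix(f)$ and then check that $\Theta$ and its inverse reverse the set-inclusion order, so that $\Theta$ becomes an isomorphism $\langle \system,\leq,\vee\rangle \to \langle \LL,\supseteq,\cap\rangle$. First I would verify that $\Theta$ is well defined, i.e.\ that $\fix(f)$ is a Moore family: it contains $\Sigma$ (since $\Sigma \subseteq f(\Sigma) \subseteq \Sigma$ by A.1) and is closed under intersection. Both facts are exactly what Proposition \ref{pro:CompLattice} supplies, so no new work is needed there.

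For the inverse, given a Moore family $\mathcal{A} \in \LL$ I would define $f_\mathcal{A}\colon \Power\Sigma \to \Power\Sigma$ by $f_\mathcal{A}(S) = \bigcap\{A \in \mathcal{A} : S \subseteq A\}$, the smallest member of $\mathcal{A}$ containing $S$. This is well defined because $\Sigma \in \mathcal{A}$ makes the indexing set nonempty and closure under $\cap$ keeps the intersection inside $\mathcal{A}$. Checking A.1--A.3 for $f_\mathcal{A}$ is routine: extensivity and monotonicity are immediate from the description as an intersection, and idempotence follows because $f_\mathcal{A}(S)$ already lies in $\mathcal{A}$. Hence $f_\mathcal{A} \in \system$, and one sees directly that $\fix(f_\mathcal{A}) = \mathcal{A}$, since members of $\mathcal{A}$ are their own smallest containing member while any fixed point of $f_\mathcal{A}$ lies in $\mathcal{A}$ by construction. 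This already yields surjectivity of $\Theta$.

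The crux is the recovery identity $f(S) = \bigcap\{A \in \fix(f) : S \subseteq A\}$ for every $f \in \system$. I would prove it by showing $f(S)$ is the least fixed point containing $S$: it is a fixed point by A.3, it contains $S$ by A.1, and if $A \in \fix(f)$ with $S \subseteq A$ then A.2 gives $f(S) \subseteq f(A) = A$. This single identity shows that $f \mapsto \fix(f)$ and $\mathcal{A} \mapsto f_\mathcal{A}$ are mutually inverse, hence $\Theta$ is a bijection. It then matches the orders as well: if $f \leq g$ and $g(A)=A$, then $A \subseteq f(A) \subseteq g(A) = A$, so $A \in \fix(f)$, giving $\fix(g) \subseteq \fix(f)$ (this is the content of P.1); conversely, feeding $\fix(g) \subseteq \fix(f)$ into the recovery identity exhibits $f(S)$ as an intersection over a larger index set than $g(S)$, whence $f(S) \subseteq g(S)$ for all $S$, i.e.\ $f \leq g$. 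Thus $f \leq g$ iff $\fix(f) \supseteq \fix(g)$, which is precisely the statement that $\Theta$ is an order-isomorphism onto $\langle \LL, \supseteq\rangle$.

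Being an order-isomorphism, $\Theta$ automatically carries joins to joins; since the join in $\langle \LL, \supseteq\rangle$ is set-intersection (the intersection of two Moore families being again a Moore family, containing $\Sigma$ and closed under $\cap$), this yields $\fix(f \vee g) = \fix(f) \cap \fix(g)$ and completes the proof. The main obstacle I anticipate is isolating and proving the recovery identity cleanly, as it is the one fact doing all the real work: it simultaneously delivers injectivity, mutual-inverseness with $f_\mathcal{A}$ (hence surjectivity), and the nontrivial reverse direction of the order correspondence. Everything else reduces to bookkeeping with A.1--A.3 and Proposition \ref{pro:CompLattice}.
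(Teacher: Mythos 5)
Your proof is correct, but note that the paper does not actually prove Theorem \ref{Thm:Iso}: its ``proof'' consists of declaring the correspondence between closure operators and Moore families to be well known and deferring the details to \cite{ADAM:AlgebraCascadeEffects}. What you have written is precisely the standard argument that such a reference would contain, organized around the right pivot: the explicit inverse $\mathcal{A} \mapsto f_{\mathcal{A}}$, the recovery identity $f(S) = \bigcap\{A \in \fix(f) : S \subseteq A\}$ (i.e., $f(S)$ is the least fixed point above $S$, by A.1--A.3), and the two-sided order correspondence $f \leq g$ if and only if $\fix(g) \subseteq \fix(f)$, from which preservation of joins is automatic because order-isomorphisms preserve all existing least upper bounds. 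One small point is worth making explicit: your definition of $f_{\mathcal{A}}$ intersects \emph{all} members of $\mathcal{A}$ containing $S$, while the paper's definition of a Moore family only requires closure under pairwise intersection; this suffices exactly because $\Sigma$, and hence $\mathcal{A}$, is finite, so the intersection in question is a finite one and lies in $\mathcal{A}$ by induction. Similarly, the join in $\<\LL,\supseteq\>$ being set-intersection uses that the intersection of two Moore families is again a Moore family, which you do verify. With the finiteness remark added, your argument is complete and self-contained, which is arguably more useful than the paper's citation.
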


\begin{proof}
 Such a fact is well known regarding closure operators.  We refer the reader to \cite{ADAM:AlgebraCascadeEffects} for the details.
\end{proof}
As such, every system can be uniquely identified with its set of fixed-points.  Interaction of systems then consists of intersecting the fixed-points. As a consequence:

\begin{Cor}
  $\pi(f)$ is the intersection of all the fixed-points of $f$. \qed
\end{Cor}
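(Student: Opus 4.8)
The plan is to unwind the definition of $\pi$ and then reduce everything to Proposition \ref{pro:CompLattice}. Recall that $\pi$ was defined to send a system $f$ to its least fixed-point, i.e., to the minimum element of the poset $\<\fix(f),\subseteq\>$, which exists because Proposition \ref{pro:CompLattice} establishes that $\fix(f)$ is a complete lattice. So the entire content of the corollary is the identification of this minimum element with $\bigcap \fix(f)$, the set-intersection of \emph{all} the fixed-points of $f$.

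First I would argue that $T := \bigcap \fix(f)$ is itself a fixed-point of $f$. Proposition \ref{pro:CompLattice} already supplies the pairwise statement: if $S, S' \in \fix(f)$ then $S \cap S' \in \fix(f)$. Since $\Sigma$ is finite, $\Power\Sigma$ is finite and hence $\fix(f)$ is a finite collection, so $T$ is a finite intersection of fixed-points; an easy induction on the number of sets being intersected, using the pairwise closure as the inductive step, then yields $T \in \fix(f)$. (Note that $\fix(f)$ is nonempty, as $\Sigma \in \fix(f)$ by A.1, so the intersection is genuinely taken over a nonempty family and $T$ is well defined.)

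Next I would check that $T$ is the minimum of $\<\fix(f),\subseteq\>$. By construction $T = \bigcap \fix(f) \subseteq S$ for every $S \in \fix(f)$, so $T$ is a lower bound for the whole lattice; and since I have just shown $T \in \fix(f)$, it is a fixed-point lying below every fixed-point, hence the least one. By the uniqueness of the minimum element, $\pi(f) = T = \bigcap \fix(f)$, which is exactly the claim.

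I do not expect any real obstacle here, which is consistent with the statement being tagged as a corollary: the substantive work was already done in Proposition \ref{pro:CompLattice} (closure of $\fix(f)$ under intersection and completeness of the lattice). The only point requiring a word of care is the passage from the \emph{pairwise} intersection-closure to closure under the intersection of the entire family; this is precisely where finiteness of $\Sigma$ is used, and it is exactly the same finiteness phenomenon invoked in Proposition \ref{Pro:Conv} and in Proposition \ref{pro:CompLattice}'s final remark that $\fix(f)$ is complete because it is finite. Alternatively, one can bypass the induction entirely by citing Proposition \ref{pro:CompLattice} directly, since it asserts that the meet operation in $\fix(f)$ is set-intersection, whence the bottom element of the finite complete lattice $\fix(f)$ is the meet of all its elements, i.e.\ $\bigcap \fix(f)$.
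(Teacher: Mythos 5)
Your proof is correct and matches the paper's reasoning: the paper states this corollary with no written proof (it is presented as an immediate consequence of the intersection-closure of $\fix(f)$, already noted after Proposition \ref{pro:CompLattice} and packaged in Theorem \ref{Thm:Iso}), and your argument—$\bigcap \fix(f)$ is a fixed-point by pairwise closure plus finiteness, hence the minimum, hence equal to $\pi(f)$—is exactly the detail the paper leaves implicit. Nothing is missing; the finiteness point you flag is the right one to be careful about.
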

For more details on the properties of such systems, we refer the reader to \cite{ADAM:AlgebraCascadeEffects}.  This line of example first aimed at understanding the mathematical structure underlying models of diffusion of behavior commonly studied in the social sciences.  The setup there consists of a population of interacting agents.  In a societal setting, the agents may refer to individuals.  The interaction of the agents affect their behaviors or opinions.  The goal is to understand the spread of a certain behavior among agents given certain interaction patterns.  Threshold models of behaviors (captured by M.0, M.1, M.2 and M.3 in \cite{ADAM:AlgebraCascadeEffects}) have appeared in the work of Granovetter \cite{GRA1978}, and more recently in \cite{MOR2000}.  Such models are key models in the literature, and have been later considered by computer scientists, see. e.g., \cite{KLE2007} for an overview.

\subsection{Relations and projections.}

A relation $R$ between sets $A$ and $B$ is a subset of $A \times B$. The set of relations between $A$ and $B$, ordered appropriately, admits two canonical veils.

First, define $\pi : (\Power{A \times B},\subseteq) \ra (\Power A,\subseteq)$ to send $R$ to $\{a : (a,b) \in R \text{ for all }b\}$. The map $\pi$ is a surjective veil. Indeed, the set $\{ R : S \subseteq \pi R \}$ has $S \times B$ as a minimum element.  This veil sustains generative effects as generally:
\begin{equation*}
  \pi( R \cup R' ) \neq \pi(R) \cup \pi(R')
\end{equation*}

Second, define $\pi' : (\Power{A \times B},\supseteq) \ra (\Power A,\supseteq)$  that sends $R$ to $\{a : (a,b) \in R \text{ for some }b\}$. The map $\pi'$ is also a surjective veil. Indeed, the set $\{ R : S \supseteq \pi' R \}$ has $S \times B$ as a \emph{minimum} element with respect to $\supseteq$. This veil also sustains generative effects as generally:
\begin{equation*}
  \pi( R \cap R' ) \neq \pi(R) \cap \pi(R')
\end{equation*}
The posets $(\Power{A \times B},\subseteq)$ and $(\Power{A \times B},\supseteq)$ of systems are dual to each other. The join corresponds to set-union in the first, whereas it corresponds to set-intersection in the second.  The veil $\pi'$ can be further interpreted in systems-theoretic situations, through the next example.

\subsection{Subsystem behavior or concealing parameters.}\label{sec:behApp}

In the behavioral approach to systems theory, a system is viewed as a pair of sets $(\U,\B)$. The set $\U$---termed, the \emph{universum}---depicts the set of all possible outcomes or trajectories.  The set $\B$ is a subset of $\U$---termed the \emph{behavior}---that defines which outcomes are deemed allowable by the dynamics of the system.  The sets $\U$ and $\B$ can be equipped with various mathematical structures to suit various need.  We will however, without loss of generality, only be concerned with sets, without any additional structure. Interconnecting two systems $(\U,\B)$ and $(\U,\B')$ yields the system $(\U,\B\cap\B')$.  Indeed, the interconnected systems keeps the trajectories that are deemed possible by the separate systems.  We refer the reader to \cite{POL1998} and \cite{WIL2007} for more details on the behavioral approach.

In this subsection, we are interested in the behavior of a subsystem of a system $(\U,\B)$ as some changes are incurred into the greater system.  A change in this setup is depicted as another system $(\U,\C)$. Incurring the change then consists of obtaining the system $(\U,\B\cap\C)$. To define a subsystem, we project $\U$ onto a smaller universum.  For instance, let us suppose $\U = \bS \times \bS'$ is a product space.  Projecting $\B$ canonically onto the universum $\bS$ yields the behavior of the subsystem of $(\U,\B)$ living in the universum $\bS$.

We thus define $\system$ to be $\Power\U$, ordered by reverse inclusion.  Two behaviors $\B$ and $\B'$ then admit a least upper-bound (with respect to the reverse inclusion) corresponding to $\B \cap \B'$.  The space $\phenome$ of phenomes is $\Power\bS$, also ordered by reverse inclusion.  The canonical projection $p: \U \ra \bS$ then lifts to a map $\pi : \system \ra \phenome$ sending $\B$ to $p(\B)$.

\begin{Pro}
 The map $\pi$ is a surjective veil.
\end{Pro}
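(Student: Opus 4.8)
The plan is to unwind the two reverse-inclusion orderings and then exhibit the required minimum element explicitly via the image--preimage adjunction for the projection $p$. Since both $\system = 2^\U$ and $\phenome = 2^\bS$ are ordered by reverse inclusion, the relation $\B \leq \B'$ in $\system$ means $\B \supseteq \B'$, and similarly in $\phenome$; I would keep this dictionary fixed throughout, as the direction of these orders is the only genuine source of confusion. With it in hand, V.1 is immediate: if $\B \leq \B'$, i.e. $\B \supseteq \B'$, then taking direct images gives $p(\B) \supseteq p(\B')$, which is exactly $\pi(\B) \leq \pi(\B')$ in $\phenome$. Monotonicity of the direct image thus yields order-preservation.

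Next I would verify V.2. Fixing a phenome $T \subseteq \bS$ and unwinding the reverse order, the condition $T \leq \pi(\B)$ reads $p(\B) \subseteq T$, so
\[
  \{\B : T \leq \pi(\B)\} = \{\B \subseteq \U : p(\B) \subseteq T\}.
\]
The key step is the adjunction $p(\B) \subseteq T \iff \B \subseteq p^{-1}(T)$, which rewrites this family as $\{\B : \B \subseteq p^{-1}(T)\}$. Under inclusion this family has the largest element $p^{-1}(T) = T \times \bS'$, and since $p(p^{-1}(T)) \subseteq T$ always holds, $p^{-1}(T)$ really does belong to the family. Because largest-under-inclusion means minimum-under-reverse-inclusion, $p^{-1}(T)$ is the desired minimum of $\{\B : T \leq \pi(\B)\}$ in $\system$; it is unique since reverse inclusion on a powerset is antisymmetric. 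This establishes V.2.

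Finally, surjectivity follows from $p$ being onto: for every $T \subseteq \bS$ we have $\pi(p^{-1}(T)) = p(p^{-1}(T)) = T$, so the minimum element produced above doubles as a preimage witnessing that $\pi$ hits every phenome. There is essentially no hard obstacle here; the whole argument rests on the single conceptual ingredient --- the image--preimage adjunction, which is precisely the Galois connection underlying the veil --- and the one thing to be careful about is consistently translating both reverse-inclusion orders so that ``minimum'' in $\system$ and $\phenome$ is read as ``largest set under ordinary inclusion.''
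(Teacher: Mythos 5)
Your proposal is correct and takes essentially the same route as the paper: the paper's proof also notes that $\pi$ is clearly surjective and order-preserving, and exhibits $p^{-1}(S)$ (i.e., $S \times \bS'$) as the minimum of $\{\B : S \supseteq \pi(\B)\}$ with respect to $\supseteq$. You simply spell out the image--preimage adjunction and the reverse-inclusion bookkeeping that the paper leaves implicit.
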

\begin{proof}
  The map $\pi$ is clearly surjective and order-preserving.  The set $\{ B : S \supseteq \pi(B) \}$ has $p^{-1}(S)$ as a \emph{minimum} element with respect to $\supseteq$.
\end{proof}
Generative effects are typically sustained as:
\begin{equation*}
  \pi( \B \cap \C ) \neq \pi(\B) \cap \pi(\C).
\end{equation*}
Although the changes in $\C$ are not directly applied onto the subsystem, they do affect the subsystem through the other parts that are now concealed. The veil $\pi$ induces a map $i : \phenome \ra \system$ sending set $S$ to $S \times \bS'$.  The map $i\pi$ is then a kernel operator that destroys all the potential for effects to occur due to restrictions of the system in $\bS'$.

\subsubsection{Concealing parameters.}

For an additional interpretation of the example, let us suppose $\U = \bM\times \bL$.  The subsystem in play (whose universum corresponds to $\bM$) could represent manifest variables that are observable, and the rest (whose universum corresponds $\bL$) could represent latent variables or parameters that aid \emph{internally} in the workings of the system.  As such, changes in the internal parameters of a system affect the manifest variables.

\subsection{Concealing interdependence.}

The previous example may be further enhanced to understand interdependence between components of an interconnected system.  Let us again suppose that $\U = \bS \times \bS'$.  The decomposition yields two surjective maps $p : \U \ra \bS$ and $p' : \U \ra \bS'$.  The maps lift to separate veils $\pi : \system \ra \Power\bS$ and $\pi' : \system \ra \Power{\bS'}$ sending a behavior $\B$ to $p(\B)$ and $p'(\B')$, respectively.

We then define $\phenome$ to be $\Power\bS \times \Power{\bS'}$.  An element of $\phenome$ is thus a set $S \times S'$ with $S\subseteq \bS$ and $S' \subseteq \bS'$.  We finally define $\pi : \system \ra \phenome$ to be $p \times p'$ sending $\B$ to $p(\B) \times p(\B')$.

\begin{Pro}
  The map $\pi$ is a surjective veil.
\end{Pro}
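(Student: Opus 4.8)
The plan is to follow the template set by the two preceding propositions, verifying in turn that $\phenome$ is finitely cocomplete and that $\pi$ satisfies V.1, V.2 and is surjective. I would adopt on $\phenome = \Power\bS \times \Power{\bS'}$ the componentwise reverse inclusion, $(S_1,T_1) \leq (S_2,T_2)$ iff $S_1 \supseteq S_2$ and $T_1 \supseteq T_2$, which is forced by V.1 and the ambient conventions. Under this order the join is the componentwise intersection $(S_1 \cap S_2,\, T_1 \cap T_2)$ and the empty join is the minimum $(\bS,\bS')$; since $\Power\bS$ and $\Power{\bS'}$ are each finitely cocomplete under reverse inclusion, so is their product. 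The identification of a pair $(S,T)$ with the rectangle $S \times T \subseteq \U$ will be invoked only at the surjectivity step, where the single genuine subtlety lies.

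For V.1 I would simply note that $\B \leq \B'$ means $\B \supseteq \B'$, whence monotonicity of images gives $p(\B) \supseteq p(\B')$ and $p'(\B) \supseteq p'(\B')$, i.e. $\pi(\B) \leq \pi(\B')$. For V.2 I would fix $(S,T)$ and examine $M := \{\B : (S,T) \leq \pi(\B)\} = \{\B : p(\B) \subseteq S \text{ and } p'(\B) \subseteq T\}$; since the order is reverse inclusion, a minimum of $M$ is a $\subseteq$-largest member. Every $\B \in M$ satisfies $\B \subseteq p^{-1}(S) \cap p'^{-1}(T)$, and $p^{-1}(S) \cap p'^{-1}(T) = (S \times \bS') \cap (\bS \times T) = S \times T$, while conversely $p(S \times T) \subseteq S$ and $p'(S \times T) \subseteq T$ show $S \times T \in M$. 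Hence $S \times T = p^{-1}(S) \cap p'^{-1}(T)$ is the unique minimum of $M$, establishing V.2 and exhibiting the induced lift $i$ as $(S,T) \mapsto S \times T$, whose composite $i\pi$ is the announced kernel operator by Proposition \ref{Pro:KernelVeilConverse}.

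The main obstacle is surjectivity, where the naive reading of $\phenome$ as the full product breaks down: if $S = \emptyset$ but $T \neq \emptyset$, no behavior can satisfy $p(\B) = \emptyset$ (forcing $\B = \emptyset$) together with $p'(\B) = T \neq \emptyset$ (forcing $\B \neq \emptyset$). This is precisely why $\phenome$ must be read through the rectangle identification, under which every pair with an empty component collapses to the single empty rectangle $\emptyset$. With this reading I would argue that for $S,T$ both nonempty one has $\pi(S \times T) = (p(S\times T),\, p'(S\times T)) = (S,T)$, so all nonempty rectangles are attained, while $\pi(\emptyset) = (\emptyset,\emptyset)$ is the collapsed empty rectangle; since the nonempty rectangles together with $\emptyset$ exhaust $\phenome$ under the identification, $\pi$ is surjective. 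I would close by stressing that this surjectivity does \emph{not} follow formally from surjectivity of the two separate veils $p$ and $p'$, since $\pi$ is their pairing $\B \mapsto (p\B, p'\B)$ rather than a product of maps fed independent inputs; it is the coupling through a single behavior $\B$ that both creates the empty-component obstruction and, via the rectangle collapse, dissolves it.
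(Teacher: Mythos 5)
Your proof is correct, and for the two veil axioms it follows the paper's own argument: V.1 is immediate from monotonicity of direct images under reverse inclusion, and for V.2 you identify the minimum of $\{\B : (S,T) \leq \pi(\B)\}$ as $p^{-1}(S) \cap p'^{-1}(T) = S \times T$, which is exactly the paper's proof. Where you genuinely go beyond the paper is surjectivity. The paper dispatches it with ``clearly surjective,'' but as you observe this is false if $\phenome$ is read literally as the product $\Power\bS \times \Power{\bS'}$ with $\pi = (p,p')$: a pair such as $(\emptyset, T)$ with $T \neq \emptyset$ is never attained, since $p(\B) = \emptyset$ forces $\B = \emptyset$ while $p'(\B) = T \neq \emptyset$ forces $\B \neq \emptyset$. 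The paper is implicitly working under the identification of a pair $(S,S')$ with the rectangle $S \times S' \subseteq \U$ --- this is what its sentence ``an element of $\phenome$ is thus a set $S \times S'$'' and the parenthetical ``(i.e., $S \times S'$)'' in its proof signal --- and under that identification all pairs with an empty component collapse to the single empty rectangle, $\pi(\B) = p(\B) \times p'(\B)$ satisfies $\pi(R) = R$ for every rectangle $R$, and surjectivity holds. Your write-up makes this reading explicit, shows it is exactly what rescues the stated proposition, and correctly notes that surjectivity of $\pi$ does not follow formally from surjectivity of the two component veils $\pi$ and $\pi'$, since the pairing couples both coordinates through a single behavior. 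This is a sharper and more honest treatment than the paper's one-line dismissal; the mathematical content of the veil verification itself is identical.
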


\begin{proof}
   The map $\pi$ is clearly surjective and order-preserving. The set $\{B : S\times S' \supseteq  p(B)\times p'(B)\}$ has $p^{-1}(S) \cap {p'}^{-1} (S')$ (i.e., $S \times S'$) as a \emph{minimum} element with respect to $\supseteq$.
\end{proof}
And indeed, generative effects are typically sustained as:
\begin{equation*}
  \pi( \B \cap \B' ) \neq \pi(\B) \cap \pi(\B)
\end{equation*}
The veil $\pi$ induces a canonical inclusion map $i : \phenome \ra \system$. The map $i\pi$ is a kernel operator that destroys any potential interdependence between the components $\bS$ and $\bS'$.


\section{Forgetting characteristics.}

The second source of generative effects consists of forgetting characteristics, or properties, from the given system.  In such a setting, the space of phenomes tends to be larger than that of systems.  Indeed, phenomes then comprise the systems in concern as well as systems non-necessarily satisfying the desired characteristic to be forgotten. Generative effects emerge from the potential of the characteristic to enhance the interconnected system.

We can forget characteristics of a system, by defining a bigger set $\phenome$ containing $\system$. Every element of $\phenome$ can be seen as a potential system that is not forced to satisfy the forgotten characteristic.  Every element of $\phenome$ can then be treated as a partial observation of a system.  Such a partial observation can then be completed into a system satisfying the desired forgotten characteristic.  To this end, we require a map $c : \phenome \ra \phenome$ such that:
\begin{equation*}
  c(p) \in \system \text{ for all } p
\end{equation*}
and:
\begin{itemize}\setlength\itemsep{0em}
  \item[C.1.] $p \leq c(p)$ for all $p$. 
  \item[C.2.] If $p \leq p'$, then $c(p) \leq c(p')$, for all $p$ and $p'$.  
  \item[C.3.] $cc(p) = c(p)$ for all $p$. 
\end{itemize}
Notice the duality between the C.i in this section and the K.i in the previous ones.  First, the operator $c$ sends a partial observation to one that \emph{contains} it.  Second, the operator preserves the relation among the observation.  Third, the operator does not modify partial observations that are already systems. Again, an operator satisfying C.1, C.2 and C.3 is usually termed a closure operator. The property C.i coincides with property A.i in the contagion situation, whenever $\phenome$ is $\Power\Sigma$ for some set $\Sigma$.

The operator $c$ can intuitively be expected to sustain generative effects whenever $c (p \vee p') \neq c(p) \vee c(p')$ for some $p$ and $p'$.  The forgotten characteristic then indeed plays a role in the interaction of systems, to enhance the resulting combined system.

Given a closure operator $c$ on $\phenome$, the set $\system$ is identified with the set of fixed-points $\{p: c(p) = p\}$. We may then define a canonical inclusion $\iota : \system \ra \phenome$.  Every closure operator gives rise to a veil:

\begin{Pro} \label{Pro:ClosureVeil}
 The map $\iota$ is injective and order-preserving, and for every $p \in \phenome$, the set $\{s : p \leq \iota(s) \}$ has a minimum element.
\end{Pro}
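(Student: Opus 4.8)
The plan is to dispose of injectivity and order-preservation immediately from the definition of $\iota$, and then to exhibit $c(p)$ as the required minimum element. Since $\system$ is identified with the fixed-point set $\{p : c(p) = p\}$ and $\iota$ is the canonical inclusion sending each fixed point to itself, injectivity is automatic. Moreover $\iota$ is order-preserving because $\system$ carries the order restricted from $\phenome$; in fact $\iota$ both preserves and reflects $\leq$, though only preservation is needed here.

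For the minimum-element claim, I would fix $p \in \phenome$ and propose $c(p)$ as the candidate. First, $c(p)$ lies in $\system$: by C.3 we have $cc(p) = c(p)$, so $c(p)$ is a fixed point of $c$. Second, $c(p)$ belongs to the set $\{s : p \leq \iota(s)\}$, since C.1 gives $p \leq c(p) = \iota(c(p))$. It then remains to verify minimality. Let $s \in \system$ satisfy $p \leq \iota(s) = s$. Applying C.2 yields $c(p) \leq c(s)$, and since $s$ is a fixed point we have $c(s) = s$; hence $c(p) \leq s$. Thus $c(p)$ lies in the set and sits below every one of its elements, so it is the minimum.

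The argument is routine and presents no real obstacle; the only point requiring attention is recognizing that the candidate minimum is $c(p)$ itself and that its minimality rests on combining monotonicity (C.2) with idempotence (C.3). This is exactly dual to the kernel-operator situation, where a kernel operator induces a surjective veil whose minimum elements are supplied directly by the fixed points (mirroring the inverse construction of Proposition~\ref{Pro:KernelVeilConverse}); here a closure operator induces an injective veil whose minimum elements are supplied instead by applying $c$.
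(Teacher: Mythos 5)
Your proof is correct and follows the same route as the paper: injectivity and order-preservation come directly from $\iota$ being the canonical inclusion of the fixed-point set, and $c(p)$ is exhibited as the minimum of $\{s : p \leq \iota(s)\}$. The paper states this last step without detail, while you supply the verification (C.3 for membership in $\system$, C.1 for membership in the set, C.2 plus the fixed-point property for minimality), which is exactly what the paper's terse proof leaves implicit.
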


\begin{proof}
 The map $\iota$ is injective by definition.  It is also clearly order-preserving.  Finally, the system $c(p)$ is the minimal element of $\{s : p \leq \iota(s) \}$.
\end{proof}
Conversely, every injective veil induces a closure operator on $P$:

\begin{Pro}\label{Pro:ClosureVeilConverse}
  If $\iota : \system \ra \phenome$ is an injective order-preserving map such that $\{s : p \leq \iota(s) \}$ has a minimum element for every $p$, then there exists a unique surjective map $q: \phenome \ra \system$ such that $q\iota$ is the identity on $\system$, and $\iota q$ is a closure operator on $\phenome$.
\end{Pro}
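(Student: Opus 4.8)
The plan is to construct $q$ explicitly as the \emph{left adjoint} of $\iota$ and then to read off every required property from the resulting Galois connection; this is exactly the order-dual of the construction in Proposition \ref{Pro:KernelVeilConverse}. First I would define $q : \phenome \ra \system$ by letting $q(p)$ be the minimum element of $\{s : p \leq \iota(s)\}$, which exists for every $p$ by hypothesis. Straight from the definition of a minimum, this map satisfies the adjunction $q(p) \leq s$ if and only if $p \leq \iota(s)$: the forward direction combines $p \leq \iota(q(p))$ with monotonicity of $\iota$, and the backward direction is minimality. In particular $q$ is order-preserving, and we obtain the unit inequality $p \leq \iota q(p)$ and the counit inequality $q\iota(s) \leq s$.

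Next I would establish that $q\iota = \id$ on $\system$. The inequality $q\iota(s) \leq s$ is the counit. For the reverse I would use the identity $\iota q\iota = \iota$: applying the unit at $p = \iota(s)$ gives $\iota(s) \leq \iota q\iota(s)$, while applying the monotone map $\iota$ to the counit gives $\iota q\iota(s) \leq \iota(s)$, so $\iota(q\iota(s)) = \iota(s)$, and injectivity of $\iota$ forces $q\iota(s) = s$. From $q\iota = \id$ surjectivity of $q$ is immediate, since each $s = q(\iota(s))$ lies in the image; moreover $\iota$ becomes order-reflecting, because $\iota(s) \leq \iota(s')$ yields $s = q\iota(s) \leq q\iota(s') = s'$ after applying the monotone map $q$.

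I would then check that $\iota q$ is a closure operator on $\phenome$ valued in $\system$. Property C.1 is the unit $p \leq \iota q(p)$; C.2 holds since $\iota q$ is a composite of order-preserving maps; and C.3, idempotency, follows because $\iota q\,\iota q = \iota(q\iota q) = \iota q$, using $q\iota q = q$, which is immediate from $q\iota = \id$. That $\iota q(p)$ always lands in (the image identified with) $\system$ is clear, being a value of $\iota$.

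For uniqueness, suppose $q'$ is any surjective map with $q'\iota = \id$ and $\iota q'$ a closure operator. Property C.1 for $\iota q'$ gives $p \leq \iota(q'(p))$, so $q(p) \leq q'(p)$ by minimality; conversely, applying the monotone operator $\iota q'$ to $p \leq \iota(q(p))$ and using $q'\iota = \id$ gives $\iota(q'(p)) \leq \iota(q(p))$, whence order-reflection of $\iota$ yields $q'(p) \leq q(p)$, and antisymmetry gives $q' = q$. Equivalently, any such $q'$ is forced to be the left adjoint of $\iota$, and one invokes uniqueness of adjoints, Proposition \ref{Pro:Galois}(i). The main obstacle is the step $q\iota = \id$: injectivity and monotonicity alone do not make $\iota$ order-reflecting, and the argument must route through the identity $\iota q\iota = \iota$ together with injectivity to recover it; once that is in hand, everything else is a formal consequence of the Galois connection.
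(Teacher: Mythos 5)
Your proposal is correct and follows essentially the same route as the paper: define $q(p)$ as the minimum of $\{s : p \leq \iota(s)\}$, deduce $q\iota = \id$ from injectivity of $\iota$ (hence surjectivity of $q$), verify the closure-operator axioms for $\iota q$, and obtain uniqueness from the adjunction property of Proposition \ref{Pro:Galois}(i). Your write-up merely makes explicit the Galois-connection bookkeeping (and the use of antisymmetry of $\phenome$, guaranteed by its standing finite cocompleteness) that the paper leaves implicit.
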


\begin{proof}
  For every $p \in \phenome$, define $q(p)$ to be the minimum element of $\{s : p \leq \iota(s) \}$.   The map $q\iota$ is the identity map as $q\iota(s)$ is the minimum element of $\{s' : \iota(s) \leq \iota(s')\}$, namely $s$ as $\iota$ is injective. The map $q$ is then surjective.  The map $\iota q$ is a closure operator as $\iota q(p)$ is the smallest element $\{\iota(s) : p \leq \iota(s)\}$.  The requirements C.i can be easily checked. Uniqueness of $q$ follows from Proposition \ref{Pro:Galois}(i).
\end{proof}
Finally, whether or not generative effects are sustained by the veil $\iota$ depends on the properties of the closure operator $c$.

\begin{Pro} If $s,s' \in \system$, then:
  \[\iota( s\vee_{\system} s') \neq \iota s \vee_{\phenome} \iota s'  \text{ iff } c (\iota s \vee_{\phenome} \iota s') \neq c (\iota s ) \vee_{\phenome} c(\iota s').\]
\end{Pro}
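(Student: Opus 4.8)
The plan is to show that both sides of the biconditional are in fact the \emph{same} inequality, once each is rewritten in a common form. Two facts drive the argument. The first is immediate: since $\system$ is identified with the fixed-points $\{p : c(p) = p\}$ and $\iota$ is the canonical inclusion, we have $c(\iota s) = \iota s$ and $c(\iota s') = \iota s'$ for all $s, s' \in \system$. Consequently $c(\iota s) \vee_{\phenome} c(\iota s') = \iota s \vee_{\phenome} \iota s'$, so the right-hand inequality $c(\iota s \vee_{\phenome} \iota s') \neq c(\iota s) \vee_{\phenome} c(\iota s')$ is literally $c(\iota s \vee_{\phenome} \iota s') \neq \iota s \vee_{\phenome} \iota s'$.

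The second fact is a lemma computing the join in $\system$ via the closure operator, namely $\iota(s \vee_{\system} s') = c(\iota s \vee_{\phenome} \iota s')$. To prove it, set $w = \iota s \vee_{\phenome} \iota s'$ (the join taken in $\phenome$); I would show that $c(w)$ is the least upper-bound of $s$ and $s'$ in $\system$. It lies in $\system$ by C.3, since $c(c(w)) = c(w)$. It is an upper-bound: from $\iota s \leq w$ and $\iota s' \leq w$, applying C.2 together with $c(\iota s) = \iota s$ and $c(\iota s') = \iota s'$ yields $\iota s \leq c(w)$ and $\iota s' \leq c(w)$. It is least: if $t \in \system$ satisfies $\iota s \leq \iota t$ and $\iota s' \leq \iota t$, then $w \leq \iota t$ by the defining property of the join in $\phenome$, whence $c(w) \leq c(\iota t) = \iota t$ by C.2 and the fixed-point property. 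Thus $\iota(s \vee_{\system} s') = c(w)$, and the left-hand inequality $\iota(s \vee_{\system} s') \neq \iota s \vee_{\phenome} \iota s'$ becomes $c(\iota s \vee_{\phenome} \iota s') \neq \iota s \vee_{\phenome} \iota s'$ as well.

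Since both sides now read $c(\iota s \vee_{\phenome} \iota s') \neq \iota s \vee_{\phenome} \iota s'$, the biconditional holds trivially. The main obstacle is the lemma of the second paragraph: it is the only place where the closure axioms (chiefly C.2 and C.3) are genuinely used, and one must keep the two joins $\vee_{\system}$ and $\vee_{\phenome}$ carefully distinct, proving the upper-bound and minimality halves in the right order. Everything else is a direct substitution.
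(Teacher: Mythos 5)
Your proof is correct and takes essentially the same route as the paper's: both reduce the two sides of the biconditional to the single inequality $c(\iota s \vee_{\phenome} \iota s') \neq \iota s \vee_{\phenome} \iota s'$, using that $c\iota = \iota$ on systems together with the identity $c(\iota s \vee_{\phenome} \iota s') = \iota(s \vee_{\system} s')$. The only difference is that you prove this last identity in full (via the least-upper-bound argument with C.2 and C.3), whereas the paper merely asserts it.
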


\begin{proof}
  We have $c(\iota s) = \iota s$ for every system $s$.  We also have $c(\iota s \vee_{\phenome} \iota s') = c\iota( s\vee_{\system} s')$ for every $s$ and $s'$.
\end{proof}
We next provide some example situations of forgetting characteristics.

\subsection{Zooming into a deductive system.}
We return to our contagion example.  Rather than considering the collection of all possible systems, and their interaction, we may zoom in on one particular system. Indeed, a system in the contagion example is itself a closure operator over $\Power\Sigma$, and thus may pave the way to generative effects.

Let $f: \Power\Sigma \ra \Power\Sigma$ be a map that satisfies A.1, A.2 and A.3.  We define $\phenome$ to be $\Power\Sigma$, the space of all configurations (of whether a node is infected or not) ordered by inclusion.  The space $\system$ will consist of all the configurations allowable by the dynamics, namely the fixed-points $\{ S : fS = S\} \subseteq \Power\Sigma$. As seen in Proposition \ref{pro:CompLattice}, those fixed-points form a complete lattice, where every pair of admissible configuration admits a join.  We then define $\iota$ to be the canonical inclusion $\system \ra \phenome$, and get:
\begin{Pro}
  The map $\iota$ is an injective veil.
\end{Pro}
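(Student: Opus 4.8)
The plan is to verify the two defining properties of a veil from the Definition, namely the order-preserving axiom V.1 and the simplest-system axiom V.2, together with the injectivity claim, applied to the canonical inclusion $\iota : \system \ra \phenome$ where $\system = \fix(f)$ and $\phenome = \Power\Sigma$. Injectivity is immediate: $\iota$ is just the set-theoretic inclusion of the fixed-point set into the powerset, so distinct fixed-points have distinct images. Property V.1 is equally transparent, since the order on both $\system$ and $\phenome$ is inclusion and $\iota$ does not alter the underlying set, so $S \subseteq S'$ in $\system$ yields $\iota(S) = S \subseteq S' = \iota(S')$ in $\phenome$.

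The substance of the proof lies in V.2: for every configuration $p \in \Power\Sigma$, I must exhibit a minimum element of the set $\{s \in \system : p \subseteq \iota(s)\}$. The natural candidate is $f(p)$ itself, since $f$ is precisely a closure operator on $\phenome = \Power\Sigma$ in the sense of axioms C.1, C.2, C.3 (these coincide with A.1, A.2, A.3 as the paper notes). First I would check that $f(p)$ is indeed a fixed-point, hence lies in $\system$: this is exactly axiom A.3. Next, $p \subseteq f(p)$ by A.1, so $f(p)$ belongs to the set in question. Finally, for minimality, suppose $s \in \system$ satisfies $p \subseteq s$; applying A.2 gives $f(p) \subseteq f(s)$, and since $s$ is a fixed-point, $f(s) = s$, so $f(p) \subseteq s$. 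Thus $f(p)$ is below every element of the set, establishing it as the minimum.

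There is essentially no genuine obstacle here, as the result is an instance of the general correspondence already proved in Proposition \ref{Pro:ClosureVeil}: one takes the closure operator $c$ in that proposition to be $f$ itself, acting on $\phenome = \Power\Sigma$, whose fixed-point set is $\system = \fix(f)$. The one point requiring a moment's care is recognizing that the lattice structure of $\system$ guaranteed by Proposition \ref{pro:CompLattice} makes $\system$ a legitimate finitely cocomplete preordered set, so that the codomain of $\iota$ and the domain genuinely fit the definitional framework of a veil. Given that, the cleanest route is simply to invoke Proposition \ref{Pro:ClosureVeil} with $c = f$, remarking that axioms A.1--A.3 are verbatim the closure axioms C.1--C.3, and let that proposition supply all three required properties at once.
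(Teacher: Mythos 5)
Your proposal is correct and follows essentially the same route as the paper: the paper's proof likewise invokes Proposition \ref{Pro:ClosureVeil} (with $c = f$, noting injectivity holds by definition) and identifies $f(S)$ as the minimum element of $\{s : S \leq \iota(s)\}$, which is exactly the verification you spell out via A.1--A.3.
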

\begin{proof}
  The statement follows from Proposition \ref{Pro:ClosureVeil}. Indeed, the map $\iota$ is injective by definition.  The system $f(S)$ is the minimal element of $\{s : S \leq \iota(s) \}$.
\end{proof}
Generative effects can sometimes be sustained, but not always. Whether or not the effects emerge depends on the properties of the closure operator $f$.

\subsubsection{No generative effects.}

Consider the following sequence of nodes.
\begin{center}
\begin{tikzpicture}
  [scale=.5,auto=center,thick]
  \node[circle,draw=black!80!white,scale=1.4] (n1) at (0,0) {};
  \node[circle,draw=black!80!white,scale=1.4]  (n2) at (3,0)  {};
  \node[circle,draw=black!80!white,scale=1.4]  (n3) at (6,0)  {};
  \node[circle,draw=black!80!white,scale=1.4]  (n4) at (9,0)  {};
  \node[scale=1.2]  (n5) at (12,0)  {$\cdots$};
    \foreach \from/\to in {n1/n2,n2/n3,n3/n4,n4/n5}
    \draw[->] (\from) -- (\to);
\end{tikzpicture}
\end{center}
A node becomes infected if a node pointing to it becomes infected. Once a node is infected, it remains infected forever.
This system defines a map $f:\Power\Sigma \ra \Power\Sigma$ satisfying A.1, A.2 and A.3, and induces a veil $\iota$ as described earlier.
\begin{Pro}
 The veil does not sustain generative effects.
\end{Pro}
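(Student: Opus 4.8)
The plan is to show that the join taken in $\system$ of any two fixed-points coincides with their set-union in $\phenome$, so that the inclusion veil $\iota$ preserves joins and hence sustains no generative effects. First I would read off the fixed-points of $f$ directly from the dynamics. Since node $i$ points only to node $i+1$, a set $S$ is fixed by $f$ precisely when it is closed under passing from a node to its successor, i.e. when $S$ is an up-set of the chain. Because the nodes are linearly ordered by the arrows, any non-empty up-set containing a node $m$ must, by induction along the chain, contain every later node; so the fixed-points are exactly $\emptyset$ together with the tails $\{m, m+1, m+2, \dots\}$.

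The crucial observation is that these fixed-points are nested: the tails form a chain under inclusion, and $\emptyset$ sits below all of them, so $\fix(f)$ is totally ordered by inclusion. Consequently, for any two fixed-points $s$ and $s'$ one of them contains the other, and therefore their union $s \cup s'$ is simply the larger of the two and is itself a fixed-point of $f$. This closure of $\fix(f)$ under union is the heart of the argument.

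With this in hand I would invoke the criterion established in the Proposition immediately preceding the statement (with the closure operator $c$ taken to be $f$): the veil $\iota$ sustains an effect at $s,s'$ if and only if $f(\iota s \vee_{\phenome} \iota s') \neq f(\iota s) \vee_{\phenome} f(\iota s')$. Since $s$ and $s'$ are fixed-points, $f(\iota s)=s$ and $f(\iota s')=s'$, so the right-hand side is $s \cup s'$; and since $s \cup s'$ is again a fixed-point, the left-hand side $f(s \cup s')$ equals $s \cup s'$ as well. The two sides agree for every pair, so no generative effects arise. Equivalently, by Proposition \ref{pro:CompLattice} the join $s \vee_{\system} s' = f(s \cup s')$ in $\system$ equals the join $s \cup s'$ in $\phenome$, which is exactly the statement that $\iota$ commutes with $\vee$.

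The only real work is the first step, namely correctly extracting the fixed-points from the threshold dynamics and recognizing that the linear arrow-structure forces them to be nested; once $\fix(f)$ is seen to be a chain, closure under union is immediate and the remainder is bookkeeping. A minor point to keep in mind is whether the chain is finite or infinite (the figure's ``$\cdots$'' suggests the latter), but this does not affect the argument, since the tails are nested in either case, and I would phrase the description of the fixed-points so as to cover both.
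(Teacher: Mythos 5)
Your proof is correct, but it reaches the conclusion by a different route than the paper. The paper's entire proof is the single identity $f(S \cup S') = f(S) \cup f(S')$ for all $S, S' \in 2^\Sigma$: up-closure along the chain distributes over unions, and the absence of generative effects then follows from the same criterion proposition you invoke (with $c = f$). You instead compute $\fix(f)$ explicitly --- the empty set together with the tails $\{m, m+1, \dots\}$ --- observe that it is totally ordered by inclusion, deduce that unions of fixed points are again fixed points, and only then apply the criterion. The two hypotheses are in fact equivalent: $\fix(f)$ is closed under binary unions if and only if $f$ preserves binary unions (one inclusion is monotonicity; for the other, $f(S) \cup f(S')$ is then a fixed point squeezed between $S \cup S'$ and $f(S\cup S')$), while your chain property is strictly stronger than both. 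What the paper's route buys is brevity and robustness: the distributivity of $f$ over unions is checked directly from the dynamics, with no need to determine $\fix(f)$ at all. What your route buys is an explicit picture of the space of systems, and it effectively turns into a proof the remark the paper makes immediately after its own argument, namely that the space of systems here is linearly ordered, forming a maximal chain in $2^\Sigma$. Your closing point about the finite versus infinite chain is also sound, since the tails are nested in either case.
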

\begin{proof}
 We have $f(S \cup S') = f(S) \cup f(S')$ for every $S$ and $S'$ in $\Power\Sigma$.
\end{proof}
This system by itself does not exhibit any cascading phenomenon.  There might seem, however, to be an intuition for such a phenomenon only waiting to surface.  Nevertheless, when restricted to this particular system, the observable part \emph{prior to interaction} always determines the observable part \emph{after the interaction}.  The space of systems is linearly ordered in this case,  consisting of a maximal chain in $\Power\Sigma$. To allow the intuition to reappear, one needs to enlarge the space of systems.  The intuition of the phenomenon may be informally seen to come from the arcs of the directed graph. The arcs, however, are built into the situation, and cannot be modified.  If we enlarge our space of systems to include some systems that may not have arcs among nodes, then we can recover generative effects.  Enlarging so moves us one step closer to obtaining the whole class of maps satisfying A.1, A.2 and A.3 as a space of systems.

\subsection{Causality in systems.}

One may reconsider whether causality ought to be a concept of grandiose importance (see, e.g., \cite{RUSS1912}).  We view causality in this subsection as only intuitively expressing the notion of transitivity.  A situation of cause and effect will be considered as a transitive relation $\ra$ on a set $\Sigma$. The relation $a \ra b$ can be interpreted as ``$a$ causes $b$''.  Transitivity then abuts to: if $a \ra b$ and $b \ra c$ then $b \ra c$.  Cause-and-effect seems to be inherent in cascade-like phenomena.  They do appear, but only as a tangential special case of generative effects.  The intuition arises once we decide to forget the property of transitivity from a relation.

We consider a system to be a transitive relation on $\Sigma$.  The set $\system$ of transitive relations can be ordered by inclusion, and every pair of systems admits a least upper-bound in the poset. Two systems $R$ and $R'$ interact by taking the transitive closure $R \vee R'$ of their union $R \cup R'$. We can forget the transitivity property by \emph{embedding} $\system$ in a greater lattice $\phenome$ consisting of all relations on $\Sigma$. Two relations in $\phenome$ interact by yielding their union.  We define $\iota$ to be the canonical inclusion $\system \ra \phenome$.  Generative effects are sustained as, generally:
\begin{equation*}
  \iota(R \vee R') \neq \iota(R ) \cup \iota(R').
\end{equation*}
The transitivity property plays a role in the interaction, leading to more causal relations than what would typically be expected without it.  This generativity in the phenome is obtained by concealing characteristics in the systems.  Indeed, $\iota$ induces a closure operator on $\Sigma$ that sends a phenome to its transitive closure.

\subsubsection{Incorporating time.}

Time can be trivially incorporated by defining $\Sigma = E \times T$, where $E$ is a set representing events, and $T$ is an ordered set representing time.  One can further impose restrictions where $(e,t)$ may cause $(e',t')$ only if $t \leq t'$.  We will not dwell on developing such extensions in this paper. 

\subsection{Algebraic constructions.}

Closure operators abound in mathematics.  Those trivially include linear spans, convex hulls, and topological closure. As an example, let $G$ be an abelian group and suppose $S$ is its underlying set.  The space $\system$ will be the set $\sub(G)$ of subgroups of $G$ ordered by inclusion. The space $\phenome$ will be the set $2^S$ of subsets of $S$, again, ordered by inclusion.  Interaction in $\system$ is given by the linear span $+$ operator, while the interaction in $\phenome$ is given by set-union.  We can then define a closure operator on the set $S$, that sends subsets to the subgroup it generates.  The closure operator then defines an injective veil $\iota : \sub(G) \ra \Power S$.  Generative effects are sustained as, generally:
\begin{equation*}
   \iota( H + H') \neq \iota(H) \cup \iota(H')
\end{equation*}
The group axioms interact so as to produce more elements than what is only observed.  The veil is actually forgetting the group structure of the system.

\subsubsection{Forgetting might not create effects.}

Suppose $M$ is an $R$-module, and assume $G$ is its underlying abelian group.  The group $G$ is obtained by forgetting the multiplicative $R$-action of $M$. Let $\iota: \sub(M) \ra \sub(G)$ be the map that sends a submodule of $M$ to its underlying abelian group.  The map $\iota$ is an injective veil.  Generative effects are however never sustained. Indeed, for all submodules $N$ and $N'$, we have:
\begin{equation*}
  \iota(N + N') = \iota(N) + \iota(N').
\end{equation*}
The ring action plays no role that affects the underlying abelian group of a module.

\subsection{Universal grammar, languages and merge.}

It is argued in \cite{BER2015}---and more generally through the minimalist program, see e.g. \cite{CHO1993}, \cite{CHO1995} and related work---that the human ability of language universally arises from a single non-associative operation termed \emph{merge}.  For the illustrative purpose of this subsection, let us define a set $\Sigma$ of words.  Let $(\free\Sigma , \wedge)$ denote the free non-associative commutative algebra generated by the elements of $\Sigma$. We refer to an element of $\free\Sigma$ as a \emph{sentence}. Due to non-associativity, a \emph{sentence} is then not a linear concatenated string but rather a hierarchical object, a tree whose leaves are elements of $\Sigma$.  As expounded in \cite{BER2015}, language is fundamentally hierarchical and not associative, i.e., not concatenated (or not linear).  When externalized, say through speech, this sentence tends to be made concatenated.  A grammar is then seen as a subalgebra of $(\free\Sigma, \merge)$.  Such a subalgebra is thought to be generated by a set of \emph{sentences}.  The subalgebra can be ordered by inclusion to yield a join-semilattice $\big\<\sub \<\free\Sigma, \merge\>, \vee\big\>$.  The join $g \vee g'$ of two subalgebras $g$ and $g'$ is the subalgebra generated by $\{s \merge s' : s \in g \text{ and }s\in g'\}$.

These grammars possess an intuitive generative power, where the merge operator $\merge$ interacts with sentences to form new ones.  To capture it, we forget such a property.  Formally, we define a(n injective) veil, that sends a grammar (a subalgebra) to its underlying language (a set):
\begin{equation*}
   \iota : \big\<\sub \<\free\Sigma, \merge\>, \vee\big\> \ra \<\Power{\free\Sigma},\cup\>
\end{equation*}
Generative effects are sustained as, in general, we have:
\begin{equation*}
  \iota(g \vee g') \neq \iota(g) \cup \iota(g')
\end{equation*}
The discrepancy is caused by the effect of the merge operator. It is obtained by forgetting the characteristic that a grammar is equipped by such an operator, leaving only the underlying language.  It captures the \emph{generativity} of the grammars considered.

\section{On arbitrary veils.}

An arbitrary veil needs neither be surjective nor injective. Indeed, any combination of concealing mechanisms and forgetting properties can lead to an adequate veil. In general: 
\begin{Pro}
  The composition of two veils is a veil.
\end{Pro}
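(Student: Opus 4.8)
The plan is to set up the composition explicitly and verify the two veil axioms directly. Write the two veils as $\Phi : \system \ra P$ and $\Psi : P \ra Q$, where $P$ now plays the role of the system space for the second veil; both $P$ and $Q$ are finitely cocomplete preordered sets by hypothesis, so the composite $\Psi\Phi : \system \ra Q$ has an admissible domain and codomain. Axiom V.1 is then immediate: a composite of order-preserving maps is order-preserving, so $\Psi\Phi$ preserves $\leq$.

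The content lies in V.2. First I would extract from each veil its ``simplest explaining system'' operator, whose existence is exactly what V.2 asserts. For $\Psi$, write $\lambda(q)$ for the minimum of $\{p \in P : q \leq \Psi(p)\}$; for $\Phi$, write $\mu(p)$ for the minimum of $\{s \in \system : p \leq \Phi(s)\}$. The claim I would prove is that for each $q \in Q$ the composite $\mu(\lambda(q))$ is the minimum of $\{s \in \system : q \leq \Psi\Phi(s)\}$, which is precisely V.2 for $\Psi\Phi$.

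To establish the claim I would check the two defining conditions of a minimum. For membership: since $\lambda(q) \leq \Phi(\mu(\lambda(q)))$ by definition of $\mu$, applying the order-preserving map $\Psi$ gives $\Psi(\lambda(q)) \leq \Psi\Phi(\mu(\lambda(q)))$, and combining with $q \leq \Psi(\lambda(q))$ (definition of $\lambda$) and transitivity yields $q \leq \Psi\Phi(\mu(\lambda(q)))$. For the lower-bound condition: if $s$ satisfies $q \leq \Psi\Phi(s) = \Psi(\Phi(s))$, then $\Phi(s)$ belongs to $\{p : q \leq \Psi(p)\}$, so minimality of $\lambda(q)$ forces $\lambda(q) \leq \Phi(s)$; hence $s$ belongs to $\{s' : \lambda(q) \leq \Phi(s')\}$, and minimality of $\mu(\lambda(q))$ forces $\mu(\lambda(q)) \leq s$. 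Uniqueness of the minimum then follows from antisymmetry (Proposition~\ref{pro:antisymmetric}).

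I expect no serious obstacle here: the argument is just the order-theoretic instance of the fact that right adjoints compose, with $\lambda$ and $\mu$ playing the role of the left adjoints supplied by V.2 for the two veils. The only point requiring care is bookkeeping---keeping straight which minimum lives in which set, and invoking order-preservation of $\Psi$ at exactly the right moment so that the two minimality conditions chain together correctly.
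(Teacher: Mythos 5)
Your proof is correct and follows essentially the same route as the paper: the paper's proof also takes the minimum $q_{min}$ explaining $p$ under the second veil, then the minimum $s_{min}$ explaining $q_{min}$ under the first, and shows this composite minimum works (your $\mu(\lambda(q))$). If anything, your write-up is slightly more complete, since you explicitly verify the membership condition $q \leq \Psi\Phi(\mu(\lambda(q)))$, which the paper's proof leaves implicit.
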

\begin{proof}
  The property V.1 is preserved under composition.  Let $\Phi_1 : S \ra Q$ and $\Phi_2: Q \ra P$ be veils.  Then, by V.2, for every $p \in P$, the set $\{ q : p \leq \Phi_2(q)\}$ has a minimum element $q_{min}$ and the set $\{s : q_{min} \leq \Phi_1(s)\}$ has a minimum element $s_{min}$.  If $p \leq \Phi_2\Phi_1(s)$, then $q_{min} \leq \Phi_1(s)$, and $s_{min} \leq s$. The set $\{s : p \leq \Phi_2\Phi_1(s)\}$ then has $s_{min}$ as a minimum element.
\end{proof}
A \emph{converse} also holds: every possible veil arises from a combination of concealing mechanisms and forgetting characteristics.

\begin{Pro}
  If $\Phi : \system \ra \phenome$ is a veil, then $\Phi$ admits a factorization $\Phi = \iota \pi$ such that $\pi : \system \ra Q$ is a surjective veil, and $\iota: Q \ra \phenome$ is an injective veil.
\end{Pro}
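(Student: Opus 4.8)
The plan is to perform the standard image factorization of the order-preserving map $\Phi$ and then check that each factor is a veil in the sense of V.1--V.2. First I would define $Q$ to be the image $\Phi(\system) \subseteq \phenome$, inheriting the preorder from $\phenome$ by restriction. I then set $\pi : \system \ra Q$ to be $\Phi$ with its codomain restricted to $Q$, so that $\pi$ is manifestly surjective, and $\iota : Q \ra \phenome$ to be the canonical inclusion, which is manifestly injective and order-preserving. By construction $\Phi = \iota\pi$. What remains is to verify that $Q$ is finitely cocomplete and that $\pi$ and $\iota$ each satisfy V.1 and V.2.

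The verification of V.1 is immediate in both cases: $\pi$ inherits order-preservation from $\Phi$, and $\iota$ is an inclusion of a subposet with the restricted order. For $\iota$, property V.2 is essentially free, since for $p \in \phenome$ the set $\{q \in Q : p \leq \iota(q)\} = \{q \in Q : p \leq q\}$ and one can take the $\Phi$-image of the simplest system explaining $p$ in $\system$, which lands in $Q$ and is below every such $q$ by V.2 for $\Phi$. The step requiring the most care is V.2 for $\pi$, together with establishing that $Q$ is finitely cocomplete. For $\pi$, given $q \in Q$, I would exhibit the minimum of $\{s : q \leq \pi(s)\}$; since $q \leq \pi(s)$ in $Q$ means $q \leq \Phi(s)$ in $\phenome$, this set coincides with $\{s : q \leq \Phi(s)\}$, which has a minimum by V.2 for $\Phi$. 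So V.2 for $\pi$ transfers directly from V.2 for $\Phi$.

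The genuine obstacle is cocompleteness of $Q$: the image of a join-semilattice under an order-preserving map need not itself be closed under joins taken in $\phenome$, so I cannot simply assert that $Q$ is a subsemilattice. The resolution is that $Q$ must be given its own joins, computed \emph{within} $Q$ rather than inherited from $\phenome$. Here I would use V.2 for $\Phi$ crucially: for $q_1, q_2 \in Q$, consider $\{q \in Q : q_1 \leq q \text{ and } q_2 \leq q\}$ and show it has a least element by pulling back through the minimum-element property, i.e. lifting each $q_i$ to its simplest explaining system $s_i$, forming $s_1 \vee s_2$ in $\system$, and taking $\Phi(s_1 \vee s_2) \in Q$. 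One then checks this is an upper bound for $q_1,q_2$ in $Q$ and is least among upper bounds lying in $Q$, which yields the join in $Q$ and simultaneously the minimum element (the empty join) making $Q$ finitely cocomplete. I expect this last argument to be where the real content lies, and I would structure the proof so that the cocompleteness of $Q$ and V.2 for $\pi$ are dispatched together via the lifting $q \mapsto \min\{s : q \leq \Phi(s)\}$.
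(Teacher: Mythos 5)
Your proposal is correct and follows essentially the same route as the paper: factor $\Phi$ through its image $Q$ with the order restricted from $\phenome$, then transfer V.2 to each factor via the lifting $p \mapsto \min\{s : p \leq \Phi(s)\}$, exactly as the paper does. You are in fact more thorough than the paper's own proof, which never checks that $Q$ is finitely cocomplete (required for both factors to qualify as veils under the paper's definition); your construction of the join of $q_1,q_2$ in $Q$ as $\Phi(s_1 \vee s_2)$ with $s_i = \min\{s : q_i \leq \Phi(s)\}$ --- an upper bound that is least among upper bounds lying in $Q$, even though it need not be the join computed in $\phenome$ --- is precisely the missing verification, and it is sound.
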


\begin{proof}  %
  Let $Q$ be the image set $\{\Phi(s) : s \in \system\}$. The map $\Phi$ factors as $\iota\pi$ through $Q$ with $\pi$ surjective, $\iota$ injective and both order-preserving.  As $\Phi$ is a veil, the set $\{s: p \leq \iota\pi(s)\}$ admits a minimum element for every $p$. The set $\{\pi(s): p \leq \iota\pi(s)\}$ then admits a minimum element as $\pi$ is order-preserving.  As $\pi$ is surjective, we get that $\{q \in Q: p \leq \iota q\}$ admits a minimum element for every $p$.  Similarly, for every $q \in Q$, the set $\{s: \iota(q) \leq \iota\pi(s)\}$ admits a minimum element.  As $Q \subseteq \phenome$, we have $\iota(q) \leq \iota(q')$ if, and only if $q \leq q'$. The set $\{s: q \leq \pi(s)\}$ then admits a minimum element.
\end{proof}
The map $\pi$ can be interpreted to conceal mechanisms in systems.  The map $\iota$ can then be interpreted as further forgetting characteristics from the partially concealed systems. Every situation of generative effects arises from a combination of the two cases. A veil then sustain generative effects whenever one of its factor components sustains those effects.

Finally, a good way to recognize a veil is by checking whether it preserves meets (i.e., greatest lower-bounds):

\begin{Pro} \label{Pro:veilasmeets}
 If $\system$ and $\phenome$ admit arbitrary meets, then: a map $\Phi : \system \ra \phenome$ is a veil if, and only if, it is order-preserving and preserves arbitrary meets.
\end{Pro}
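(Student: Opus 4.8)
The plan is to read this statement as the order-theoretic adjoint functor theorem in disguise: axiom V.2 says exactly that the monotone map $\Phi$ admits a left adjoint (it sends each $p$ to the minimum of $\{s : p \leq \Phi(s)\}$), and the content of the proposition is the standard fact that, between complete lattices, a monotone map is a right adjoint precisely when it preserves arbitrary meets. Note first that a proset admitting arbitrary meets is complete, hence finitely cocomplete as well, so the codomain hypothesis built into the definition of a veil is automatically met; this lets me treat both directions purely as manipulations of the adjunction implicit in V.2.

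For the forward direction I would assume $(\phenome,\Phi)$ is a veil. V.1 already supplies order-preservation, so only meet-preservation needs checking. Given a family $\{s_i\}_{i\in I}$ with meet $\bigwedge_i s_i$, monotonicity gives $\Phi(\bigwedge_i s_i) \leq \bigwedge_i \Phi(s_i)$ immediately. For the reverse inequality I set $p := \bigwedge_i \Phi(s_i)$ and use V.2 to produce $\Psi(p) := \min\{s : p \leq \Phi(s)\}$. Since $p \leq \Phi(s_j)$ for each $j$, every $s_j$ lies in that set, so $\Psi(p) \leq s_j$ for all $j$ and hence $\Psi(p) \leq \bigwedge_i s_i$; combining $p \leq \Phi(\Psi(p))$ (because $\Psi(p)$ belongs to the set) with monotonicity yields $p \leq \Phi(\Psi(p)) \leq \Phi(\bigwedge_i s_i)$, which together with the first inequality and antisymmetry gives equality.

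For the backward direction I would assume $\Phi$ is order-preserving and preserves arbitrary meets; V.1 is then automatic and it remains to verify V.2. Fix $p \in \phenome$, put $A := \{s : p \leq \Phi(s)\}$, and set $s^* := \bigwedge A$, which exists since $\system$ is complete. The crucial step is to show $s^* \in A$: by meet-preservation $\Phi(s^*) = \bigwedge_{s\in A}\Phi(s)$, and since $p \leq \Phi(s)$ for every $s \in A$, the element $p$ is a lower bound of $\{\Phi(s) : s\in A\}$, so $p \leq \Phi(s^*)$. Thus $s^*$ lies in $A$ and is by construction its least element, giving V.2. I would also flag the nonemptiness of $A$: interpreting arbitrary meets to include the empty meet (the top element $\top$), preservation of that meet forces $\Phi(\top)=\top$, so $\top \in A$ and $s^*$ is the meet of a nonempty set.

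The main obstacle is precisely the backward direction's observation that the sought minimum is nothing other than $\bigwedge A$, and that meet-preservation is exactly the hypothesis needed to drag $s^*$ back into $A$; without meet-preservation, $\bigwedge A$ need not satisfy $p \leq \Phi(\bigwedge A)$ and V.2 can fail. The forward direction, by contrast, is a routine unwinding of the adjunction already encoded in V.2, and the only point demanding mild care there is keeping the two inequalities oriented correctly and appealing to antisymmetry (valid since completeness makes $\system$ and $\phenome$ posets) to conclude equality.
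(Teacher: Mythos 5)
Your proposal is correct and takes essentially the same route as the paper's own proof: the backward direction identifies the minimum of $\{s : p \leq \Phi(s)\}$ with its meet (pulled into the set by meet-preservation, with the empty meet supplying the top element and nonemptiness), and the forward direction uses V.2's minimum element to sandwich $\wedge \Phi(S) \leq \Phi(s_{\min}) \leq \Phi(\wedge S)$ against the easy monotonicity inequality. The adjoint-functor framing you lead with is just packaging around the identical manipulations.
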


\begin{proof}
  Suppose $\Phi$ is order-preserving and preserves arbitrary meets.  The set $\{s : p \leq \Phi(s)\}$ contains the maximum element of $\system$ (the meet of the empty set) and is thus non-empty.  The meet of all the elements in $\{s : p \leq \Phi(s)\}$ lies in this set and is its minimum element.  Conversely, suppose $\Phi$ is a veil. Let $S \subseteq \system$ be a subset.  We have $\Phi(\wedge S) \leq \wedge\Phi(S)$.  Also, $\{ s : \wedge \Phi(S) \leq \Phi(s) \}$ has a minimum element $s_{min}$. Obviously $s_{min} \leq s$ for every $s \in S$ and thus $s_{min} \leq \wedge S$. We then get $\wedge \Phi(S) \leq \Phi(\wedge S)$.
\end{proof}
In particular, $\Phi$ sends the maximum element in $\system$ to the maximum element of $\phenome$.  Indeed, the greatest lower-bound for the empty set yields the maximum element in $\phenome$.  Finally, if $\system$ and $\phenome$ only admit finite meets, the veil would preserve them. Indeed, the converse direction of the proof above goes through unchanged (by only considering $S$ to be finite) for the finite case.

\subsection{Properties and Galois connections.}

The property V.2 is crucial as it allows us to define a \emph{freest} (or simplest) reconstruction of a system from the phenome.  If $\Phi : \system \ra \phenome$ is a veil, then define $F : \phenome \ra \system$ (for free) such that:
\begin{equation*}
    F : p \mapsto \min\{s : p \leq \Phi(s)\}
\end{equation*}
If $\Phi$ is invertible, then $F$ would be the inverse of $\Phi$.  In the cases of interest, $\Phi$ is not invertible, and $F$ ought to be interpreted as the closest map that we could have as an inverse.  The map $F$ is said to be the \textbf{left} adjoint of $\Phi$, and the map $\Phi$ is said to be the \textbf{right} adjoint of $F$.  The pair $(F,\Phi)$ is termed a Galois connection.  We provide some properties related to Galois connections, and refer the reader to \cite{BIR1967} Ch. V, \cite{EVE1944}, \cite{ORE1944} and \cite{ERN1993} for a thorough treatment.  We will, however, not be explicitly using those properties.

\begin{Pro}\label{Pro:Galois}
  Let $\Phi$ be a veil, and $F$ be $p \mapsto \min\{s : p \leq \Phi(s)\}$, then:
  \begin{itemize}\setlength\itemsep{0em}
    \item[i.] For all $p$ and $s$, we have $F(p) \leq s \text{ iff } p \leq \Phi(s)$.
    \item[ii.] The map $F\Phi$ is a kernel operator on $\system$.
    \item[iii.] The map $\Phi F$ is a closure operator on $\phenome$.
    \item[iv.] The map $\Phi$ maps $s$ to the maximum of $\{s : s \leq F(p)\}$.
    \item[v.] For all $p$ and $p'$, we have $F(p \vee p') = F(p) \vee F(p')$.
  \end{itemize}
\end{Pro}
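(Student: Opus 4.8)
The plan is to make everything rest on part (i), which is the defining adjunction property of a Galois connection; once it is in hand, parts (ii)--(v) are formal consequences. First I would prove (i) directly from the definition of $F$. By V.2 the element $F(p) = \min\{s : p \leq \Phi(s)\}$ genuinely lies in the set $\{s : p \leq \Phi(s)\}$, so we always have the ``unit'' inequality $p \leq \Phi F(p)$. If $p \leq \Phi(s)$, then $s$ belongs to the same set and minimality gives $F(p) \leq s$; conversely, if $F(p) \leq s$, then applying the order-preserving $\Phi$ (axiom V.1) yields $p \leq \Phi F(p) \leq \Phi(s)$. This proves the equivalence in (i), and as a byproduct records the two ``triangle'' inequalities $p \leq \Phi F(p)$ (unit) and $F\Phi(s) \leq s$ (counit, obtained by setting $p = \Phi(s)$ and invoking reflexivity).

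Next I would collect two structural facts. First, $F$ is order-preserving: if $p \leq p'$ then $p \leq p' \leq \Phi F(p')$, so (i) gives $F(p) \leq F(p')$. Second, using antisymmetry---available because $\system$ and $\phenome$ are finitely cocomplete, hence posets by Proposition~\ref{pro:antisymmetric}---I would upgrade the unit/counit inequalities to the triangle identities $\Phi F \Phi = \Phi$ and $F \Phi F = F$. For instance $\Phi F\Phi(s) \leq \Phi(s)$ follows by applying $\Phi$ to the counit, while $\Phi(s) \leq \Phi F \Phi(s)$ is the unit at $p = \Phi(s)$; antisymmetry forces equality, and $F\Phi F = F$ is dual.

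With these identities, (ii) and (iii) fall out. For (ii), K.1 is exactly the counit and K.2 is composition of the order-preserving maps $\Phi$ and $F$; idempotency K.3 is $F\Phi F\Phi = (F\Phi F)\Phi = F\Phi$. Dually, for (iii), C.1 is the unit, C.2 is monotonicity of the composite, and C.3 is $\Phi F \Phi F = \Phi(F\Phi F) = \Phi F$. Part (iv) is merely a restatement of (i): the equivalence $F(p) \leq s$ iff $p \leq \Phi(s)$ says $\{p : F(p) \leq s\} = \{p : p \leq \Phi(s)\}$, whose maximum is $\Phi(s)$ itself; I read (iv) as this dual characterization $\Phi(s) = \max\{p : F(p)\leq s\}$.

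Finally, for (v) I would show $F$ preserves binary joins, again by antisymmetry. Monotonicity of $F$ gives $F(p), F(p') \leq F(p \vee p')$, hence $F(p) \vee F(p') \leq F(p \vee p')$. For the reverse, the unit and monotonicity of $\Phi$ give $p, p' \leq \Phi\big(F(p) \vee F(p')\big)$, so $p \vee p' \leq \Phi\big(F(p) \vee F(p')\big)$, and (i) yields $F(p \vee p') \leq F(p) \vee F(p')$. I do not anticipate a serious obstacle; the only point demanding care is the idempotency claims K.3 and C.3, where a direct computation is awkward and the clean route is through the triangle identities, which themselves depend on having antisymmetry to convert the unit/counit inequalities into equalities.
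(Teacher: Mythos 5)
Your proof is correct, but it is worth noting that the paper does not actually prove this proposition at all: its ``proof'' is a citation to the literature on Galois connections (Ern\'e et al.), since items (i)--(v) are standard facts once one recognizes $(F,\Phi)$ as an adjoint pair. Your proposal supplies what the paper omits, and it does so along the canonical route: establish the adjunction equivalence (i) directly from V.1 and V.2, extract the unit $p \leq \Phi F(p)$ and counit $F\Phi(s) \leq s$, prove monotonicity of $F$, upgrade to the triangle identities $\Phi F \Phi = \Phi$ and $F\Phi F = F$ via antisymmetry (which you correctly justify by Proposition~\ref{pro:antisymmetric}, since $\system$ and $\phenome$ are assumed finitely cocomplete), and then read off (ii), (iii), (v) as formal consequences. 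Your reading of (iv) as $\Phi(s) = \max\{p : F(p) \leq s\}$ is the right repair of the paper's garbled statement (the paper reuses $s$ as both the bound and the free variable), and your one-line derivation of it from (i) is sound. The only thing your write-up buys beyond the paper is self-containedness; the only thing the paper's citation buys is brevity and access to the more general preorder-level statements in the reference. There is no gap in your argument: the joins $F(p) \vee F(p')$ and $p \vee p'$ needed in (v) exist by finite cocompleteness, and your care about where antisymmetry enters (K.3, C.3, and the equality in (v)) is exactly the point a purely preorder-theoretic argument would have to weaken to equivalences.
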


\begin{proof}
  We refer the reader to \cite{ERN1993} for proof of those statements, as well as other related statements.
\end{proof}
The map $F$ is the unique map such that the $(i.)$ holds. Furthermore, item $(ii.)$ recovers the kernel operator that conceals mechanisms in the system. Item $(iii.)$ recovers the closure operator that forgets characteristics of the systems.

One important consequence is:
\begin{Cor}
  If $\Phi: (\system, \leq) \ra (\phenome, \leq)$ is a veil, then its adjoint $F: (\phenome, \geq) \ra (\system, \geq)$ defines a veil on the dual preordered sets. \qed
\end{Cor}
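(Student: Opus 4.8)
The plan is to verify directly that $F$, regarded as a map $(\phenome,\geq) \ra (\system,\geq)$, satisfies the two defining axioms V.1 and V.2 of a veil, now with the dual orders in both domain and codomain. Everything should fall out of the Galois connection already recorded in Proposition \ref{Pro:Galois}, so essentially no new computation is needed; the work lies entirely in translating each veil axiom across the order-reversal. I would carry out three steps: settle the ambient completeness requirement, check V.1, then check V.2.

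First I would dispose of the structural hypothesis. A veil on a proset requires its codomain to be finitely cocomplete, and here the codomain is $(\system,\geq)$; finite joins with respect to $\geq$ are finite meets with respect to $\leq$. So I would observe that the dual prosets $(\phenome,\geq)$ and $(\system,\geq)$ are finitely cocomplete precisely because $\system$ and $\phenome$ admit finite meets in their original orders---which is the standing situation in which these results live, namely the lattice (indeed complete-lattice) setting underlying Proposition \ref{Pro:veilasmeets} and all the running examples. This is the one step where I anticipate friction: finite joins alone do not force finite meets, so the corollary is genuinely about the case where both prosets carry meets, and I would state that hypothesis explicitly rather than treat it as automatic.

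Next, V.1 for $F$. I must show $F$ is order-preserving from $(\phenome,\geq)$ to $(\system,\geq)$, which is the same as order-preservation in the original orders: if $p \leq p'$ then $F(p) \leq F(p')$. This is immediate from Proposition \ref{Pro:Galois}: applying item (i) to $F(p') \leq F(p')$ gives $p' \leq \Phi F(p')$, so by transitivity $p \leq \Phi F(p')$, and item (i) again yields $F(p) \leq F(p')$. (It also drops out of the join-preservation in item (v).)

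Finally, V.2 for $F$, which is the crux. For each $s \in \system$ I must exhibit a minimum, in the dual order $\geq$, of the set of those $p$ for which $s$ precedes $F(p)$ in the dual order. That dual relation is just $F(p) \leq s$, and by Proposition \ref{Pro:Galois}(i) it holds if and only if $p \leq \Phi(s)$. Hence the set in question is $\{ p : p \leq \Phi(s)\}$, whose greatest element in the original order---equivalently, its least element in the dual order---is visibly $\Phi(s)$ itself. This matches the content of Proposition \ref{Pro:Galois}(iv). So V.2 holds, and as a bonus the left adjoint of the dual veil $F$ is $\Phi$ itself: the Galois connection is perfectly symmetric under order-reversal, which is the conceptual takeaway the corollary records.
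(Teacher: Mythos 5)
Your proof is correct and follows essentially the same route as the paper: the corollary is stated there with no separate proof, being treated as immediate from Proposition \ref{Pro:Galois}, and your verification of V.1 and V.2 via item (i) (equivalently item (iv), which exhibits $\Phi(s)$ as the maximum of $\{p : F(p) \leq s\}$) is exactly that argument made explicit. Your further remark that the dual codomain must be finitely cocomplete---i.e., that $\system$ and $\phenome$ must admit finite meets, which finite joins alone do not guarantee---flags a hypothesis the paper leaves implicit, and stating it explicitly is sound rather than a deviation.
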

Examples of Galois connections abound, especially when it comes to free construction of algebraic objects. Most are mathematical examples, but when interpreted appropriately yield us an intuition for cascade-like phenomena.  %

\section{Dynamical generative effects.}

We introduce, in this section, the notion of a dynamical veil.  Such veils can be used to incorporate temporal information in the phenome.   Aside from increasing modeling expressivity, dynamical veils can be used to \emph{spread} generative effects over (temporal) approximations of systems.  Such a spread may be used for a relative/successive analysis of generative effects.  This last direction will not be pursued in this paper.

A main argument in this paper is that generative effects enclose the intuitive notion of cascade effects.  The term \emph{cascade} however gives an impression of an evolving process.  The notion of time then seems to be an essential component for cascades.  However generative effects do not depend intrinsically on time.  Interconnection of systems does not depend on time either.  The goal of this interlude section is to aid in reconciling this view.  We thus introduce the notion of an ${\I}$-dynamical veil.  Whenever generative effects are sustained by such a veil, we may think of them as dynamically realized.

\begin{Def}
 Let $\I$ be a preordered set.  A veil $\Phi: \sS \ra \P$ is said to be an $\I$-dynamical veil if $\P$ is isomorphic to $P^{\I}$ (i.e., the preordered set of order-preserving maps $\I \ra P$) for some preordered set $P$.
\end{Def}
We often consider, in this section, sets $\I$ that are linearly ordered, i.e., where $\leq$ is antisymmetric for every $i,j \in \I$, either $i \leq j$ or $j \leq i$.  Such linear orders may be used to account for time, indexed by the elements of $\I$.

The notion of a system and the means of interconnecting systems remain unchanged.  The phenome is then obtained by \emph{reading} information from a system indexed by $\I$.  The space of systems then needs to be rich in (e.g., temporal) structure to support a \emph{meaningful} $\I$-dynamical veil.

\subsection{Revisiting contagion, dynamically.}

Let us reconsider the systems on contagion (or deduction) considered in Section \ref{sec:contagionSection} and studied in \cite{ADAM:AlgebraCascadeEffects}.  Let $\I$ be a preordered set, and let $P$ be a complete lattice.  We define $\L_{P^{\I}}$ to be the set of maps $f:P^{\I} \ra P^{\I}$ satisfying:
\begin{itemize} \setlength\itemsep{0em}
\item[A.1] If $a \in P$, then $a \leq f(a)$.
\item[A.2] If $a \leq b$, then $f(a) \leq f(b)$.
\item[A.3] If $a \in P$, then $f(f(a)) = f(a)$.
\end{itemize}
The set $\L_{P^{\I}}$ may be naturally ordered to form a lattice.  We may then define an order-preserving map:
\begin{equation*}
  \eval : \L_{P^{\I}} \ra P^{\I}
\end{equation*}
sending a system to its least fixed-point.   The map $\eval$ preserves arbitrary meets, and admits a left adjoint $\free : P^{\I} \ra \L_{P^{\I}}$.  The map $\eval$ may also be defined to act on $\L_P$, the maps $P\ra P$ satisfying A.1, A.2 and A.3, as done in Section \ref{sec:contagionSection} through $\Phi$.  

\subsubsection{Syntax and interpretation.}

We let $\Sigma$ be a finite set of $n$ elements.  The set $\Power\Sigma$ denotes the set of subsets (or powerset) or $\Sigma$.  We set $P$ to be $\Power\Sigma$, and consider $\I$ to be the (canonically) preordered set $\bZ^{\geq 0}$ of non-negative integers.

A system $f \in \L_{P^{\I}}$ may then be syntactically described by a map $\N : \Sigma \ra \Power{\Power\Sigma \times \bZ^{\geq 0}}$.  Indeed, every element $i$ of $\Sigma$ is attributed a collection of pairs $(S,d)$, where $S \subseteq \Sigma$ and $d \in \bZ^{\geq 0}$.  The interpretation is as follows.  Let $X_0, X_1, X_2, \cdots$ be subsets of $\Sigma$ where $X_0$ is the empty set.  If $S$ belongs to $X_m$, then $i \in X_{m+d}$.  The rule of course applies \emph{simultaneously} to all pairs $(S,d)$ for a given $i$, and for every element $i$ of $\Sigma$.

We may interpret $X_m$ to denote the elements that are active (infected or failed) at time $m$.  If the elements of $S$ are already active (infected or failed) at time $m$ (i.e., belong to $X_m$), then $i$ will become active (infected or failed) after $d$ time steps from $m$, i.e. at time $m+d$, belonging to $X_{m+d}$.

\subsection{From dynamical veils to veils.}

The phenome, in the case of an $\I$-dynamical veil, is thus a collection of related frames, or snapshots, taken from the system.  In case $\I$ is a linearly ordered set, the frames are successive snapshots of the system.  We can easily focus on one of the frames, forgetting others, and still recover a situation of generative effects.

Let $\I$ and $P$ be preordered sets.  We define $\pi_i : P^{\I} \ra P$ to be the canonical projection onto the $i$th component of $\I$.  Applying $\pi_i$ on the phenome, a collection of frames (or snapshot), amounts to only keeping the $i$th frame (or snapshot).

\begin{Pro}
  If $P$ is finitely complete (resp. finitely cocomplete) then for every $a, b \in P^\I$, we have $\pi_i(a \meet b) = \pi_i (a) \meet \pi_i (b)$ (\text{resp.} $\pi_i(a \join b) = \pi_i (a) \join \pi_i (b)$).
\end{Pro}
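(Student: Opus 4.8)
The plan is to recall the definition of the preorder on the function space $P^{\I}$ and reduce the claim to a pointwise verification at the index $i$. By the notation subsection, an element $a \in P^{\I}$ is an order-preserving map $\I \ra P$, the order is given by $a \leq b$ if and only if $a(j) \leq b(j)$ for all $j \in \I$, and $\pi_i(a) = a(i)$. So I would first establish how meets (respectively joins) are computed in $P^{\I}$: if $P$ is finitely complete, then for $a, b \in P^{\I}$ the meet $a \meet b$ should be computed coordinatewise, i.e.\ $(a \meet b)(j) = a(j) \meet b(j)$ for each $j$, provided this coordinatewise assignment is itself an order-preserving map and is genuinely the greatest lower-bound in $P^{\I}$.

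First I would verify that the coordinatewise assignment $j \mapsto a(j) \meet b(j)$ lies in $P^{\I}$, i.e.\ that it is order-preserving: if $j \leq j'$ in $\I$, then $a(j) \leq a(j')$ and $b(j) \leq b(j')$ since $a$ and $b$ are order-preserving, and since the binary meet operation in $P$ is monotone in each argument we get $a(j) \meet b(j) \leq a(j') \meet b(j')$. Next I would check this coordinatewise map is a lower-bound of $a$ and $b$ in $P^{\I}$, which is immediate since $a(j) \meet b(j) \leq a(j)$ and $\leq b(j)$ at every $j$. Finally, for the greatest part, if $c \in P^{\I}$ satisfies $c \leq a$ and $c \leq b$, then $c(j) \leq a(j)$ and $c(j) \leq b(j)$ for all $j$, whence $c(j) \leq a(j) \meet b(j)$ by the defining property of the meet in $P$, so $c \leq a \meet b$ in $P^{\I}$. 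This establishes that meets in $P^{\I}$ are computed coordinatewise.

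Once the coordinatewise formula $(a \meet b)(j) = a(j) \meet b(j)$ is in hand, the result is immediate by evaluating at $j = i$:
\begin{equation*}
  \pi_i(a \meet b) = (a \meet b)(i) = a(i) \meet b(i) = \pi_i(a) \meet \pi_i(b).
\end{equation*}
The join case is entirely dual: one checks $j \mapsto a(j) \join b(j)$ is order-preserving and is the least upper-bound in $P^{\I}$, then evaluates at $i$. I do not expect any genuine obstacle here; the only point deserving care is the verification that the coordinatewise map is itself order-preserving (so that it actually belongs to $P^{\I}$ rather than merely to the set of all maps $\I \ra P$), since the elements of $P^{\I}$ are required to be monotone. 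This relies on the monotonicity of the binary meet and join operations in $P$, which holds in any finitely complete (respectively finitely cocomplete) proset.
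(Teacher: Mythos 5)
Your proof is correct and takes the same route as the paper: the paper's entire proof is the one-line assertion that joins and meets in $P^{\I}$ are computed pointwise when they exist in $P$, and your argument simply supplies the full verification of that fact (including the worthwhile check that the pointwise meet/join is itself order-preserving, hence an element of $P^{\I}$) before evaluating at $i$.
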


\begin{proof}
  Joins and meets in $P^\I$ are computed pointwise, if they exist in $P$.
\end{proof}
The projection $\pi_i$ is often also a veil.

\begin{Pro}
  If $P$ admits a minimum element, then $\pi_i$ is a veil.
\end{Pro}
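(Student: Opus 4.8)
The plan is to verify the two defining axioms of a veil, V.1 and V.2, for the projection $\pi_i : P^{\I} \ra P$, with $P^{\I}$ playing the role of the space of systems. The standing assumption that the phenome is finitely cocomplete supplies the finite cocompleteness of $P$ (and hence, with joins computed pointwise as in the preceding proposition, of $P^{\I}$); the distinguishing hypothesis here—that $P$ has a minimum element, which I will denote $\bot$—is precisely what I expect to drive the verification of V.2.

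First I would dispatch V.1, which should be immediate. Since the order on $P^{\I}$ is computed pointwise, $f \leq g$ means $f(j) \leq g(j)$ for every $j \in \I$; evaluating at $j = i$ yields $\pi_i(f) = f(i) \leq g(i) = \pi_i(g)$, so $\pi_i$ is order-preserving.

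The substance lies in V.2: for each $p \in P$ I must exhibit a minimum element of the set $\{\, s \in P^{\I} : p \leq \pi_i(s)\,\} = \{\, s : p \leq s(i)\,\}$. My candidate is the map $s_p : \I \ra P$ defined by $s_p(j) = p$ whenever $i \leq j$ and $s_p(j) = \bot$ otherwise. The step I expect to be the main obstacle is confirming that $s_p$ is genuinely order-preserving, i.e.\ a bona fide element of $P^{\I}$: if $j \leq k$ with $i \leq j$, then transitivity of $\I$ forces $i \leq k$, so both values equal $p$; and if $i \not\leq j$, then $s_p(j) = \bot \leq s_p(k)$ by minimality of $\bot$. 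This is exactly where the existence of a minimum element is indispensable, since some value must be assigned to indices not lying above $i$ without destroying either monotonicity or minimality.

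Finally I would check that $s_p$ does the job. By reflexivity $i \leq i$, so $s_p(i) = p$, whence $p \leq \pi_i(s_p)$ and $s_p$ lies in the set. For minimality, take any $t \in P^{\I}$ with $p \leq t(i)$: on indices $j$ with $i \leq j$, monotonicity of $t$ gives $s_p(j) = p \leq t(i) \leq t(j)$, and on the remaining indices $s_p(j) = \bot \leq t(j)$ trivially, so $s_p \leq t$. Thus $s_p$ is the minimum, establishing V.2; its uniqueness follows from the antisymmetry of the pointwise order inherited from $P$. Everything beyond the monotonicity check is routine bookkeeping, so that single verification is where I would concentrate the care.
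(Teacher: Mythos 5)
Your proposal is correct and follows exactly the paper's argument: the paper also takes the candidate $q^*$ sending $j$ to $p$ when $i \leq j$ and to the minimum element $0$ otherwise, and declares it the minimum of $\{q : p \leq \pi_i q\}$. You have merely written out the verifications (monotonicity of the candidate, membership, minimality) that the paper leaves implicit.
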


\begin{proof}
  The map $\pi_i$ is clearly order-preserving. Let $0$ be the minimum element of $P$. For every $p$ in $P$, the map $q^*$ sending $j$ to $p$ if $i \leq j$ and to $0$ otherwise is the minimum of the set $\{q : p \leq \pi_i q\}$.
\end{proof}
This veil however does not sustain generative effects. This fact is a desirable feature.  Indeed, by composing a veil $\Phi: \sS \ra P^\I$ with $\pi_i$ we do not create additional generative effects.  We may then analyze $\Phi$ by separately analyzing the $\pi_i$'s.

There is also a more interesting means of recovering a single snapshot, achieved by aggregating everything.  In such a case, we retrieve the \emph{asymptotic} or \emph{limiting} behavior.  To this end, we define the map:
\begin{equation*}
  \colim : P^\I \ra P
\end{equation*}
that sends $a \in P^\I$ to $\vee_i a_i$.

The map $\colim$ is not always a veil.  As an example, suppose $\I =  \bZ^{\geq 0}$ (canonically preordered) and $P = \{0,1\}$ with $0 \leq 1$. Indeed, the set $\{a \in \{0,1\}^\bZ : 1 \leq \colim a\}$ does not admit a minimum element.  Regardless, the map $\colim$ is well behaved towards existing generative effects. Indeed:
\begin{Pro}
 The map $\colim$ preserves arbitrary joins, whenever they exist.
\end{Pro}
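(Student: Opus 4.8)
The plan is to exhibit $\colim$ as a left adjoint and to exploit the order-theoretic content of that adjunction directly, rather than manipulating nested joins. First I would record that $\colim$ is order-preserving: if $a \leq b$ in $P^\I$, then $a_i \leq b_i$ for every $i$ by the pointwise definition of the order on $P^\I$, so $\colim(a) = \bigvee_i a_i \leq \bigvee_i b_i = \colim(b)$. Next I would introduce, for each $p \in P$, the constant map $\hat p \in P^\I$ sending every $i$ to $p$; this map is trivially order-preserving, hence genuinely lives in $P^\I$. The single observation driving the whole argument is the equivalence
\begin{equation*}
  \colim(a) \leq p \quad \Longleftrightarrow \quad a \leq \hat p,
\end{equation*}
valid for every $a \in P^\I$ and $p \in P$: indeed $\colim(a) = \bigvee_i a_i \leq p$ holds if and only if $a_i \leq p$ for all $i$, which is exactly the pointwise inequality $a \leq \hat p$.

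Now suppose a family $\{a^\alpha\}_{\alpha \in A}$ in $P^\I$ admits a join $b = \bigvee_\alpha a^\alpha$ there. I would verify that $\colim(b)$ is the least upper bound of $\{\colim(a^\alpha)\}$ in $P$. It is an upper bound because each $a^\alpha \leq b$ and $\colim$ is order-preserving, giving $\colim(a^\alpha) \leq \colim(b)$. For minimality, let $p \in P$ be any upper bound, so that $\colim(a^\alpha) \leq p$ for all $\alpha$. By the displayed equivalence this says $a^\alpha \leq \hat p$ for all $\alpha$, i.e.\ $\hat p$ is an upper bound of the family in $P^\I$; since $b$ is the least such upper bound, $b \leq \hat p$, and applying the equivalence once more yields $\colim(b) \leq p$. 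Hence $\colim(b) = \bigvee_\alpha \colim(a^\alpha)$, which is the assertion.

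There is essentially no deep obstacle here; the only points requiring care are bookkeeping about existence. I would note that the statement presupposes $\colim$ is defined, that is, that $\bigvee_i a_i$ exists for the maps in play, and that the conclusion is conditioned on the join $b$ existing in $P^\I$ — this is exactly what ``whenever they exist'' buys us. The mildly delicate step is the minimality argument, where I must resist assuming that the pointwise joins $\bigvee_\alpha a^\alpha_i$ exist in $P$; the adjunction phrasing sidesteps this entirely, since $\hat p$ is available unconditionally and the inequality $b \leq \hat p$ is read off pointwise straight from the definition of the order on $P^\I$, with no appeal to how $b$ is computed. This is also why I prefer the adjoint argument over the interchange-of-suprema computation, which would force me to justify both the existence and the pointwise form of the relevant joins.
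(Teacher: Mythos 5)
Your proof is correct, but it follows a genuinely different route from the paper's. The paper's proof is a one-line interchange of suprema, written for a pair: $(\vee_i a_i) \vee (\vee_i b_i) = \vee_i (a_i \vee b_i)$, both sides being the least upper bound of the combined family $\{a_i\} \cup \{b_i\}$; this implicitly uses the fact (recorded earlier in the paper) that joins in $P^\I$ are computed pointwise when the pointwise joins exist in $P$. You instead exhibit $\colim$ as the lower adjoint of the constant-diagram embedding $p \mapsto \hat p$, via the equivalence $\colim(a) \leq p \iff a \leq \hat p$, and run the standard ``lower adjoints preserve joins'' argument. Your route buys two things: it handles arbitrary families with no extra indexing effort, and, as you note, it never assumes the join $b = \vee_\alpha a^\alpha$ in $P^\I$ is computed pointwise, so it also covers the (exotic) case where that join exists in $P^\I$ while the pointwise joins $\vee_\alpha a^\alpha_i$ fail to exist in $P$ --- a case the paper's computation does not literally address. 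It also ties neatly into the paper's own framework: your displayed equivalence says precisely that the pair $(\colim,\, p \mapsto \hat p)$ is a Galois connection, so your argument is an instance of the principle behind Proposition \ref{Pro:Galois}(v), that left adjoints commute with joins. What the paper's route buys is brevity and concreteness: in the intended applications (e.g.\ $P = \Power\Sigma$ or any complete lattice) pointwise joins always exist, and the identity is immediate. Your bookkeeping about existence --- requiring $\colim$ to be defined on the maps in play and conditioning on the join existing in $P^\I$ --- is accurate and consistent with the proposition's ``whenever they exist'' proviso.
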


\begin{proof}
  Trivially $(\vee_i a_i) \vee (\vee_i b_i) = \vee_i (a_i \vee b_i)$ being the least upper-bound of $\{a_i\}\cup\{b_i\}$.
\end{proof}
We can thus aggregate phenomes, in the case of the dynamical contagion example over $\Power\Sigma$, without creating new generative effects. %
Let $\I$ be a preordered set and $P$ be a complete lattice.  Recall that $\L_{P^\I}$ (resp. $\L_P$) denotes the set of maps $P^\I \ra P^\I$ (resp. $P \ra P$) satisfying A.1, A.2 and A,3.

\begin{Pro}  For $a \in P$, let $a^\I$ denote the constant map in $\I \ra P$ with image $a$.  If $f \in \L_{P^\I}$, then the map $\agg f : P \ra P$:
  \begin{equation*}
    \agg f: a \mapsto \colim f(a^\I)
  \end{equation*}
  belongs to $\L_P$.
\end{Pro}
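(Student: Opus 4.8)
The plan is to exploit the adjunction between constant maps and $\colim$, and to verify the three closure axioms for $\agg f$ one at a time, reducing idempotence to a single inequality that I expect to be the crux. First I would record the two order-preserving maps in play: the ``constant'' map $(\cdot)^{\I} : P \ra P^{\I}$, $a \mapsto a^{\I}$, and $\colim : P^{\I} \ra P$, $x \mapsto \join_i x_i$, which is order-preserving since it preserves joins (by the preceding proposition). The key bookkeeping identity is $\colim(a^{\I}) = \join_i a = a$ for every $a \in P$ (using that $\I$ is nonempty), so that $\colim \circ (\cdot)^{\I} = \id_P$, while $x \leq (\colim x)^{\I}$ for all $x \in P^{\I}$. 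Since $\agg f = \colim \circ f \circ (\cdot)^{\I}$, these facts drive everything.

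For A.1, I would apply A.1 for $f$ to the constant map $a^{\I}$ to get $a^{\I} \leq f(a^{\I})$, then push through the order-preserving $\colim$ and use $\colim(a^{\I}) = a$ to conclude $a = \colim(a^{\I}) \leq \colim f(a^{\I}) = \agg f(a)$. For A.2, if $a \leq b$ in $P$ then $a^{\I} \leq b^{\I}$ pointwise; applying A.2 for $f$ gives $f(a^{\I}) \leq f(b^{\I})$, and applying $\colim$ gives $\agg f(a) \leq \agg f(b)$. Both steps are routine once the infrastructure above is in place.

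The substance is A.3. Writing $b := \agg f(a) = \colim f(a^{\I})$, idempotence asks for $\colim f(b^{\I}) = b$. One inequality is free: since $f(a^{\I})$ is order-preserving with join $b$, we have $f(a^{\I}) \leq b^{\I}$, so by A.2 and A.3 for $f$ we get $f(a^{\I}) = f(f(a^{\I})) \leq f(b^{\I})$, whence $b = \colim f(a^{\I}) \leq \colim f(b^{\I}) = \agg f(b)$ (this is just A.1 for $\agg f$ again). The reverse inequality $\colim f(b^{\I}) \leq b$ is where the real work lies: it amounts to showing that the constant map $b^{\I}$, sitting at the level of the colimit, is already a fixed point of $f$, equivalently that applying $f$ to $b^{\I}$ cannot produce coordinates exceeding $b$.

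I expect this last point to be the main obstacle. The difficulty is that $b^{\I}$ dominates the fixed point $f(a^{\I})$ strictly at the ``early'' coordinates of $\I$, and a bare closure operator $f$ may exploit this extra room to generate values above $b$; controlling it seems to require an interaction between $f$ and colimits---for instance a cofinality argument on $\I$, or a continuity property of $f$ with respect to the join computing $\colim$---rather than the closure axioms alone. My plan is therefore to isolate exactly this claim, that $f(b^{\I})$ has join at most $b$, and to prove it using the order-preserving structure of the maps in $P^{\I}$ together with whatever preservation of colimits the systems in this dynamical contagion setting supply; this is the step I would budget the most care for.
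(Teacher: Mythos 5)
Your treatment of A.1 and A.2 coincides with the paper's proof: apply A.1 (resp.\ A.2) of $f$ to constant maps, push the result through the order-preserving $\colim$, and use $\colim(a^\I)=a$. For A.3, however, your proposal stops short of a proof: you reduce idempotence to the inequality $\colim f(b^\I)\leq b$, where $b=\colim f(a^\I)$, and then defer that claim to ``whatever preservation of colimits'' the setting supplies, without establishing it. As written, this is a gap, and it is the decisive one.

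But your diagnosis of where the difficulty lies is correct, and it is sharper than the paper's own argument. The paper disposes of A.3 with the chain $\colim f\big((\colim f(a^\I))^\I\big)=\colim\big(\colim f(a^\I)\big)^\I=\colim f(a^\I)$, justifying the first equality ``as $f$ satisfies A.3.'' That justification is empty: idempotence gives $f(f(a^\I))=f(a^\I)$, but the constant map $b^\I$ need not lie in the image of $f$, so nothing forces $f(b^\I)$ to have colimit $b$. In fact the claim fails for general $f\in\L_{P^\I}$. Take $P=\{0<1<2\}$ and $\I=\{1<2\}$, so $P^\I$ is the poset of pairs $(x_1,x_2)$ with $x_1\leq x_2$, and let $f$ be the closure operator on $P^\I$ whose fixed-point set is $\{(0,1),(2,2)\}$ (this set contains the top and is closed under meets, so such an $f$ exists). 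Then $f\big((0,0)\big)=(0,1)$ and $f\big((1,1)\big)=(2,2)$, hence $\agg f(0)=1$ and $\agg f(1)=2$, so $\agg f(\agg f(0))=2\neq 1=\agg f(0)$: the map $\agg f$ satisfies A.1 and A.2 but not A.3. So the proposition, at the stated level of generality, is false, and the extra ingredient you anticipated --- a compatibility between $f$ and the join computing $\colim$ --- is genuinely necessary, not an artifact of your approach. It does hold for the dynamical contagion systems, because their rules are time-invariant and trajectories over a finite $\Sigma$ stabilize, which makes $(\colim x)^\I$ a fixed point of $f$ whenever $x$ is; isolating and proving exactly that property, as you propose, is the only way to close the argument, and the paper's proof does not do it.
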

\begin{proof}
 (A.1) Since $a^\I \leq f(a^\I)$ by A.1 of $\L_{P^\I}$, then $\colim a^\I \leq \colim f(a^\I)$ as $\colim$ is order-preserving. (A.2) If $a \leq b$, then $a^\I \leq b^\I$.  It then follows that $f(a^\I) \leq f(b^\I)$ by A.2 of $\L_{P^\I}$, and thus $\colim f(a^\I) \leq \colim f(b^\I)$.  (A.3) Finally, $\colim f (\colim f(a^\I))^\I = \colim (\colim f(a^\I))^\I = \colim f(a^\I)$ as $f$ satisfies A.3 of $\L_{P^\I}$.
\end{proof}
In the case where $P = \Power\Sigma$ and $\I = \bZ^{\geq 0}$, the map $\agg$ can be interpreted to send a system with syntactic description $\N$ to one where all pairs $(S,d)$ for element $i$ are replaced by $(S,0)$.  In general, we recover the following commutative situation:

\begin{Pro}
  If $\I$ is a preordered set and $P$ is a complete lattice, then the diagram:
  \begin{equation*}
    \begin{CD}
      \L_{P^\I} @>\eval>> P^\I \\
      @V\agg VV               @VV\colim V\\
      \L_P  @>\eval>> P
    \end{CD}
  \end{equation*}
  commutes, i.e. $\colim \circ \eval = \eval \circ \agg$. 
\end{Pro}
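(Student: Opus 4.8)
The plan is to notice that, for a closure operator, $\eval$ is nothing but evaluation at the bottom element, so that both composites around the square collapse to the single expression $\colim\big(f(\bot^\I)\big)$, where $\bot$ denotes the minimum of $P$ and $\bot^\I \in P^\I$ the constant map at $\bot$.

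First I would record the key lemma: if $c$ is a map satisfying A.1, A.2, A.3 on a complete lattice $L$ with minimum $\bot_L$, then its least fixed-point $\eval(c)$ equals $c(\bot_L)$. Indeed $c(\bot_L)$ is a fixed-point by A.3, and if $y = c(y)$ is any fixed-point then $\bot_L \leq y$ forces $c(\bot_L) \leq c(y) = y$ by A.2; hence $c(\bot_L)$ is the least one. This is exactly the computation already used in Section \ref{sec:contagionSection}, where the limiting set was identified as $f_\N(\emptyset)$.

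Second I would pin down the minimum of $P^\I$. As noted earlier, joins and meets in $P^\I$ are computed pointwise, so the completeness of $P$ makes $P^\I$ a complete lattice; its minimum is the constant map $\bot^\I$, which is order-preserving and lies pointwise below every $a \in P^\I$. I would also invoke the preceding proposition to know $\agg f \in \L_P$, so that $\eval(\agg f)$ is meaningful.

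Third I would simply unwind both routes. Along the top-then-right route, the lemma in $P^\I$ gives $\eval(f) = f(\bot^\I)$, whence $\colim(\eval(f)) = \colim\big(f(\bot^\I)\big)$. Along the left-then-bottom route, the lemma in $P$ gives $\eval(\agg f) = (\agg f)(\bot)$, and the defining formula $(\agg f)(a) = \colim\big(f(a^\I)\big)$ evaluated at $a = \bot$ yields $(\agg f)(\bot) = \colim\big(f(\bot^\I)\big)$. Both composites therefore equal $\colim\big(f(\bot^\I)\big)$, and the square commutes. I do not expect a genuine obstacle: the whole content is the identity $\eval(c) = c(\bot)$ for closure operators together with the identification of the bottom of $P^\I$ with the constant bottom map. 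The only points deserving care are checking that the least fixed-point actually exists and equals this value (which follows from completeness of $P$) and that $\bot^\I$ is order-preserving so that it truly belongs to $P^\I$; both are immediate.
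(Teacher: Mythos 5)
Your proposal is correct and follows essentially the same route as the paper's (one-line) proof: both reduce everything to the identity $\eval(c) = c(\bot)$ for closure operators and the observation that the minimum of $P^\I$ is the constant map $\bot^\I$, so that both composites equal $\colim f(\bot^\I)$. Your write-up merely makes explicit the lemma and the pointwise-order facts that the paper leaves implicit.
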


\begin{proof}
  If $0$ denote the minimum element of $P$, then $\eval (\agg f) = (\agg f) (0) = \colim f(0^\I) = \eval f$.
\end{proof}
We have been \emph{projecting} the phenome in $P^\I$ to a phenome in $P$, while making sure that joins are preserved.  Join-preservation then does not create new generative effects.  The projection however will often remove some of the original generative effects.  One ought to then think of such procedures as focusing on a particular refined aspect of the dynamical phenome.

\subsection{From veils to dynamical veils.}

Any veil is trivially a dynamical veil over the one-point poset.  We can however obtain a veil that is non-trivially dynamical by considering the systems to be given by a \emph{filtration}.

\begin{Def}
 Let $\I$ be a preordered set.  An $I$-filtration of a system $s$ in a preordered set $\sS$ is an order-preserving map $F : \I \ra \sS$ such that $\colim F = s$.
\end{Def}
An $\I$-filtration then provides a \emph{successive} approximation for the system $s$.  Every veil $\Phi : \sS \ra \P$ induces a canonical veil $\Phi^\I : \sS^\I \ra \P^\I$.  Defining a system by an $\I$-filtrations can be seen to equip it with adequate temporal information.

\subsubsection{In the behavioral approach.}

Through the lens of the behavioral approach to systems theory, such temporal information can be seen as a further refinement of constraints. Let $\I$ be a linearly ordered set of $n$ elements, and consider the set $\bU = U_1 \times \cdots \times U_n$.  For every $i$, let $p_i : \bU \ra U_i$ be the canonical projection. For every Willems' system $(\bU,\B)$, the sets:
\begin{equation*}
  F_i \B = \pi_i^{-1} \pi_i \B
\end{equation*}
then define an $\I$-filtration of $\B$.  Every set $U_i$ can be seen to represent a variable, an $\I$-filtration can then be seen to successively \emph{grow} the constraints to connect different variables.

\subsection{Why care about dynamical veils?}

Dynamical veils can be used to incorporate \emph{time} in generative effects.  There are, however, various other ways of incorporating time in generative effects, such as having the phenome contain timed trajectories.  However, going from a veil to a dynamical veil, by resolving a system into an $\I$-filtration, may allow us to \emph{spread} generative effects.  The eventual goal fully developed in \cite{ADAM:Dissertation} Ch 8 and exemplified in \cite{ADAM:Dissertation} Ch 2, 6 and 7 is to develop \emph{cohomology} theories for understanding such effects.  These dynamical veils may allow us to develop \emph{relative} theories.  It other terms, it may allows to ask (and answer) the following informal question:  suppose we have observed a cascade, and its effects, up to time $T$, what new effects resulting from the cascade will appear at time $T+m$?  This direction will, however, not be further pursued in the paper.

\section{Factorization and lifts.} 
The second condition V.2 of a veil may seem to be restrictive.  Some situations may be formalized in a way that does not yield such a condition.  We show how one can recover a veil from non-veil-like situations.

\subsection{Factoring and retrieving the intuition.}

Let $P$ and $Q$ be cocomplete preordered sets (admitting arbitrary joins) and let $f : P \ra Q$ be an order-preserving map.  It can be the case that $P$ and $Q$ contain elements that make property V.2 fail for $f$, but that are irrelevant to any situation of generative effects possibly suggested by $f$.  There exists a systematic way to get rid of such elements, and potentially retrieve a hidden veil.

\subsubsection{On the $P$ side.} Define a binary relation $\sim$ on $P$ such that $p \sim q$ if, and only if, $f(p \vee x) = f(q \vee x)$ for all $x \in P$.

\begin{Pro}
  The relation $\sim$ is an equivalence relation.
\end{Pro}
\begin{proof}
 The relation can be trivially checked to be reflexive, symmetric and transitive.
\end{proof}
The relation $\sim$ is further compatible with the structure of $P$ when viewed as a join semilattice.

\begin{Pro}
  The relation $\sim$ is a congruence relation on $(P,\vee)$, i.e., if $a \sim b$ and $a' \sim b'$ then $a \vee a' \sim b\vee b'$.
\end{Pro}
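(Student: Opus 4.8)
The statement to prove is that $\sim$ is a congruence on the join-semilattice $(P,\vee)$: if $a \sim b$ and $a' \sim b'$, then $a \vee a' \sim b \vee b'$. By definition of $\sim$, I need to show that $f\bigl((a \vee a') \vee x\bigr) = f\bigl((b \vee b') \vee x\bigr)$ for all $x \in P$. The natural strategy is to change one summand at a time, using the two hypotheses in succession, with associativity and commutativity of $\vee$ to regroup the terms so that each application of the definition of $\sim$ sees the remaining terms bundled into a single ``$x$''.

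**The key steps.** First I would fix an arbitrary $x \in P$ and rewrite the target as $f\bigl(a \vee (a' \vee x)\bigr)$, grouping $a' \vee x$ together. Since $a \sim b$, the defining condition applied with the test element $a' \vee x$ gives
\begin{equation*}
  f\bigl(a \vee (a' \vee x)\bigr) = f\bigl(b \vee (a' \vee x)\bigr).
\end{equation*}
Second, I would regroup the right-hand side using commutativity and associativity of $\vee$ as $f\bigl(a' \vee (b \vee x)\bigr)$, and now apply $a' \sim b'$ with the test element $b \vee x$ to obtain
\begin{equation*}
  f\bigl(a' \vee (b \vee x)\bigr) = f\bigl(b' \vee (b \vee x)\bigr).
\end{equation*}
Finally, regrouping once more yields $f\bigl((b \vee b') \vee x\bigr)$, and chaining the two equalities through transitivity gives $f\bigl((a \vee a') \vee x\bigr) = f\bigl((b \vee b') \vee x\bigr)$. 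Since $x$ was arbitrary, $a \vee a' \sim b \vee b'$.

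**The main obstacle.** Honestly there is no serious obstacle here; the proof is a routine two-step substitution. The only point requiring the slightest care is bookkeeping: one must apply each hypothesis with the \emph{correct} test element (bundling all the other summands, including the fixed $x$, into the single variable that the definition of $\sim$ quantifies over) and keep track of the associativity/commutativity regroupings so that the intermediate expressions match up. The essential reason this works is that the definition of $\sim$ quantifies over \emph{all} $x \in P$, which is exactly the strength needed to absorb the extra summand at each stage; a weaker relation testing only against a fixed element would not yield a congruence.
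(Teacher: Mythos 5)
Your proof is correct and follows essentially the same route as the paper's: both apply $a \sim b$ with test element $a' \vee x$, regroup by commutativity and associativity, and then apply $a' \sim b'$ with test element $b \vee x$. The only difference is that you spell out the bookkeeping that the paper compresses into a single chain of equalities.
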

\begin{proof}
  Suppose $a \sim b$ and $a' \sim b'$. For every $x \in P$, we  have:
  \begin{equation*}
    f(a \vee a' \vee x) = f(b \vee (a' \vee x)) = f( a' \vee (b \vee x) ) = f( b' \vee b \vee x).
  \end{equation*}
  The equalities follow from commutativity and associativity of $\vee$.
\end{proof}
To get a better understanding of $\sim$ we note:

\begin{Pro}
 If $a \leq c$ and $a \sim c$, then: for every $b$ where $a \leq b \leq c$, we have $b \sim c$.
\end{Pro}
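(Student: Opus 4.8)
The plan is to unwind the definition of $\sim$ and then squeeze $b$ between $a$ and $c$ using that $f$ is order-preserving. First I would record what the hypothesis $a \sim c$ buys us: by definition of $\sim$, it says precisely that $f(a \vee x) = f(c \vee x)$ for every $x \in P$. The goal $b \sim c$ unwinds, again by definition, to the family of equalities $f(b \vee x) = f(c \vee x)$ for all $x$, so it suffices to fix an arbitrary $x \in P$ and prove this single equality.

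Next I would exploit the chain $a \leq b \leq c$. Since the join is monotone in each argument, joining throughout with $x$ preserves the order, giving $a \vee x \leq b \vee x \leq c \vee x$. Applying the order-preserving map $f$ then yields $f(a \vee x) \leq f(b \vee x) \leq f(c \vee x)$. Combining this with the hypothesis $f(a \vee x) = f(c \vee x)$, the two outer terms coincide, so $f(b \vee x)$ is caught between two equal values of $Q$.

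The one point deserving care, and the only place where the ambient structure is used, is the passage from \emph{caught between equal values} to a genuine equality in $Q$. In a mere preorder this would only yield $f(b \vee x)$ equivalent to $f(c \vee x)$, not equal, which would not suffice since $\sim$ is defined through honest equality in $Q$. Here, however, $Q$ is assumed cocomplete and hence in particular finitely cocomplete, so by Proposition \ref{pro:antisymmetric} its relation $\leq$ is antisymmetric and $Q$ is a poset. Antisymmetry then upgrades $f(c \vee x) \leq f(b \vee x) \leq f(c \vee x)$ to the equality $f(b \vee x) = f(c \vee x)$. As $x$ was arbitrary, this establishes $b \sim c$. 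I expect no genuine obstacle here; the only thing to be vigilant about is explicitly invoking antisymmetry of $Q$ rather than silently treating $\leq$ as though it already separated points.
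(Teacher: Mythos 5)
Your proof is correct and follows essentially the same route as the paper's: join with an arbitrary $x$, apply the order-preserving $f$ to get $f(a \vee x) \leq f(b \vee x) \leq f(c \vee x)$, and conclude from $f(a \vee x) = f(c \vee x)$. The only difference is that you explicitly justify the final sandwich-to-equality step via antisymmetry of $\leq$ on $Q$ (valid, since $Q$ is cocomplete, hence finitely cocomplete, hence a poset by Proposition~\ref{pro:antisymmetric}), a point the paper's one-line proof leaves implicit.
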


\begin{proof}
  If $a \leq b \leq c$, then for every $x$, we have $f(a \vee x) \leq f(b \vee x) \leq f( c \vee x )$.
\end{proof}

\begin{Pro}
  If $a \sim b$, then $b \sim a\vee b$.
\end{Pro}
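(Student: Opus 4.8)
The plan is to unpack the definition of $\sim$ on both sides and exhibit the required equality by a single well-chosen substitution. By definition, proving $b \sim a \vee b$ amounts to verifying that $f(b \vee x) = f\big((a \vee b) \vee x\big)$ holds for every $x \in P$. So I would fix an arbitrary $x$ and aim to produce this one equation, knowing only the hypothesis $a \sim b$, i.e. that $f(a \vee y) = f(b \vee y)$ for all $y \in P$.

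The key step is the substitution $y := b \vee x$ into the hypothesis. This yields
\begin{equation*}
  f\big(a \vee (b \vee x)\big) = f\big(b \vee (b \vee x)\big).
\end{equation*}
I would then simplify each side using the (already noted) commutativity, associativity and idempotency of $\vee$: the left-hand side rewrites as $f\big((a \vee b) \vee x\big)$, while the right-hand side collapses via $b \vee b = b$ to $f(b \vee x)$. Reading the resulting identity the other way around gives exactly $f(b \vee x) = f\big((a\vee b)\vee x\big)$, and since $x$ was arbitrary this is precisely the condition $b \sim a \vee b$.

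There is no real obstacle here beyond spotting the correct instance of the hypothesis: the content of the argument is that applying $a \sim b$ "inside the context $b \vee x$" lets the $a$ be upgraded to $a \vee b$ at the cost of a duplicated $b$ that idempotency absorbs. I would present it as a short direct computation rather than invoking the earlier congruence or interval propositions, although I note that the result is also consistent with them (it says every element is $\sim$-equivalent to its join with any element $\sim$-equivalent to it, sharpening the monotonicity observations recorded just above). The only point worth stating carefully is that all three semilattice identities for $\vee$ are being used, so that the rewriting of both sides is fully justified.
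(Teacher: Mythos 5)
Your proof is correct and coincides with the paper's own argument: the paper's direct computation $f(a \vee b \vee x) = f(b \vee b \vee x) = f(b \vee x)$ is exactly your substitution $y := b \vee x$ followed by idempotency of $\vee$. The paper also notes an alternative one-line justification via the previously established congruence property (from $a \sim b$ and $b \sim b$), but your direct route is the same as its primary computation.
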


\begin{proof}
  Indeed, we have that $a \sim b$ and $b \sim b$. The result then follows by congruence. Or directly, we have $f(a \vee b \vee x) = f(b \vee b \vee x) = f(b \vee x)$ 
\end{proof}
In particular, a maximal element (if it exists) of a congruence class of $\sim$ is the (unique) maximum element of the class.  The congruence classes induced by $\sim$ define a partition of $P$. These congruence classes inherit an order to yield a join-semilattice $P_\sim$.  The order relation $\leq$ on $P_\sim$ is defined as $C \leq C'$ if, and only if, there is an $a \in C$ and $a \in C'$ such that $a \leq a'$.  Equivalently, the join operation is defined as: if $a$ and $a'$ are in the classes $C$ and $C'$, then $C \vee C'$ is the congruence class containing $a \vee a'$.  The join operation is well defined as $\sim$ is a congruence relation. Let $\pi : P \ra P_\sim$ be the order-preserving surjective map that sends an element of $P$ to its congruence class. 
\begin{Pro}
  The map $\pi$ commutes with finite joins.
\end{Pro}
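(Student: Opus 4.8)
The plan is to observe that the claim is essentially a direct consequence of the way the join on $P_\sim$ was defined, supplemented by a short separate argument for the empty (nullary) join. Recall that for classes $C$ and $C'$ with representatives $a \in C$ and $a' \in C'$, the join $C \vee C'$ in $P_\sim$ is defined to be the class containing $a \vee a'$, and that this is well defined precisely because $\sim$ is a congruence relation on $(P,\vee)$, as already established. Writing $\pi(x)$ for the congruence class of $x$, this definition reads $\pi(a) \vee \pi(a') = \pi(a \vee a')$ for all $a, a' \in P$, which is exactly the statement that $\pi$ preserves binary joins.

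To finish, I would treat the empty join separately. Since $P$ is cocomplete it admits a minimum element (the join of the empty set); call it $0_P$. I would then show that $\pi(0_P)$ is the minimum of $P_\sim$, hence the value of the empty join there. Indeed $0_P \leq p$ for every $p \in P$, and $\pi$ is order-preserving (already shown), so $\pi(0_P) \leq \pi(p)$ for all $p$; because $\pi$ is surjective, every element of $P_\sim$ is of the form $\pi(p)$, and therefore $\pi(0_P)$ is the least element of $P_\sim$. Combining the two cases gives that $\pi$ commutes with all finite joins.

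The only point that needs genuine justification is the well-definedness of the join on $P_\sim$ — that the class of $a \vee a'$ does not depend on the chosen representatives $a, a'$. But this is not really an obstacle here: it is precisely the congruence property that $a \sim b$ and $a' \sim b'$ imply $a \vee a' \sim b \vee b'$, which was proven earlier, so the binary case collapses to an unwinding of the definition of $\vee$ on $P_\sim$. I therefore do not anticipate any serious difficulty; the content is bookkeeping, and the mild care required is only in handling the nullary join through the minimum element rather than folding it into the binary argument.
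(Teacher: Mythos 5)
Your proof is correct and takes essentially the same route as the paper, whose entire argument is that the claim follows from the construction of $P_\sim$, i.e.\ from the fact that $\sim$ is a congruence relation making the join $\pi(a)\vee\pi(a')=\pi(a\vee a')$ well defined. Your separate treatment of the nullary join (showing $\pi(0_P)$ is the minimum of $P_\sim$ via order-preservation and surjectivity) is a small extra care the paper leaves implicit, not a different approach.
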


\begin{proof}
  The statement follows by construction of $P_\sim$, namely from the fact that $\sim$ is a congruence relation. 
\end{proof}

\subsubsection{On the $Q$ side.} Define $\hat{Q}$ to be sub-joinsemilattice of $Q$ generated by $f(P)$.  Namely we define $\hat{Q}$ to be the elements of $f(P)$ along with all possible finite joins ordered by the partial order of $Q$.  Let $\iota: \hat{Q} \ra Q$ be the canonical order-preserving injective map.

\begin{Pro}
 The map $\iota$ commutes with finite joins.
\end{Pro}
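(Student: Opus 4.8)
The plan is to show that the join of two elements of $\hat{Q}$, when computed inside $\hat{Q}$, agrees with their join computed in $Q$; since $\iota$ is the inclusion, this agreement is precisely the assertion that $\iota$ preserves finite joins. The entire argument rests on one closure observation, so I would record that first and then read off the conclusion.

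First I would fix $x, y \in \hat{Q}$ and unwind the definition of $\hat{Q}$: each of $x$ and $y$ is a finite join (taken in $Q$) of elements of $f(P)$, say $x = f(p_1) \vee \cdots \vee f(p_m)$ and $y = f(q_1) \vee \cdots \vee f(q_n)$. Then the join $x \vee y$ computed in $Q$ equals $f(p_1) \vee \cdots \vee f(p_m) \vee f(q_1) \vee \cdots \vee f(q_n)$, which is again a finite join of elements of $f(P)$, and therefore lies in $\hat{Q}$. Thus $\hat{Q}$ is closed under the binary join operation inherited from $Q$. I would note in passing that $Q$, being cocomplete, is antisymmetric by Proposition \ref{pro:antisymmetric}, so ``the join in $Q$'' is unambiguous.

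The decisive second step is to argue that this same element is the join in $\hat{Q}$. Because $\hat{Q}$ carries the restriction of the order of $Q$, the element $x \vee y \in \hat{Q}$ is an upper bound of $x$ and $y$ in $\hat{Q}$; and any upper bound of $x, y$ lying in $\hat{Q}$ is a fortiori an upper bound in $Q$, hence dominates $x \vee y$. Therefore $x \vee y$ is the least upper bound of $x$ and $y$ in $\hat{Q}$, so the join in $\hat{Q}$ and the join in $Q$ coincide. Applying $\iota$ gives $\iota(x \vee y) = x \vee y = \iota(x) \vee \iota(y)$, the desired commutation for binary joins; the general finite case follows by induction, with the nullary join handled identically once a minimum element is present.

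There is no genuinely hard step here: the whole content is the closure of $\hat{Q}$ under the join of $Q$, together with the fact that $\hat{Q}$ carries the \emph{inherited} order rather than some freshly imposed one. The only obstacle worth flagging is bookkeeping — one must consistently distinguish the join computed in $\hat{Q}$ from the join computed in $Q$, since it is exactly their equality that is being established.
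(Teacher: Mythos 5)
Your proof is correct and takes essentially the same route as the paper, whose proof is simply the assertion that the claim ``immediately follows from the construction of $\hat{Q}$'': your two steps (closure of $\hat{Q}$ under joins computed in $Q$, and the observation that the inherited order makes the $Q$-join also the least upper bound within $\hat{Q}$) are exactly the unwinding of that one-line argument. The bookkeeping you flag, distinguishing the join in $\hat{Q}$ from the join in $Q$, is indeed the only content of the statement.
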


\begin{proof}
  The statement immediately follows from the construction of $\hat{Q}$.
\end{proof}

\subsubsection{Combined.}  The maps $\pi$ and $\iota$ may be used to factorize $f$.

\begin{Pro}
  There exists a unique map $g: P_{\sim} \ra \hat{Q}$ such that the diagram:
  \begin{equation*}
    \begin{CD}
      P_\sim @>g>> \hat{Q}\\
      @A\pi AA       @VV\iota V\\
      P @>f>> Q
    \end{CD}
  \end{equation*}
  commutes, i.e., $f = \pi g \iota$.
\end{Pro}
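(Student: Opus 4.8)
The plan is to define $g$ directly on congruence classes and then verify the four requirements: well-definedness, compatibility with the order, commutativity of the square, and uniqueness. Given a class $C \in P_\sim$, I would pick any representative $p \in C$ (so that $\pi(p) = C$) and set $g(C) := f(p)$. Since $f(p) \in f(P) \subseteq \hat{Q}$, this value lands in $\hat{Q}$ as required, so the only content is to check that the assignment is independent of the chosen representative.

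This representative-independence is the crux, and is the step I expect to be the main obstacle. Suppose $p \sim q$, so by definition $f(p \vee x) = f(q \vee x)$ for every $x \in P$. Taking $x = q$ gives $f(p \vee q) = f(q \vee q) = f(q)$, and taking $x = p$ gives $f(p) = f(p \vee p) = f(q \vee p)$; since $\vee$ is commutative, $p \vee q = q \vee p$, whence $f(p) = f(q)$. Thus $g$ is well-defined on classes. Note this uses only the relation $\sim$ together with commutativity and idempotence of $\vee$, so no appeal to a minimum element of $P$ is needed.

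Order-preservation of $g$ then follows from the definition of $\leq$ on $P_\sim$: if $C \leq C'$, pick representatives $a \in C$ and $a' \in C'$ with $a \leq a'$; then $g(C) = f(a) \leq f(a') = g(C')$, because $f$ is order-preserving and the order on $\hat{Q}$ is inherited from $Q$. Commutativity of the square is immediate: for every $p \in P$ we have $\iota\, g\, \pi(p) = \iota(g(\pi(p))) = \iota(f(p)) = f(p)$, since $\pi(p)$ is the class of $p$, $g$ sends it to $f(p)$, and $\iota$ is the inclusion; hence $\iota g \pi = f$.

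Finally, for uniqueness, suppose $g'$ also makes the diagram commute. Because $\pi$ is surjective, every element of $P_\sim$ has the form $\pi(p)$ for some $p \in P$; then $\iota\, g'(\pi(p)) = f(p) = \iota\, g(\pi(p))$, and since $\iota$ is injective we conclude $g'(\pi(p)) = g(\pi(p))$ for all $p$, i.e.\ $g' = g$. In summary, the entire argument reduces to the two-substitution trick establishing $p \sim q \Rightarrow f(p) = f(q)$, with the remaining verifications being formal consequences of the surjectivity of $\pi$ and the injectivity of $\iota$.
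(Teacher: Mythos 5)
Your proof is correct and takes essentially the same route as the paper's: define $g$ on a representative of each class by applying $f$, observe the square commutes, and get uniqueness from surjectivity of $\pi$ together with injectivity of $\iota$. Your two-substitution argument (taking $x=q$ and $x=p$ to show $p\sim q$ implies $f(p)=f(q)$) is a welcome explicit verification of a fact the paper's terse proof leaves implicit---the paper sidesteps well-definedness by fixing a representative $\alpha C$ per class, but commutativity of the diagram still silently relies on $f$ being constant on congruence classes.
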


\begin{proof}
  For every class $C$ in $P_\sim$, let $\alpha C$ denote a fixed element of $C$. Define $g$ to be $C \mapsto f(\alpha C)$.  The diagram commutes, and the map $g$ is unique as $\iota$ is injective and $\pi$ is surjective.
\end{proof}
We then get:

\begin{Cor} For all $a,b \in P$, we have:
  \begin{equation*}
    f (a \vee b) \neq f( a) \vee f(b) \quad \text{ iff } \quad  g (\pi a \vee_{P_\sim} \pi b)  \neq  g (\pi a) \vee_{\hat{Q}} g(\pi b).
  \end{equation*}
\end{Cor}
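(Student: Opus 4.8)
The plan is to turn the claimed equivalence of \emph{inequalities} into the negation of an equivalence of \emph{equalities}, and then verify the latter by a short chain of rewrites using the three facts already proved about the factorization: that $\pi$ commutes with finite joins, that $\iota$ commutes with finite joins, and that $\iota$ is injective. Throughout I would use the factorization in the pointwise form $f(x) = \iota\big(g(\pi x)\big)$ for all $x \in P$; in particular $f(a) = \iota(g(\pi a))$ and $f(b) = \iota(g(\pi b))$.

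First I would rewrite the left-hand side of the $P$-level equality. Since $\pi$ commutes with finite joins, $\pi(a \vee b) = \pi a \vee_{P_\sim} \pi b$, so
\[
f(a \vee b) = \iota\big(g(\pi a \vee_{P_\sim} \pi b)\big).
\]
Next I would rewrite the right-hand side. Before anything else I must observe that the join $g(\pi a) \vee_{\hat{Q}} g(\pi b)$ actually exists in $\hat{Q}$: by construction $\hat{Q}$ is the sub-join-semilattice of $Q$ generated by $f(P)$, and $g(\pi a) = f(a)$ and $g(\pi b) = f(b)$ both lie in $f(P)$, so their finite join is present in $\hat{Q}$. With this in hand, applying that $\iota$ commutes with finite joins gives
\[
f(a) \vee f(b) = \iota(g(\pi a)) \vee \iota(g(\pi b)) = \iota\big(g(\pi a) \vee_{\hat{Q}} g(\pi b)\big).
\]

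Comparing the two displays, the $P$-level equality $f(a \vee b) = f(a) \vee f(b)$ is equivalent to the equality $\iota\big(g(\pi a \vee_{P_\sim} \pi b)\big) = \iota\big(g(\pi a) \vee_{\hat{Q}} g(\pi b)\big)$ between two elements of the image of $\iota$. Because $\iota$ is injective, this last equality holds if and only if $g(\pi a \vee_{P_\sim} \pi b) = g(\pi a) \vee_{\hat{Q}} g(\pi b)$, which is exactly the $P_\sim$-level equality. Negating both sides of the established equivalence of equalities yields the corollary.

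I do not expect a genuine obstacle, as the argument is pure bookkeeping; the only steps requiring care are locating precisely where each hypothesis is consumed. In particular, the passage from an equality in $Q$ to the corresponding equality in $\hat{Q}$ is exactly where injectivity of $\iota$ is used, and one must first secure the existence of the join in $\hat{Q}$ (not merely in $Q$) before invoking join-preservation of $\iota$.
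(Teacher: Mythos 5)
Your proof is correct and is precisely the argument the paper leaves implicit: the corollary is stated there without proof, as an immediate consequence of the factorization $f = \iota\, g\, \pi$ together with the preceding propositions that $\pi$ and $\iota$ commute with finite joins and that $\iota$ is injective. Your bookkeeping—securing the existence of the join $g(\pi a)\vee_{\hat{Q}} g(\pi b)$ inside $\hat{Q}$ before invoking join-preservation of $\iota$, and using injectivity of $\iota$ to descend from the equality in $Q$ to the equality in $\hat{Q}$—is exactly the intended chain of rewrites.
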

Furthermore:
\begin{Pro}
  If $f : P \ra Q$ is injective, then (i) $\pi$ is the identity and, (ii)  $g$ is a veil if, and only if, for every $p$ and $p'$, we have:
  \begin{equation*}
    \text{if } f(p) \leq f(p'), \text{then } p \leq p'.
  \end{equation*}
\end{Pro}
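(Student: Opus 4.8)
The plan is to handle the two claims in order, since (i) provides the simplification that makes (ii) tractable. For (i), I would show that injectivity of $f$ collapses the congruence $\sim$ to equality. Suppose $p \sim p'$, which by definition means $f(p \vee x) = f(p' \vee x)$ for every $x \in P$. Specializing to $x = p'$ gives $f(p \vee p') = f(p')$, and specializing to $x = p$ gives $f(p) = f(p \vee p')$; chaining these yields $f(p) = f(p')$, so $p = p'$ by injectivity. (Equivalently, one may take $x$ to be the minimum element $\vee\emptyset$ of the cocomplete $P$, giving $f(p) = f(p')$ directly.) Hence every congruence class is a singleton, $\pi$ is the identity, and $g$ is simply $f$ with its codomain corestricted to $\hat Q$, so that $\iota g = f$ and $g(p) = f(p)$ for all $p$.

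For (ii) I would first record that $\hat Q$ is a legitimate space of phenomes: it is closed under finite joins by construction, and $f(\vee\emptyset)$ is its minimum, since $\vee\emptyset \le p$ forces $f(\vee\emptyset) \le f(p)$ and hence $f(\vee\emptyset) \le$ any finite join of such images, so $\hat Q$ is finitely cocomplete. Since $\iota$ is an order-embedding, for every $q \in \hat Q$ we have $q \le g(s)$ if and only if $\iota(q) \le f(s)$, which lets me translate the veil condition V.2 into a statement purely about $f$ inside $Q$. For the \emph{only if} direction, assume $g$ is a veil, fix $p$, and apply V.2 to $q := g(p) = f(p) \in \hat Q$: the set $\{s : f(p) \le f(s)\}$ has a minimum $m$. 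As $p$ lies in this set, $m \le p$, whence $f(m) \le f(p)$ by order-preservation; as $m$ lies in the set, $f(p) \le f(m)$; so $f(m) = f(p)$ and injectivity forces $m = p$. Thus $p = \min\{s : f(p) \le f(s)\}$, so whenever $f(p) \le f(p')$ the element $p'$ belongs to that set and therefore $p = m \le p'$, which is exactly the stated order-reflecting property.

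For the \emph{if} direction, assume $f(p) \le f(p')$ implies $p \le p'$. Property V.1 is immediate from order-preservation of $f$. For V.2, take any $q \in \hat Q$ and write it as a finite join $q = f(p_1) \vee \cdots \vee f(p_k)$ with $p_i \in P$; I claim $p^\ast := p_1 \vee \cdots \vee p_k$ is the minimum of $\{s : q \le g(s)\}$. It lies in the set because $f(p_i) \le f(p^\ast)$ gives $\iota(q) = \vee_i f(p_i) \le f(p^\ast)$, and if $q \le g(s)$ then $f(p_i) \le \iota(q) \le f(s)$ for each $i$, so the hypothesis yields $p_i \le s$ and hence $p^\ast \le s$. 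The main obstacle, and the step demanding the most care, is the bookkeeping across the three ordered sets: one must consistently transport the order of $\hat Q$ back to $Q$ through the order-embedding $\iota$, and represent a general element of $\hat Q$ as a finite join of images $f(p_i)$ in order to exhibit the required minimum. The genuinely nonroutine observation is the identification $m = p$ in the forward direction, which is precisely where injectivity is used.
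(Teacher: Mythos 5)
Your proof is correct and takes essentially the same approach as the paper's: part (i) by collapsing the congruence $\sim$ via injectivity, the forward direction of (ii) by applying V.2 to $q = g(p)$ and using injectivity to identify the resulting minimum with $p$, and the converse by writing $q \in \hat{Q}$ as a finite join $\vee_i f(p_i)$ and exhibiting $\vee_i p_i$ as the minimum of $\{s : q \leq g(s)\}$. You are in fact somewhat more explicit than the paper (which routes the forward direction through the meet $g(p) \wedge g(p')$ but relies on the very same injectivity step, stated without justification), so no further changes are needed.
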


\begin{proof}  
  Let $f$ be injective.  Then $p \sim p'$ implies $p = p'$, and $\pi$ is thus the identity.

  Suppose that $g$ is a veil. %
If $f(p) \leq f(p')$, then $g(p) \leq g(p')$. The greatest lower-bound of $g(p)$ and $g(p')$ then exists and is $g(p) \wedge g(p') = g(p)$.  If we consider the set $T = \{t : g(p) \wedge g(p') \leq g(t)\}$, then it follows that $p, p' \in T$ and $p$ is the (unique) minimum of $T$.  We then have $p \leq p'$.

  Conversely, suppose that $f(p) \leq f(p')$ implies $p \leq p'$ and consider the set $\{p : q \leq g(p)\}$ for $q\in \hat{Q}$.   If $q \in f(P)$ then, as $f$ is injective, the set $\{p : q \leq g(p)\}$ admits a unique minimum, the preimage of $q$ with respect to $f$.  If $q \notin f(P)$, then $q = \vee_i f(p_i)$ is a finite join of elements in $f(P)$.  We also have: \begin{align*}
    \{ p : \vee_i f(p_i) \leq g(p)\} &= \cap_i \{ p : f(p_i) \leq g(p)\} \\
    &= \cap_i \{ p : f(p_i) \leq f(p)\} \\
    &= \cap_i \{ p : p_i \leq p\}\\
    &= \{p : \vee_i p_i \leq p\}
  \end{align*}
This set has $\vee_i p_i$ as a minimum element.
\end{proof}
Suppose that $P$ is finite.  Then, every congruence class $C$ in $\sim$ admits a maximum element, the join of all its elements.  We define $c : P \ra P$ to be the map that sends an element $p$ to the maximum element of its congruence class $\pi(p)$.  Notice that $c$ is a closure operator on $P$.

\begin{Pro}
  Let $P$ be finite. If $f : P \ra Q$ is surjective, then (i) $\iota$ is the identity, and (ii) $g$ is a veil if, and only if, $f( \wedge_i p_i) = \wedge_i f(p_i)$ for every (finite) collection $\{p_i\} \subseteq c(P)$.
\end{Pro}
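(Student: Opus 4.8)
The plan is to dispatch (i) immediately and then reduce (ii) to meet-preservation. Since $f$ is surjective, $f(P)=Q$, so the sub-join-semilattice $\hat{Q}$ generated by $f(P)$ is all of $Q$ (which is already closed under finite joins), and $\iota:\hat{Q}\ra Q$ is the identity; consequently $f = g\pi$ with $g$ order-preserving by construction. Because $P$ and $Q$ are cocomplete they are also complete, so meets exist in both; moreover the closure operator $c$ and its fixed-point set $c(P)$ are already defined, with $c(p)$ the maximum of the $\sim$-class of $p$. The finite join-semilattice $P_\sim$ has a minimum (the class of the minimum of $P$) and is finite, hence is a finite lattice admitting finite meets. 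By Proposition~\ref{Pro:veilasmeets} in its finite-meet form, $g$ is a veil if, and only if, it preserves finite meets, so the whole task becomes translating meet-preservation of $g:P_\sim\ra Q$ into the asserted condition on $f$ over $c(P)$.

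The bridge I would use is the maximal-representative map $\mu:P_\sim\ra c(P)$ sending a class $C$ to its maximum, so that $\mu(\pi(p))=c(p)$ and $g(C)=f(\mu(C))$ for every $C$. I would verify two facts. First, $\mu$ is an order-isomorphism onto $c(P)$ (with the order inherited from $P$): since $\pi$ preserves joins, $C\leq C'$ iff $C\vee C'=C'$ iff $\pi(\mu(C)\vee\mu(C'))=C'$, and because $\mu(C)\vee\mu(C')\geq\mu(C')$ lies in the class with maximum $\mu(C')$, this holds iff $\mu(C)\vee\mu(C')=\mu(C')$, i.e.\ iff $\mu(C)\leq\mu(C')$; bijectivity of $\mu$ is immediate from the definition of $c$. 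Second, $c(P)$ is closed under meets taken in $P$: for $\{p_i\}\subseteq c(P)$, monotonicity gives $c(\wedge_i p_i)\leq c(p_j)=p_j$ for each $j$, whence $c(\wedge_i p_i)\leq\wedge_i p_i$, and with extensivity $c(\wedge_i p_i)=\wedge_i p_i$; thus the meet computed in $P$ already lies in $c(P)$ and is the meet there. Combining the two, $\mu$ carries finite meets of $P_\sim$ to finite meets computed in $P$, i.e.\ $\mu(\wedge_i C_i)=\wedge_i\mu(C_i)$.

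The equivalence then falls out formally. Writing $p_i=\mu(C_i)$, so that $\{p_i\}$ ranges over all finite families in $c(P)$ as $\{C_i\}$ ranges over $P_\sim$, I compute $g(\wedge_i C_i)=f(\mu(\wedge_i C_i))=f(\wedge_i p_i)$ and $\wedge_i g(C_i)=\wedge_i f(p_i)$, the latter meet taken in $Q=\hat{Q}$. Hence $g$ preserves finite meets precisely when $f(\wedge_i p_i)=\wedge_i f(p_i)$ for every finite $\{p_i\}\subseteq c(P)$, which is exactly (ii). The one genuinely load-bearing step is the second paragraph: identifying the meet structure of the quotient $P_\sim$ with meets taken in $P$ among maximal representatives. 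Once the order-isomorphism $\mu$ and the meet-closure of $c(P)$ are established, the reduction to Proposition~\ref{Pro:veilasmeets} is purely bookkeeping, and the finiteness of $P$ is used only to guarantee that every class has a maximum and that $P_\sim$ is a finite lattice.
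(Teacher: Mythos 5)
Your proof is correct and follows the same route as the paper's: establish $\hat{Q}=f(P)=Q$ from surjectivity, identify $P_\sim$ with $c(P)$ via the maximal-representative map so that $g$ becomes the restriction of $f$ to $c(P)$, and reduce to Proposition \ref{Pro:veilasmeets} in its finite-meet form. The paper states this identification in one line without verification; your second paragraph (the order-isomorphism $\mu$ and the closure of $c(P)$ under meets taken in $P$) simply supplies the details the paper leaves implicit.
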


\begin{proof}
  If $f$ is surjective, then $\hat{Q} = f(P) = Q$.  The rest follows from Proposition \ref{Pro:veilasmeets}. Indeed, $c(P)$ is isomorphic to $P_\sim$ and $g$ is the restriction of $f$ to $c(P)$.
\end{proof}
The finiteness condition can be alleviated through adequate technical care.

\subsection{All order-preserving maps can be lifted to veils.}

A factorization, as done in the previous section, need not always yield a veil.  We will often have a map that is not necessarily a veil, but would still like to interpret the situation as one exhibiting generative effects.  If we have a map that does not satisfy V.2, then some phenome will not have a minimum system that explains it.  It will have multiple minimal systems explaining it. However if we can treat the multitude of systems as one \emph{ambiguous} system, we can recover uniqueness.

In this section, we show that we can always lift an arbitrary order-preserving map to a veil between a lifted space of systems and a lifted space of phenomes. The relevant properties, namely whether or not it sustains generative effects, are preserved in the lift.  We do lose something by this \emph{completion}, as now the lifted space contains objects that we cannot necessarily interpret as systems.  That need not be a nuisance as interaction of the interpretable systems is preserved.  The generality of the lift can however restrict our ability to find tight structures for the situation.

\begin{Def}
 A filter (or upper set) $J$ of a preordered set $P$ is a subset of $P$ such that: if $p \leq p'$ and $p \in J$, then $p' \in J$.
\end{Def}
In particular:

\begin{Pro}
  If $P$ is finitely cocomplete, then: $J$ is a filter of  $P$ if, and only if, $p \vee J \subseteq J$ for all $p\in P$.
\end{Pro}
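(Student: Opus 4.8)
The plan is to prove both implications directly from the definitions, using that finite cocompleteness forces $\leq$ to be antisymmetric (Proposition \ref{pro:antisymmetric}), so that binary joins are genuine, uniquely determined elements rather than merely elements of an equivalence class. Throughout I read $p \vee J$ as $\{p \vee j : j \in J\}$.

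For the forward direction, I would assume $J$ is a filter and fix an arbitrary $p \in P$ together with an arbitrary $j \in J$. Since $p \vee j$ is by definition an upper bound of $\{p,j\}$, we have $j \leq p \vee j$. As $j \in J$ and $J$ is upward closed, upward closure immediately yields $p \vee j \in J$. Since $p$ and $j$ were arbitrary, this gives $p \vee J \subseteq J$ for every $p \in P$.

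For the converse, I would assume $p \vee J \subseteq J$ for all $p$ and verify upward closure. Suppose $j \in J$ and $j \leq p'$; I must show $p' \in J$. The key observation is that $j \leq p'$ forces $p' \vee j = p'$: the element $p'$ is an upper bound of $\{p',j\}$, and clearly the least such, so it \emph{is} the join. Applying the hypothesis with this particular $p'$ in the role of $p$ gives $p' \vee j \in J$, that is $p' \in J$. Hence $J$ is a filter.

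The only delicate point is the exact identity $p' \vee j = p'$ when $j \leq p'$: in a bare preorder the join is pinned down only up to the equivalence generated by $\leq$, so this equality need not hold on the nose. Invoking Proposition \ref{pro:antisymmetric} to upgrade $\leq$ to a partial order (justified precisely because $P$ is finitely cocomplete) is what turns this into an exact equality and lets the converse close. Everything else is a routine unwinding of the definition of join and of upward closure.
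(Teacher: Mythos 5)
Your proof is correct and follows essentially the same route as the paper: upward closure plus $j \leq p \vee j$ for the forward direction, and the identity $p' = p' \vee j$ when $j \leq p'$ for the converse. Your explicit appeal to Proposition \ref{pro:antisymmetric} to justify that this join equality holds on the nose is a point the paper leaves implicit, but it is the same argument.
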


\begin{proof}
  If $J$ is a filter of $P$, then for every $j \in J$, we have $j \leq p \vee j$ and thus $p \vee j \in J$.  Conversely, if $p \leq p'$ and $p \in J$, then $p' = p' \vee p \in p' \vee J \subseteq J$.
\end{proof}
We denote by $\J(P)$ the preordered set of filters of $P$. The set $(\J(P), \supseteq)$ ordered by reverse inclusion  is necessarily a lattice, that admits arbitrary, joins (through set-intersection $\cap$) and meets (through set-union $\cup$).  In particular, $\J(P)$ is a distributive lattice as $\cap$ and $\cup$ distribute over one another.

If $p \in P$, we define $\<p\>$ to be the filter generated by $p$. Namely: $\<p\> = \{ p\vee a : a \in P\}$. The element $p\in P$ is then represented by $\<p\>$ in $\J(P)$. Indeed:

\begin{Pro}\label{pro:commutesJoin}
  If $P$ is finitely cocomplete, then: for all $p, p' \in P$, we have $\<p \vee p'\> = \<p \> \cap \<p'\>$.
\end{Pro}

\begin{proof}
   If $a \in \<p \> \cap \<p'\>$, then $p \vee p' \leq a$, and so $a \in \<p \vee p'\>$.  Conversely, we trivially have $\<p \vee p'\> \subseteq \<p \> \cap \<p'\>$.
\end{proof}
Note that lifting to $\J(P)$ does not preserve meets. Meets however are not essential throughout the theory, they just happen to be a convenience.

Let $f : P \ra Q$ be an order-preserving map. If $I$ is a filter, then $f(I)$ does not have to be a filter. We define $\J(f): \J(P) \ra \J(Q)$ to be the map that sends a filter $I$ to the filter closure $\<f(I)\>$ of $f(I)$.

\begin{Pro}\label{Pro:Jfp}
  We have $\J(f)\<p\> = \< f(p)\>$ for all $p$.
\end{Pro}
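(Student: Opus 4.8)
The plan is to establish the two inclusions $\J(f)\<p\> \subseteq \<f(p)\>$ and $\<f(p)\> \subseteq \J(f)\<p\>$ separately. First I would unpack the definitions: by construction $\J(f)\<p\> = \<f(\<p\>)\>$ is the filter closure of the image set $f(\<p\>)$, where $\<p\> = \{p \vee a : a \in P\}$. The key preliminary observation is that $\<p\>$ is exactly the principal filter $\{q \in P : p \leq q\}$. Indeed, every element $p \vee a$ dominates $p$, and conversely any $q$ with $p \leq q$ satisfies $q = p \vee q \in \<p\>$ since the join of $p$ and $q$ is $q$ when $p \leq q$; this uses finite cocompleteness of $P$ so that the relevant joins exist. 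In particular, taking $a = p$ and using idempotency $p \vee p = p$ shows $p \in \<p\>$, a fact I will invoke for the reverse inclusion.

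For the forward inclusion, since $f$ is order-preserving and $p \leq p \vee a$ for every $a \in P$, I have $f(p) \leq f(p \vee a)$. Hence every element of $f(\<p\>)$ lies above $f(p)$, so $f(\<p\>) \subseteq \<f(p)\>$. Because $\<f(p)\>$ is itself a filter, and the filter closure $\<f(\<p\>)\>$ is by definition the smallest filter containing $f(\<p\>)$, it follows immediately that $\J(f)\<p\> = \<f(\<p\>)\> \subseteq \<f(p)\>$.

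For the reverse inclusion, I would use $p \in \<p\>$ to conclude $f(p) \in f(\<p\>) \subseteq \<f(\<p\>)\> = \J(f)\<p\>$. Since $\J(f)\<p\>$ is then a filter containing $f(p)$, and $\<f(p)\>$ is the smallest filter containing $f(p)$, I obtain $\<f(p)\> \subseteq \J(f)\<p\>$. Combining the two inclusions yields the claimed equality.

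I do not anticipate any genuine obstacle here; the statement is a routine unwinding of the relevant definitions. The only point requiring a little care is the identification of $\<p\>$ with the principal filter $\{q : p \leq q\}$, which relies on finite cocompleteness of $P$ together with idempotency of $\vee$. Once that identification is in place, both inclusions fall out directly from the order-preservation of $f$ and the universal (smallest-filter) property of the filter-closure operation $\<\,\cdot\,\>$.
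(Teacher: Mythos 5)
Your proof is correct and is essentially the paper's own argument, spelled out: the paper's one-line proof observes that $f\<p\>$ has $f(p)$ as a minimum element, which is exactly your two inclusions (order-preservation of $f$ gives $f(\<p\>) \subseteq \<f(p)\>$, and $p \in \<p\>$ gives $f(p) \in f(\<p\>)$) combined with the smallest-filter property of the closure. The extra care you take in identifying $\<p\> = \{p \vee a : a \in P\}$ with the principal filter $\{q : p \leq q\}$ (using finite cocompleteness and the resulting antisymmetry) is a legitimate unpacking of what the paper leaves implicit.
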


\begin{proof}
 The set $f\<p\>$ contains $f(p)$ as a minimum element.
\end{proof}
Furthermore:

\begin{Pro}
 The map $\J(f): \<\J(P),\supseteq\> \ra \<\J(Q),\supseteq\>$ is a veil.
\end{Pro}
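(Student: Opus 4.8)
The plan is to verify the two veil axioms directly for $\Phi = \J(f)$, regarded as a map from the domain $\langle \J(P), \supseteq\rangle$ to the phenome space $\langle \J(Q), \supseteq\rangle$. Throughout I would keep careful track of the fact that the order is \emph{reverse} inclusion, so that a minimum with respect to $\supseteq$ is the \emph{largest} filter with respect to ordinary set-inclusion, and joins are intersections. The codomain $\langle \J(Q), \supseteq\rangle$ is finitely cocomplete (indeed a complete lattice, as already noted above), so it is a legitimate space of phenomes.

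For V.1, I would argue that $\J(f)$ is order-preserving using monotonicity of the filter closure. If $I \leq I'$ in $\langle \J(P), \supseteq\rangle$, i.e.\ $I \supseteq I'$, then $f(I) \supseteq f(I')$ as images of nested sets, and hence $\langle f(I)\rangle \supseteq \langle f(I')\rangle$ since $A \mapsto \langle A\rangle$ is inclusion-monotone. This says exactly $\J(f)(I) \leq \J(f)(I')$, as required.

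The substance is V.2. Fix a phenome $K \in \J(Q)$; the candidate for its simplest explaining system is the preimage $f^{-1}(K) = \{p \in P : f(p) \in K\}$. I would first check that $f^{-1}(K)$ is a filter: if $p \in f^{-1}(K)$ and $p \leq p'$, then $f(p) \leq f(p')$ because $f$ is order-preserving, and $f(p') \in K$ because $K$ is up-closed, so $p' \in f^{-1}(K)$. The next step is the key simplification: since $K$ is a filter, $\langle f(I)\rangle \subseteq K$ is equivalent to $f(I) \subseteq K$, which makes the closure disappear. Translating, the condition $K \leq \J(f)(I)$ (i.e.\ $K \supseteq \langle f(I)\rangle$) is equivalent to $f(I) \subseteq K$, which in turn holds if and only if $I \subseteq f^{-1}(K)$. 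Thus every $I$ in the set $\{I : K \leq \J(f)(I)\}$ is contained in $f^{-1}(K)$, while $f^{-1}(K)$ itself lies in this set; hence $f^{-1}(K)$ is the largest such filter under inclusion, i.e.\ the minimum under $\supseteq$. Uniqueness is automatic since reverse inclusion is antisymmetric.

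The only genuine obstacle is bookkeeping: one must simultaneously handle the reverse-inclusion order on filters and the direction in which V.2 asks for a minimum, and correctly observe the equivalence $\langle f(I)\rangle \subseteq K$ if and only if $f(I)\subseteq K$ that lets the filter closure drop out. Once the candidate is recognized as the preimage filter $f^{-1}(K)$, every verification is routine.
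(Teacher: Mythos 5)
Your proof is correct and takes essentially the same approach as the paper: both verify V.2 by exhibiting the explicit minimum, with respect to $\supseteq$, of the set $\{I : K \supseteq \J(f)(I)\}$. The paper's candidate $\big\<\{p : K \supseteq \<f(p)\>\}\big\>$ coincides with your $f^{-1}(K)$, since $\{p : K \supseteq \<f(p)\>\} = f^{-1}(K)$ and, as you observe, this preimage is already a filter, making the outer filter closure redundant.
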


\begin{proof}
  The map $\J(f)$ is clearly order-preserving.  Furthermore, the set $\{I : J \supseteq \J(f)(I) \}$ has $\big\<\{p : J \supseteq \<fp\>\}\big\>$ as a \emph{minimum} element with respect to $\supseteq$. 
\end{proof}
Finally, the potential for generative effects is preserved by the lift.

\begin{Pro}
  If $P$ and $Q$ are finitely cocomplete, then:
  \begin{equation*}
    f(p \vee p') \neq f(p) \vee f(p') \quad \text{iff} \quad \J(f)(\<p\> \cap \<p'\>) \neq \J(f)(\<p\>) \cap \J(f)(\<p'\>).
  \end{equation*}
\end{Pro}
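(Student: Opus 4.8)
The plan is to reduce both sides of the claimed equivalence to a comparison of two \emph{principal} filters in $\J(Q)$, and then to observe that the assignment $a \mapsto \<a\>$ is injective. First I would rewrite the right-hand inequality using the two preceding propositions. Since $P$ is finitely cocomplete, Proposition \ref{pro:commutesJoin} gives $\<p\> \cap \<p'\> = \<p \vee p'\>$, and Proposition \ref{Pro:Jfp} then yields
\begin{equation*}
  \J(f)(\<p\> \cap \<p'\>) = \J(f)\<p \vee p'\> = \<f(p \vee p')\>.
\end{equation*}
For the other term, Proposition \ref{Pro:Jfp} gives $\J(f)\<p\> = \<f(p)\>$ and $\J(f)\<p'\> = \<f(p')\>$; and since $Q$ is also finitely cocomplete, applying Proposition \ref{pro:commutesJoin} this time inside $Q$ gives
\begin{equation*}
  \J(f)(\<p\>) \cap \J(f)(\<p'\>) = \<f(p)\> \cap \<f(p')\> = \<f(p) \vee f(p')\>.
\end{equation*}
Thus the right-hand inequality of the statement is exactly $\<f(p \vee p')\> \neq \<f(p) \vee f(p')\>$, where the $\cap$ appearing in the statement is, in each case, the join operation of the relevant filter lattice under reverse inclusion.

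The remaining task is then a single lemma: for $a, b \in Q$, we have $\<a\> = \<b\>$ if, and only if, $a = b$. Here I would unfold the definition $\<a\> = \{a \vee x : x \in Q\}$, which (as $Q$ is a join-semilattice with minimum) coincides with the principal filter $\{q : a \leq q\}$. If $\<a\> = \<b\>$, then $a \in \<b\>$ and $b \in \<a\>$ force $b \leq a$ and $a \leq b$; because $Q$ is finitely cocomplete, its order is antisymmetric by Proposition \ref{pro:antisymmetric}, so $a = b$. The converse is immediate.

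Combining the lemma with the rewriting, $\<f(p \vee p')\> \neq \<f(p) \vee f(p')\>$ holds precisely when $f(p \vee p') \neq f(p) \vee f(p')$, which is the left-hand inequality, completing the equivalence. The only genuinely substantive step — everything else being an invocation of Propositions \ref{pro:commutesJoin} and \ref{Pro:Jfp} — is the injectivity of $a \mapsto \<a\>$, and this is exactly where the finite cocompleteness of $Q$ is used, through the antisymmetry it guarantees. I expect this injectivity (and making sure both uses of Proposition \ref{pro:commutesJoin}, one in $P$ and one in $Q$, are licensed by the respective cocompleteness hypotheses) to be the point requiring care, while the algebraic bookkeeping is routine.
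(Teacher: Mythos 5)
Your proof is correct and takes essentially the same route as the paper's: both reduce the right-hand inequality to $\<f(p\vee p')\> \neq \<f(p)\vee f(p')\>$ by applying Proposition \ref{pro:commutesJoin} in $P$ and in $Q$ together with Proposition \ref{Pro:Jfp}, and then conclude via injectivity of $a \mapsto \<a\>$. The only difference is that the paper leaves this injectivity implicit in its opening sentence, whereas you prove it explicitly from antisymmetry (Proposition \ref{pro:antisymmetric}), which is a reasonable bit of added care rather than a new idea.
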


\begin{proof}
  We have $f(p \vee p') \neq f(p) \vee f(p')$ if, and only if, $\<f(p \vee p')\> \neq \<f(p) \vee f(p')\>$. By Proposition \ref{pro:commutesJoin}, we get $\<f(p) \vee f(p')\> = \<f(p)\> \cap \<f(p')\>$ and $\<p \vee p'\> = \<p \> \cap \<p'\>$. The rest then follows by Proposition \ref{Pro:Jfp}.
\end{proof}
The $\J$ operator however disregards any information on whether or not $f$ already satisfies V.2.  For instance, if $f$ already satisfies V.2 we should expect that we ought not need a lot of elements to add in the lifted space while completing the space, if anything at all. Such  \emph{controlled} lifts can be achieved via the use of Grothendieck topologies on posets.  Such a direction will not be pursued in this paper.  It is however an important direction to develop: it provides a solid path towards toposes and the theory of sheaves (see e.g. \cite{SGA4}).

\subsubsection{Remark.} Notice that if the spaces $P$ and $Q$ were not cocomplete, the lift will \emph{create} joins in the lattice of filters.

\section{Concluding remarks.}
The development of generative effects has been carried out in the restrictive case of preordered sets.  We can achieve greater generality, and higher expressiveness, by having our spaces of systems and phenomes be categories.  We refer the reader to \cite{ADAM:Dissertation} Ch 8 for the details.

In the general level, the presence of generative effects can be linked to a loss of exactness.  This loss can be recovered via methods in homological algebra. We can extract algebraic objects that encode a system's potential to produce generative effects.  We can then use those objects to characterize the phenomenon, and link the behavior of the interconnected system to that of its separate systems.  This direction is carried out in \cite{ADAM:Dissertation}.

  \bibliographystyle{alpha}
  \bibliography{MyBiblio}

\end{document}